\documentclass[11pt]{article}
\usepackage[a4paper,margin=1in]{geometry}
\usepackage{amsmath,amssymb,amsthm,amsfonts,bm,bbm,mathtools}
\usepackage{booktabs}
\usepackage{multirow}
\usepackage{array}
\usepackage{enumitem}
\usepackage{mathrsfs}
\usepackage{fullpage}
\usepackage{hyperref}
\hypersetup{hidelinks}
\usepackage{setspace}
\usepackage{natbib}
\newtheorem{theorem}{Theorem}[section]
\newtheorem{lemma}{Lemma}[section]
\newtheorem{corollary}{Corollary}[section]
\newtheorem{proposition}{Proposition}[section]
\newtheorem{remark}{Remark}[section]
\newtheorem{assumption}{Assumption}[section]

\newcommand{\R}{\mathbb{R}}
\newcommand{\E}{\mathbb{E}}
\newcommand{\var}{\mathrm{Var}}
\newcommand{\Cov}{\mathrm{Cov}}
\newcommand{\Corr}{\mathrm{Corr}}
\newcommand{\Prb}{\mathbb{P}}
\newcommand{\ind}{\mathbbm{1}}
\newcommand{\T}{\mathsf{T}}
\newcommand{\tr}{\mathrm{tr}}
\newcommand{\diag}{\mathrm{diag}}

\newcommand{\norm}[1]{\left\lVert #1\right\rVert}
\newcommand{\abs}[1]{\left\lvert #1\right\rvert}

\newcommand{\dto}{\stackrel{d}{\to}}
\newcommand{\pto}{\stackrel{p}{\to}}

\newcommand{\cir}{\circ}

\newcommand{\de}{\delta}

\def\mR{\mathbb{R}}

\def\A{\mathbf{A}}
\def\B{\mathbf{B}}

\def\D{\mathbf{D}}
\def\F{\mathbf{F}}

\def\I{\mathbf{I}}

\def\Q{\mathbf{Q}}
\def\R{\mathbf{R}}

\def\W{\mathbf{W}}
\def\X{\mathbf{X}}
\def\Y{\mathbf{Y}}

\def\b{{\bm b}}

\def\f{{\bm f}}
\def\g{{\bm g}}

\def\s{{\bm s}}
\def\u{{\bm u}}

\def\x{{\bm x}}
\def\y{{\bm y}}
\def\z{{\bm z}}

\def\bmS{{\bm \Sigma}}

\def\bD{{\bm \Delta}}

\def\bmb{{\bm \beta}}

\def\bml{{\bm \lambda}}

\def\bme{{\bm \varepsilon}}
\def\bal{{\bm \alpha}}

\def\bet{{\bm \eta}}
\def\bzero{{\bm 0}}
\def\bone{{\bm 1}}

\allowdisplaybreaks[4]

\title{\bfseries High dimensional alpha test for linear factor pricing model with $L_q$-norm}
\author{Ping Zhao$^1$, Huifang Ma$^2$ and Long Feng$^2$\\
$^1$School of Mathematical Scicence, Tiangong University\\$^2$School of Statistics and Data Science, LEBPS, KLMDASR,\\
 AAIS and LPMC, Nankai University}
\date{}

\begin{document}
\maketitle

\begin{abstract}
We consider testing zero pricing errors in high-dimensional linear factor pricing models. Existing methods are mainly based on either an $L_2$ statistic, which is effective under dense alternatives, or an $L_\infty$ statistic, which is powerful under very sparse alternatives. To bridge these two regimes, we develop a class of $L_q$-based tests for finite $q$, including the practically useful $L_4$ and $L_6$ cases. We show that larger $q$ leads to greater sensitivity to sparse alternatives. We further establish the asymptotic independence between the $L_\infty$ statistic and the $L_q$ statistic for any finite $q$, which motivates a Cauchy combination test that adapts to a broad range of sparsity levels. Simulation studies and a real-data analysis show that the proposed methods are more robust to the unknown sparsity of the alternative and can outperform existing procedures in finite samples.
\end{abstract}

\noindent\textbf{Keywords:} alpha testing; Cauchy combination; high-dimensional factor pricing model; independent component model; maximum-type test.

\section{Introduction}
We consider testing for the presence of alpha in linear factor pricing models with observed factors when the number of securities, \(N\), is large relative to the time dimension, \(T\). This setup covers prominent examples such as the capital asset pricing model and the arbitrage pricing theory; see \citet{Sharpe1964,Lintner1965,Ross1976}.
Let
\[
Y_{it} = \alpha_i + \bmb_i^{\T} \f_t + \varepsilon_{it},
\qquad i=1,\ldots,N,\quad t=1,\ldots,T,
\]
denote a linear factor pricing model, where \(Y_{it}\) is the excess return of asset \(i\), \(\f_t\) is a \(p\)-dimensional vector of observed risk factors, and \(\alpha_i\) is the intercept. The null hypothesis of correct model specification is
\[
H_0:\ \bal=\bzero,
\qquad
\bal=(\alpha_1,\ldots,\alpha_N)^{\T} .
\]
Statistically, this is a global hypothesis on a high-dimensional intercept vector. Economically, it asks whether the candidate factors fully explain expected excess returns, so that no asset earns abnormal returns after controlling for factor exposures. Rejection of \(H_0\) indicates model misspecification and suggests that the proposed factors fail to span the relevant sources of systematic risk.

In low-dimensional settings, this problem has a long history. Early time-series approaches test each intercept separately using individual \(t\)-statistics \citep{Jensen1968}, while the classical Gibbons--Ross--Shanken test provides an exact multivariate test of mean--variance efficiency under Gaussian errors when the number of assets is smaller than the time dimension \citep{GibbonsRossShanken1989}. Subsequent work extended this line in various directions, including procedures that allow for non-Gaussian disturbances or aim at distribution-free validity; see, for example, \citet{BeaulieuDufourKhalaf2007} and \citet{GungorLuger2009}. These methods are elegant and effective when the cross-sectional dimension is moderate, but they become much less useful once \(N\) is comparable to or much larger than \(T\), because covariance estimation and, in particular, covariance inversion become unstable or even infeasible.  

This dimensional imbalance is now the norm rather than the exception in empirical finance. Modern datasets routinely contain hundreds or thousands of securities, whereas the time span is often limited by structural instability, factor nonstationarity, and model drift. As a result, classical Wald- and GRS-type procedures can no longer be directly applied. This challenge has stimulated a growing literature on high-dimensional alpha testing. \citet{LanFengLuo2018} developed one of the first formal tests for linear factor pricing models with many assets. \citet{MaLanSuTsai2020} extended the analysis to conditional time-varying factor models with high-dimensional cross sections. Focusing on sparse alternatives, \citet{FengLanLiuMa2022} proposed a max-type procedure that is particularly effective when only a small fraction of assets exhibit nonzero pricing errors. From a different perspective, \citet{PesaranYamagata2024} advocated a test built from studentized individual alpha statistics, thereby avoiding the need to estimate an invertible error covariance matrix and allowing for weak cross-sectional dependence and non-Gaussianity. \citet{liu2023high} and \citet{zhao2022high} developed robust nonparametric procedures based on spatial signs, with the former emphasizing heavy-tailed elliptical models and the latter studying a broader class of weighted tests. To improve adaptivity across alternative configurations, \citet{yu2024power} combined Wald- and maximum-type statistics through Fisher's method, whereas \citet{xia2024adaptive} designed a simulation-based adaptive test particularly effective under sparse and weak signals. In a more classical mean--variance framework, \citet{chernov2025test} extended the GRS test to high dimensions using random-matrix bias correction for regularized covariance estimation. Most recently, \citet{massacci2025general} proposed a covariance-free randomized testing approach that is flexible enough to allow heteroskedastic, non-Gaussian, and strongly cross-sectionally dependent errors. Taken together, these papers have significantly advanced the econometric analysis of large cross-sectional asset pricing models. 

Despite this progress, the existing literature still relies predominantly on two benchmark geometries. Sum-of-squares or quadratic statistics are naturally linked to the \(L_2\) norm and are most effective under dense alternatives, where many coordinates of \(\bal\) are small but collectively non-negligible. At the other extreme, max-type statistics are essentially \(L_\infty\)-based and are most effective under very sparse alternatives, where only a few coordinates are large. In practice, however, the sparsity level of pricing errors is rarely known in advance. Empirical departures from a benchmark factor model may be dense, moderately sparse, or highly sparse, and a procedure tailored to only one regime can suffer substantial power loss outside its preferred alternative class.

A similar issue has been extensively studied in the broader literature on high-dimensional mean testing. The adaptive two-sample test of \citet{XuLinWeiPan2016} showed that combining statistics targeting different sparsity regimes can yield more robust power. \citet{HeXuWuPan2021} developed a family of \(U\)-statistics associated with different finite-order norms and established both asymptotic normality and asymptotic independence across orders, which opened the door to adaptive combination methods. More recently, \citet{ZhangWangShao2022} exploited the \(L_q\)-norm viewpoint in high-dimensional change-point inference, and \citet{ZhangWangShao2025} proposed a general framework of \(L_q\)-norm-based \(U\)-statistics for global testing problems in high dimensions. These developments suggest that the \(L_q\)-norm perspective provides a natural continuum between dense and sparse testing regimes, but such a perspective has not yet been systematically developed for alpha testing in high-dimensional linear factor pricing models.

The goal of this article is to fill this gap. We study a class of \(L_q\)-based tests for alpha in high-dimensional linear factor pricing models, where \(q\) is a fixed even integer. In particular, besides the conventional \(L_2\)-type statistic and the limiting \(L_\infty\)-type benchmark, we develop practically useful finite-\(q\) procedures such as the \(L_4\) and \(L_6\) tests. These statistics interpolate between the dense-optimal \(L_2\) direction and the sparse-sensitive \(L_\infty\) direction. Our theory shows that, as \(q\) increases, the procedure becomes increasingly sensitive to sparse alternatives. Thus, finite-order norms provide a principled and interpretable continuum of tests indexed by sparsity sensitivity.

Since the sparsity of the alternative is unknown in practice, no single \(q\) can be uniformly preferred ex ante. This motivates an adaptive combination strategy. To this end, we first establish the asymptotic independence between the \(L_\infty\) statistic and the \(L_q\) statistic for any finite even \(q\). This result is of independent interest and extends the norm-based adaptivity idea to the high-dimensional alpha-testing problem. Building on this independence, we propose a Cauchy combination test that aggregates evidence across different norms and is therefore robust to a wide range of alternative configurations. The resulting procedure is easy to implement, avoids repeated resampling for joint calibration, and inherits the complementary strengths of the constituent tests. Our numerical studies show that the combined test is stable across sparse, moderately sparse, and dense alternatives. For practitioners who prefer a single simple procedure, the \(L_4\) test emerges as a particularly attractive compromise between robustness and simplicity.

The contributions of this paper are threefold. First, we develop a general \(L_q\)-norm framework for testing alphas in high-dimensional linear factor pricing models and establish the asymptotic normality of the proposed statistics under suitable regularity conditions. We also characterize how the power changes with \(q\), thereby clarifying the role of the norm order in adapting to different sparsity regimes. Second, we prove the asymptotic independence between the \(L_\infty\) statistic and the finite-\(q\) statistics, which provides the theoretical foundation for combining tests across norms. Third, we propose a Cauchy combination procedure that is robust to unknown sparsity and performs well in finite samples. Simulation studies show that the combined method is competitive across a broad spectrum of alternatives, while an empirical application further illustrates its practical value in real asset pricing data.

The remainder of the paper is organized as follows. Section~2 presents the proposed methods, including the \(L_q\)-based test statistics, the asymptotic independence results, and the Cauchy combination procedure. Section~3 contains the simulation studies. Section~4 provides the empirical application. Section~5 concludes. The technical proofs are collected in the Appendix.

\section{Methodology}
\label{sec:method}
\subsection{Single $L_q$-norm Test}
Let \(\F=(\f_1,\dots,\f_T)^\T\in\mR^{T\times p}\), $\bone_T$ be the $T\times 1$ vector of ones, $\I_T$ be the $T\times T$ identity matrix, and define the residual-maker matrix
\[
{\rm M}_{\F}=\I_T-\F(\F^\T \F)^{-1}\F^\T,
\qquad
w_T=\bone_T^\T {\rm M}_{\F}\bone_T,
\qquad
v=T-p-1.
\]
For each asset \(i\), write \(\Y_{i.}=(Y_{i1},\ldots,Y_{iT})^\T\). The ordinary least squares estimator of \(\alpha_i\) and its usual \(t\)-statistic are
\[
\widehat\alpha_i=\frac{\bone_T^\T {\rm M}_{\F}\Y_{i.}}{w_T},
\qquad
\widehat\sigma_i^2=\frac{\widehat{\bme}_{i.}^\T\widehat{\bme}_{i.}}{v},
\qquad
t_i=\frac{\sqrt{w_T}\,\widehat\alpha_i}{\widehat\sigma_i},
\]
where \(\widehat{\bme}_{i.}={\rm M}_{\F}(\Y_{i.}-\bone_T\widehat\alpha_i)\).

The direct even-power statistics are defined by
\begin{equation}
Q_{a,N}
=
\frac{1}{\sqrt N}\sum_{i=1}^N \Bigl(t_i^a-\mu_{a,v}\Bigr),
\qquad a\in\{2,4,6\},
\label{eq:Q-def}
\end{equation}
where \(\mu_{a,v}=E(U^a)\) and \(U\sim t_v\). 
Explicitly, \(\mu_{2,v}=v(v-2)^{-1}\), \(\mu_{4,v}=3v^2(v-2)^{-1}(v-4)^{-1}\) and \(\mu_{6,v}=15v^3(v-2)^{-1}(v-4)^{-1}(v-6)^{-1}\).
We refer to the tests based on \(Q_{2,N}\), \(Q_{4,N}\), and \(Q_{6,N}\) as the \(L_2\), \(L_4\), and \(L_6\) tests, respectively.

If, in addition to $H_0$, we impose exact Gaussianity,
\[
\bme_t=(\varepsilon_{1t},\ldots,\varepsilon_{Nt})^{\T}\stackrel{\mathrm{i.i.d.}}{\sim} \mathcal{N}(\bzero,\bmS),
\]
then each individual statistic $t_i$ is exactly $t_v$ distributed. Hence, for $a\in\{2,4,6\}$,
\[
E(Q_{a,N})=\frac{1}{\sqrt N}\sum_{i=1}^N \{E(t_i^a)-\mu_{a,v}\}=0.
\]
Moreover, provided $v>12$, the null variance of $Q_{a,N}$ admits an exact finite-$v$
expression. Write $a=2m$ with $m\in\{1,2,3\}$, let  $r_{ij}=\Corr(\varepsilon_{it},\varepsilon_{jt})$, and define
\[
\Lambda_{m,v}(u)
=
\left(\frac{v}{2}\right)^{2m}
\frac{\Gamma(v/2-m)^2}{\Gamma(v/2)^2}
\,
{}_2F_1\!\left(m,m;\frac{v}{2};u\right),
\]
where ${}_2F_1$ denotes the Gauss hypergeometric function. Then
\[
\var(Q_{2m,N})
=
\frac{1}{N}\sum_{i,j=1}^N
\Bigl\{
P_m(r_{ij})\,\Lambda_{m,v}(r_{ij}^2)-\mu_{2m,v}^2
\Bigr\},
\]
with
\[
P_1(\rho)=1+2\rho^2,\qquad
P_2(\rho)=9+72\rho^2+24\rho^4,\qquad
P_3(\rho)=225+4050\rho^2+5400\rho^4+720\rho^6.
\]
In particular,
\begin{align*}
\var(Q_{2,N})
&=
\frac{1}{N}\sum_{i,j=1}^N
\left\{
2r_{ij}^2+\frac{10r_{ij}^2+4r_{ij}^4}{v}
\right\}
+O(v^{-2}),\\
\var(Q_{4,N})
&=
\frac{1}{N}\sum_{i,j=1}^N
\left\{
72r_{ij}^2+24r_{ij}^4
+\frac{936r_{ij}^2+864r_{ij}^4+192r_{ij}^6}{v}
\right\}
+O(v^{-2}),\\
\var(Q_{6,N})
&=
\frac{1}{N}\sum_{i,j=1}^N
\left\{
4050r_{ij}^2+5400r_{ij}^4+720r_{ij}^6\right\}\\
&+\frac{1}{N}\sum_{i,j=1}^N\left\{\frac{101250r_{ij}^2+202500r_{ij}^4+114480r_{ij}^6+12960r_{ij}^8}{v}
\right\}
+O(v^{-2}).
\end{align*}
Therefore, these formulas suggest a simple first-order finite-sample correction for the
subsequent covariance normalization. In particular, in the benchmark matrix \(\B_N\) defined
below we incorporate the displayed \(v^{-1}\) terms into the diagonal entries and suppress the
remaining \(O(v^{-2})\) remainders. Since \(p\) is fixed and \(v=T-p-1\asymp T\to\infty\), these
omitted terms are asymptotically negligible under our growth conditions. 

For technical purposes, it is convenient to introduce the oracle counterpart
\begin{equation}
x_i=\frac{\sqrt{w_T}\,\widehat\alpha_i}{\sigma_i},
\qquad
Q_{a,N}^{\cir}
=
\frac{1}{\sqrt N}\sum_{i=1}^N \Bigl(x_i^a-\mu_a\Bigr),
\label{eq:oracle-def}
\end{equation}
where \(\sigma_i^2=\var(\varepsilon_{it})\), \(\mu_2=1\), \(\mu_4=3\), and \(\mu_6=15\). The feasible statistic \(Q_{a,N}\) differs from \(Q_{a,N}^{\cir}\) because of both studentization and the use of \(t_v\) rather than normal centering.

We need the following assumptions to establish the asymptotic normality of the $L_q$-norm based test statistics.
\begin{assumption}[Factor design]
\label{ass:factor}
The factor dimension \(p\) is fixed. (i) $\left\{\f_1, \ldots, \f_T\right\}$ is strictly stationary and independent of $\left\{\bme_1, \ldots, \bme_T\right\}$. (ii)There exists $c_1>0$, such that $\max _{1 \leq i \leq N}\left\|\bmb_i\right\|< c_1$. There exist $a_1, b_1>0$, such that $\max _{1 \leq k \leq p} \Prb\left(\left|f_{t k}\right|>s\right) \leq \exp \left\{-\left(s / b_1\right)^{a_1}\right\}$ for each $s>0$. There exist $a_2, c_2>0$, such that $a_1^{-1}+a_2^{-1}>0.5$ and for any $T \in \mathcal{Z}^{+}$, the $\alpha$-mixing coefficient $\alpha_{\f}(T) \leq \exp \left(-c_2 T^{a_2}\right)$. (iii) The matrix \(\F^\T \F\) is invertible with probability tending to one, and there exist positive constants \(c_F\) and \(C_F\) such that
\[
c_F \le \lambda_{\min}(T^{-1}\F^\T \F)\le \lambda_{\max}(T^{-1}\F^\T \F)\le C_F
\]
with probability tending to one. In addition,
\[
\sum_{t=1}^T b_t^2=1,
\qquad
\max_{1\le t\le T}|b_t|\le C_b T^{-1/2},
\]
where \(\b=(b_1,\ldots,b_T)^{\T}:={\rm M}_{\F}\bone_T/\sqrt{w_T}\). 
\end{assumption}

\begin{assumption}[Independent component model]
\label{ass:ICM}
The error vectors $\bme_t$ are independent and identically distributed and  satisfy the independent component model (ICM)
\[\bme_t=\bmS^{1/2}\z_t.\]
Here $\z_t=(z_{1t},\cdots,z_{Nt})^{\T}$ and the random variables \(\{z_{it}\}\) are i.i.d.\ over \(i\) and \(t\), with \(E(z_{it})=0\), \(E(z_{it}^2)=1\) 
and \(E(e^{\eta z_{it}^2})<\infty\) for some \(\eta>0\). The marginal variances satisfy
\[
0<c_\sigma\le \sigma_i^2\le C_\sigma<\infty,\qquad i=1,\dots,N.
\]
\end{assumption}

\begin{assumption}[Weak cross-sectional dependence]
\label{ass:dep}
Let \(\R=(r_{ij})_{1\le i,j\le N}=\D^{-1/2}\bmS \D^{-1/2}\), where \(\D=\diag(\sigma_1^2,\dots,\sigma_N^2)\). There exists \(q\in[0,1)\) and a sequence \(s_N\ge1\) such that
\[
\max_{1\le i\le N}\sum_{j=1}^N |r_{ij}|^q \le s_N,~~\text{and}~~
\frac{s_N^6(\log N)^{12}}{\sqrt N}\to 0.
\]
Moreover, for \(k=1,2,3,4\),
\[
\kappa_{2k,N}:=\frac1N\sum_{i,j=1}^N r_{ij}^{2k}
\]
converges to a finite positive limit \(\kappa_{2k}\).
\end{assumption}

\begin{assumption}[High-dimensional growth]
\label{ass:growth}
As \(N,T\to\infty\),
$\frac{\sqrt N(\log N)^6}{T}\to 0$.
\end{assumption}

\begin{remark}
Assumption~\ref{ass:factor} is standard in linear factor pricing models; see, for example, \cite{FengLanLiuMa2022,MaFengWangBao2024Dep}. It imposes a fixed-dimensional factor structure, requires the factors to be strictly stationary and independent of the idiosyncratic errors, and assumes mild tail and mixing conditions on the factor process. The boundedness of the loadings and the eigenvalue condition on \(T^{-1}\F^\T\F\) ensure that the factor component is well behaved and can be partialled out in a stable manner.

Assumption~\ref{ass:ICM} adopts the independent component model for the idiosyncratic errors, which is widely used in high-dimensional inference; see \cite{ZhengJiangBaiHe2014,LiLamYaoYao2019,CuiLiYangZhou2020}. In particular, \(\bme_t=\bmS^{1/2}\z_t\) allows for a general cross-sectional covariance structure through \(\bmS\), while the latent components in \(\z_t\) are i.i.d. with standardized moments and exponential-type tails. This framework is considerably weaker than joint Gaussianity and is flexible enough to accommodate non-Gaussian high-dimensional data.

Assumption~\ref{ass:dep} requires the cross-sectional dependence to be weak in the sense that the correlation matrix is only approximately sparse. Such a condition is natural because our covariance estimation relies on thresholding, for which approximate sparsity is a standard structural assumption; see \cite{BickelLevina2008}. The bound
\(
\max_{1\le i\le N}\sum_{j=1}^N |r_{ij}|^q \le s_N
\)
controls the overall strength of cross-sectional dependence, while the rate condition on \(s_N\) guarantees that the estimation error from thresholding is asymptotically negligible. The additional limits on \(\kappa_{2k,N}\) further ensure that the aggregate second- and higher-order dependence measures remain stable as \(N\to\infty\).

Assumption~\ref{ass:growth} allows \(N\) to exceed \(T\). For example, it permits \(N\asymp T^{2-\epsilon}\) up to logarithmic factors. This is the main reason why the sharper feasible-oracle rate is essential. Because \(p\) is fixed, we also have \(v=T-p-1\asymp T\), so the omitted \(O(v^{-2})\) remainder terms in the Gaussian variance expansions are uniformly \(o(1)\).

\end{remark}

Define \(\Q_N=(Q_{2,N},Q_{4,N},Q_{6,N})^\T\). Its asymptotic covariance matrix is
\begin{equation}
\B_N=
\begin{pmatrix}
B_{22,N} & B_{24,N} & B_{26,N}\\
B_{24,N} & B_{44,N} & B_{46,N}\\
B_{26,N} & B_{46,N} & B_{66,N}
\end{pmatrix},
\label{eq:B-def}
\end{equation}
where
\begin{align*}
B_{22,N}
&=
\frac{1}{N}\sum_{i,j=1}^N
\left\{
2r_{ij}^2+\frac{10r_{ij}^2+4r_{ij}^4}{v}
\right\},~B_{24,N}
=
\frac{12}{N}\sum_{i,j=1}^N r_{ij}^2,~B_{26,N}
=
\frac{90}{N}\sum_{i,j=1}^N r_{ij}^2,\\
B_{44,N}
&=
\frac{1}{N}\sum_{i,j=1}^N
\left\{
72r_{ij}^2+24r_{ij}^4
+\frac{936r_{ij}^2+864r_{ij}^4+192r_{ij}^6}{v}
\right\},~B_{46,N}
=
\frac{1}{N}\sum_{i,j=1}^N \bigl(540r_{ij}^2+360r_{ij}^4\bigr),\\
B_{66,N}
&=
\frac{1}{N}\sum_{i,j=1}^N
\left\{
4050r_{ij}^2+5400r_{ij}^4+720r_{ij}^6
+\frac{101250r_{ij}^2+202500r_{ij}^4+114480r_{ij}^6+12960r_{ij}^8}{v}
\right\}.
\end{align*}
Thus the diagonal entries incorporate the first-order finite-\(v\) Gaussian correction from Section~2.1, while the off-diagonal entries retain their leading Gaussian forms. The omitted \(O(v^{-2})\) terms in the diagonal Gaussian variances, together with the non-Gaussian replacement error under the ICM, are asymptotically negligible because \(v\asymp T\to\infty\).

\begin{theorem}
\label{thm:null-joint}
Suppose Assumptions~\ref{ass:factor}--\ref{ass:growth} hold and \(H_0\) is true. Then
\[
\B_N^{-1/2}\Q_N \dto \mathcal{N}(\bzero,\I_3).
\]
Equivalently,
\[
\Q_N \dto \mathcal{N}(\bzero,\B),
\]
whenever \(\B_N\to \B\) for some positive definite matrix \(\B\).
\end{theorem}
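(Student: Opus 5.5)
Under $H_0$ we have $\widehat\alpha_i=\bone_T^\T{\rm M}_{\F}\bme_{i.}/w_T$, so $x_i=\sigma_i^{-1}\sum_{t}b_t\varepsilon_{it}$. Conditionally on $\F$, the vector $\x=(x_1,\ldots,x_N)^\T$ is therefore a weighted sum of the i.i.d.\ vectors $\D^{-1/2}\bme_t=\D^{-1/2}\bmS^{1/2}\z_t$, with weights $b_t$ satisfying $\sum_t b_t^2=1$ and $\max_t|b_t|\le C_bT^{-1/2}$ (Assumption~\ref{ass:factor}). The plan has three steps.
\begin{enumerate}[label=(\roman*)]
\item Prove the joint CLT for the oracle vector $\Q_N^{\cir}=(Q^{\cir}_{2,N},Q^{\cir}_{4,N},Q^{\cir}_{6,N})^\T$ with limiting covariance given by the leading Gaussian parts of $\B_N$.
\item Show that $\Q_N-\Q_N^{\cir}$ reduces to a deterministic term matching the $v^{-1}$ corrections plus $o_p(1)$.
\item Combine via Slutsky and the Cram\'er--Wold device.
\end{enumerate}
All arguments are conditional on $\F$ on the high-probability event of Assumption~\ref{ass:factor}(iii), and then unconditioned.

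\textbf{Step (i).} First replace $\x$ by a Gaussian vector $\g\sim\mathcal N(\bzero,\R)$. This uses a Lindeberg replacement over $t$: swap one $\z_t$ at a time for a standard Gaussian vector. Each coordinate of $\x$ is a sum of $T$ terms of size $O(T^{-1/2})$, and the first two moments match. Applying a third-order Taylor expansion to a smooth function of $\sum_i h_a(x_i)/\sqrt N$, with $h_a(x)=x^a-\mu_a$, the replacement error is of order $T\cdot T^{-3/2}$ times polynomial moments. I need it summed to $o(1)$ after the $N^{-1/2}$ normalisation. A cleaner route is to compare cumulants directly: the joint cumulants of $(x_i,x_j)$ differ from the Gaussian ones by $O(T^{-1}\kappa_4\sum_t b_t^4)=O(T^{-1})$, because the ICM with i.i.d.\ $z$ makes these cumulants explicit.

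For the Gaussian vector, expand $h_a$ in Hermite polynomials:
\begin{itemize}
\item $x^2-1=H_2$;
\item $x^4-3=H_4+6H_2$;
\item $x^6-15=H_6+15H_4+45H_2$.
\end{itemize}
Mehler's formula $E\{H_k(g_i)H_l(g_j)\}=k!\,r_{ij}^k\,\mathbf 1\{k=l\}$ then yields exactly the leading entries of $\B_N$. For example, $\var=\frac1N\sum_{ij}(24r^4+72r^2)$ for $a=4$, and $\Cov(Q_2,Q_4)=12\frac1N\sum_{ij}r^2$. Asymptotic normality of $\sum_i h_a(g_i)/\sqrt N$ then follows from a moment or cumulant method under weak dependence (Assumption~\ref{ass:dep}). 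Joint cumulants of order $k\ge3$ are sums over connected graphs on $N$ indices weighted by products of $r_{ij}$. Using $\max_i\sum_j|r_{ij}|^q\le s_N$ together with $|r_{ij}|\le1$, hence $\sum_j|r_{ij}|^m\le s_N$ for $m\ge1$, each connected graph on $k$ vertices contributes $O(N s_N^{k-1})$. After scaling, this gives $O(N^{1-k/2}s_N^{k-1})\to0$, where the rate $s_N^6/\sqrt N\to0$ suffices for the low orders used. Alternatively, I would try a dependency-graph/Stein argument after truncating small $|r_{ij}|$.

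\textbf{Step (ii)} is where I expect the main obstacle. Write $t_i=x_i\,\sigma_i/\widehat\sigma_i$ and $\widehat\sigma_i^2/\sigma_i^2=1+\delta_i$, where $\delta_i=O_p(T^{-1/2})$ comes from a quadratic form in $\z$. Expanding gives
\[
t_i^a=x_i^a\bigl(1-\tfrac a2\delta_i+\tfrac{a(a+2)}8\delta_i^2-\cdots\bigr).
\]
The naive bound $\sqrt N\,T^{-1/2}$ fails, so cancellation is essential. Under Gaussianity $x_i$ is independent of $\widehat\sigma_i$ because ${\rm M}_{\F}\bone_T$ is orthogonal to the residual space. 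Hence $E(t_i^a)=\mu_{a,v}$ exactly, and the centred term $\frac1{\sqrt N}\sum_i x_i^a\delta_i$ has mean zero. Its variance is $\frac1N\sum_{ij}E(x_i^ax_j^a)\Cov(\delta_i,\delta_j)=O(T^{-1}s_N)$, which is negligible. The second-order terms, however, contribute precisely the $v^{-1}$ corrections in $B_{aa,N}$. To capture them I would write $t_i^a-\mu_{a,v}=(x_i^a-\mu_a)+R_i$ and compute $\var(\sum_iR_i)/N$ and $\Cov$ exactly through the hypergeometric formula stated in Section~2.1. The remainder of order $\delta_i^3$ is $O_p(\sqrt N T^{-3/2})$ and therefore negligible.

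For non-Gaussian ICM errors, the bias $\sqrt N\{E(t_i^a)-\mu_{a,v}\}$ is the delicate term, since $x_i$ and $\delta_i$ are then no longer independent. The third and fourth cumulants of $z$ enter through terms such as $E(x_i^a\delta_i)$, which are $O(\sum_t b_t^4)=O(T^{-1})$. They are therefore $O(\sqrt N/T)\to0$ by Assumption~\ref{ass:growth}, and the log factors absorb the uniformity over $i$ obtained via the exponential moment condition and a union bound. Finally, $\B_N$ has eigenvalues bounded away from $0$ because the $\kappa_{2k,N}$ have positive limits. Slutsky's lemma and Cram\'er--Wold then give $\B_N^{-1/2}\Q_N\dto\mathcal N(\bzero,\I_3)$.
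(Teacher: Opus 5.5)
Your architecture matches the paper's. You Taylor-expand $X_i^{-a/2}$ to reduce $\Q_N$ to $\Q_N^{\cir}$, compare $\x=\A\s$ with a Gaussian vector, use Hermite polynomials and Mehler's formula to get $\B_N^{(0)}$, kill cumulants of order $m\ge3$ with connected-diagram bounds, and finish with Cram\'er--Wold. For Gaussian errors your argument is essentially complete. Your bound via $\sum_j|r_{ij}|^m\le s_N$ is, if anything, more faithful to Assumption~\ref{ass:dep} than the paper's $\Gamma_N\le C(\log N)^C$.

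\textbf{Main gap: skewed non-Gaussian innovations.} Theorem~\ref{thm:null-joint} allows $z_{it}$ to be asymmetric, and this is where your argument breaks. You claim the joint cumulants of $(x_i,x_j)$ differ from the Gaussian ones by $O(T^{-1})$, but this is false at order three. Indeed $\kappa(x_i,x_i,x_j)=\kappa_3(z_{11})\bigl(\sum_tb_t^3\bigr)\sum_ka_{ik}^2a_{jk}$, which is in general of order $T^{-1/2}$. The same slip recurs in step (ii): the rate $O(\sum_tb_t^4)$ you quote for third- and fourth-cumulant contributions to $E(x_i^a\delta_i)$ only covers the fourth cumulant. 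A per-coordinate error of order $T^{-1/2}$ would give a centering error of order $\sqrt{N/T}$, and Assumption~\ref{ass:growth} does not make that vanish. The missing idea is Lemma~\ref{lem:no-third-moment}. Every quantity involved is a moment of even total degree and first cumulants vanish, so in the moment--cumulant expansion an odd block can only occur together with a second odd block. Hence all non-Gaussian corrections are $O(T^{-1})$. Your Lindeberg route needs the same parity cancellation: summed over the $N$ coordinates of the swapped increment, the naive third-order term is of order $\sqrt{N/T}$.

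\textbf{Two further missing steps.}
\begin{enumerate}[label=(\alph*)]
\item Matching pairwise cumulants only matches the mean and variance of $S_N^{\cir}$ and $S_N^G$; it does not transfer the CLT. You need $\kappa_m(S_N^{\cir})-\kappa_m(S_N^G)\to0$ for every $m\ge3$. This is how the paper proceeds, again via pairing, giving a $C_mT^{-1}$ discrepancy.
\item Your variance bound $\frac1N\sum_{i,j}E(x_i^ax_j^a)\Cov(\delta_i,\delta_j)$ for $N^{-1/2}\sum_ix_i^a\delta_i$ rests on independence of $\x$ and the OLS residuals, which holds only under Gaussianity. Under the ICM you must bound $N^{-1}\sum_{i\ne j}\Cov\{x_i^a\delta_i,x_j^a\delta_j\}$ directly. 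The paper does this with the quadratic-form moment lemmas of Pesaran--Yamagata, obtaining $O(T^{-1}+NT^{-2})$.
\end{enumerate}

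\textbf{An unnecessary step.} Your plan to reproduce the $v^{-1}$ terms of $B_{aa,N}$ exactly through the hypergeometric formula is not needed, and that formula is unavailable beyond Gaussianity. Since $\kappa_{2k,N}\ge1$, $\B_N^{(0)}$ has eigenvalues bounded away from zero, and $\B_N-\B_N^{(0)}=O(v^{-1})$. So it suffices to show $\Q_N-\Q_N^{\cir}=o_p(1)$. The quadratic Taylor term $N^{-1/2}\sum_ix_i^a\delta_i^2$ and the shift $\sqrt N(\mu_{a,v}-\mu_a)$ are both $O(\sqrt N/T)$, so they, and the $v^{-1}$ corrections, can simply be dropped, as the paper does.
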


Let \(\widehat{\R}=(\widehat{r}_{ij})_{1\le i,j\le N}\) be the sample correlation matrix obtained from the OLS residuals:
\[
\widehat r_{ij}=\frac{\widehat\bme_i^\T \widehat\bme_j}{\norm{\widehat\bme_i}\norm{\widehat\bme_j}}.
\]
To stabilize the high-dimensional plug-in, we threshold the off-diagonal entries:
\[
\widetilde r_{ij}
=
\widehat r_{ij}\ind\bigl(|\widehat r_{ij}|>\tau_N\bigr),
\qquad
\tau_N=C_\tau \sqrt{\frac{\log N}{T}},
\]
and set \(\widetilde r_{ii}=1\). Let \(\widetilde\R=(\widetilde r_{ij})_{1\le i,j\le N}\). We then define \(\widehat \B_N\) by replacing \(r_{ij}\) with \(\widetilde r_{ij}\) in \eqref{eq:B-def}. 

The threshold level is chosen to remove weak residual correlations that are likely to arise from sampling noise when \(N\) is large. Since the number of off-diagonal entries grows quadratically with \(N\), we use a Bonferroni-type cutoff that accounts for multiplicity across pairs and shrinks at the rate \(v^{-1/2}\). This is similar in spirit to the thresholding strategy adopted in \citet{PesaranYamagata2024}. Specifically, \(\tau_N=v^{-1/2}\Phi^{-1}(1-\zeta N^{-\varrho}/2)\). In the empirical implementation, we set \(\zeta=0.05\) and \(\varrho=1\), which yields \(\tau_N\asymp\sqrt{2(\log N)/v}\).

The following theorem establish the consistency of the covariance estimator.
\begin{theorem}
\label{thm:Bhat}
Suppose Assumptions~\ref{ass:factor}--\ref{ass:growth} hold and \(H_0\) is true. Then
\[
\max_{1\le k,\ell\le 3}\abs{\widehat B_{k\ell,N}-B_{k\ell,N}}\pto 0.
\]
If in addition \(\B_N\to \B\) with \(\B\) positive definite, then
\[
\norm{\widehat \B_N^{-1/2}-\B_N^{-1/2}}_{\max}\pto 0.
\]
Consequently,
\[
\widehat \B_N^{-1/2}\Q_N \dto \mathcal{N}(\bzero,\I_3).
\]
\end{theorem}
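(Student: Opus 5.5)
The proof has three steps: first show $\widehat B_{k\ell,N}-B_{k\ell,N}\pto0$ entrywise, then transfer this to the inverse square root, and finally combine with Theorem~\ref{thm:null-joint} via Slutsky's lemma. Every entry of $\B_N$ is a finite linear combination, with bounded coefficients, of the quantities
\[
S_k:=\frac1N\sum_{i,j=1}^N r_{ij}^{2k},\qquad k=1,2,3,4,
\]
with the $v^{-1}$-weighted terms carrying an extra factor $v^{-1}\to0$. It therefore suffices to show, for each $k$, that
\[
\widehat S_k:=\frac1N\sum_{i,j} \widetilde r_{ij}^{2k}
\quad\text{satisfies}\quad
\widehat S_k-S_k\pto 0.
\]
Diagonal terms contribute exactly $1$ to both $\widehat S_k$ and $S_k$. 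So the task reduces to controlling $N^{-1}\sum_{i\ne j}(\widetilde r_{ij}^{2k}-r_{ij}^{2k})$.

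\textbf{Step 1 (uniform entrywise rate).} I will show $\max_{i,j}|\widehat r_{ij}-r_{ij}|=O_p(\sqrt{\log N/T})$.
\begin{itemize}[leftmargin=*]
\item Under $H_0$, write $\widehat\bme_{i.}={\rm M}_{\F}\bme_{i.}-\bone_T$-correction $= ({\rm M}_{\F}-\b\b^\T)\bme_{i.}$, a projection of rank $T-p-1$.
\item Then $\widehat\bme_i^\T\widehat\bme_j=\bme_{i.}^\T\bme_{j.}-\bme_{i.}^\T P\bme_{j.}$, where $P$ has rank $p+1$ and is independent of $\bme$ by Assumption~\ref{ass:factor}(i).
\item The first term is a sum of i.i.d.\ products. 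Since $\varepsilon_{it}=\sum_k(\bmS^{1/2})_{ik}z_{kt}$ is a linear form in sub-Gaussian variables with bounded variance, each product is sub-exponential. Bernstein's inequality plus a union bound over $N^2$ pairs gives $\max|T^{-1}\bme_{i.}^\T\bme_{j.}-\sigma_{ij}|=O_p(\sqrt{\log N/T})$.
\item The projection term is $O_p(\log N/T)$ uniformly: condition on $\F$ and use that $P$ is spanned by $p+1$ fixed directions, with $\max_t|b_t|\le C_bT^{-1/2}$ and the eigenvalue bounds.
\item Dividing by $\|\widehat\bme_i\|\|\widehat\bme_j\|$, where $\sigma_i^2\ge c_\sigma$, gives the claimed rate.
\end{itemize}

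\textbf{Step 2 (thresholding error).} This is a Bickel--Levina-type argument on the event $\mathcal E=\{\max|\widehat r_{ij}-r_{ij}|\le\tau_N/2\}$, whose probability tends to one for $C_\tau$ large. Using $|x^{2k}-y^{2k}|\le 2k|x-y|$ for $|x|,|y|\le1$, the row sums split into three groups.
\begin{itemize}[leftmargin=*]
\item \emph{Kept, $|r_{ij}|>\tau_N/2$.} The error is $\lesssim\tau_N\cdot\#\{j:|r_{ij}|>\tau_N/2\}\le\tau_N^{1-q}s_N2^q$.
\item \emph{Killed, $|r_{ij}|\le 3\tau_N/2$.} The error is $\lesssim\sum_j r_{ij}^{2}\ind(|r_{ij}|\le 3\tau_N/2)\le(3\tau_N/2)^{2-q}s_N$, using $2k\ge2$.
\item \emph{Killed with large true value, or kept with small true value.} This is impossible on $\mathcal E$ beyond the two cases above.
\end{itemize}
Averaging over $i$ gives
\[
|\widehat S_k-S_k|=O_p\bigl(s_N\tau_N^{1-q}\bigr)=O_p\bigl(s_N(\log N/T)^{(1-q)/2}\bigr).
\]
This tends to zero by Assumptions~\ref{ass:dep} and~\ref{ass:growth}, since $s_N\ll N^{1/12}$ and $T\gg\sqrt N$ imply $s_N^{2/(1-q)}\log N/T\to0$ provided $q$ is not too close to $1$. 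This rate check is the step I expect to be delicate. If it fails for $q$ near $1$, I would refine the argument by bounding the kept-term error via a second-moment computation rather than uniformly: average the squared errors $(\widehat r_{ij}-r_{ij})^2\approx T^{-1}$ over kept pairs, giving $N^{-1}\sum_i\#\text{kept}_i\cdot T^{-1/2}$ with a better exponent.

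\textbf{Step 3 (inverse square root and conclusion).} Step 2 yields the first display of the theorem. If $\B_N\to\B\succ0$, then $\widehat\B_N\pto\B$ entrywise. The map $A\mapsto A^{-1/2}$ is continuous on positive definite $3\times3$ matrices, so by the continuous mapping theorem both $\widehat\B_N^{-1/2}$ and $\B_N^{-1/2}$ converge (in probability) to $\B^{-1/2}$, giving the max-norm statement. Finally, Theorem~\ref{thm:null-joint} gives $\B_N^{-1/2}\Q_N\dto\mathcal N(\bzero,\I_3)$, so $\Q_N=O_p(1)$. Writing
\[
\widehat\B_N^{-1/2}\Q_N=\B_N^{-1/2}\Q_N+(\widehat\B_N^{-1/2}-\B_N^{-1/2})\Q_N,
\]
the second term is $o_p(1)$, and Slutsky's lemma finishes the proof.
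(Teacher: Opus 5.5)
The overall architecture matches the paper's: a uniform $O_p(\sqrt{\log N/T})$ rate for $\widehat r_{ij}$, a thresholding analysis on $\mathcal E=\{\max_{i,j}|\widehat r_{ij}-r_{ij}|\le\tau_N/2\}$, and then continuity of $A\mapsto A^{-1/2}$ plus Slutsky. Your Steps~1 and~3 are fine.

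\textbf{The gap is the kept-pair bound in Step~2.} By using $|x^{2k}-y^{2k}|\le 2k|x-y|$ you discard the factor $|r_{ij}|^{2k-1}$. You are then forced to count the kept entries, which gives $s_N\tau_N^{1-q}$. Assumptions~\ref{ass:dep}--\ref{ass:growth} do not make this vanish for all $q\in[0,1)$. For example, they allow $s_N\asymp N^{1/12}(\log N)^{-3}$ and $T\asymp\sqrt N(\log N)^7$. Then $\tau_N\asymp N^{-1/4}(\log N)^{-3}$, and $s_N\tau_N^{1-q}$ grows like $N^{1/12-(1-q)/4}$ up to logarithms, so it diverges for every $q>2/3$.

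Your proposed second-moment repair does not change this. It still multiplies a per-pair error of order $T^{-1/2}$ by the number of kept pairs in a row. That number can be of order $s_N\tau_N^{-q}$, so you obtain $s_N\tau_N^{1-q}(\log N)^{-1/2}$, with the same polynomial exponent.

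\textbf{The missing idea is to keep the power in the mean value theorem.} Work on $\mathcal E$ and split by the size of $|r_{ij}|$.

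\emph{Case $|r_{ij}|>2\tau_N$.} Here $|\widehat r_{ij}|>\tfrac32\tau_N$, so the pair is necessarily kept, and $|\widehat r_{ij}|\le\tfrac54|r_{ij}|$. Hence
$|\widehat r_{ij}^{\,2k}-r_{ij}^{\,2k}|\le C|\widehat r_{ij}-r_{ij}|\,|r_{ij}|^{2k-1}\le C\tau_N|r_{ij}|^{q}$,
using $2k-1\ge 1>q$ and $|r_{ij}|\le 1$.

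\emph{Case $|r_{ij}|\le 2\tau_N$.} Either the pair is killed, and the error is $|r_{ij}|^{2k}\le(2\tau_N)^{2k-q}|r_{ij}|^q$. Or it is kept, in which case $|r_{ij}|>\tau_N/2$ and $|\widehat r_{ij}|\le\tfrac52\tau_N$, so the error is at most $C\tau_N^{2k}\le C\tau_N^{2k-q}|r_{ij}|^q$.

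Since $2k-q>1$, every off-diagonal error is $\le C\tau_N|r_{ij}|^q$. Therefore $N^{-1}\sum_{i\ne j}|\widetilde r_{ij}^{\,2k}-r_{ij}^{\,2k}|\le Cs_N\tau_N$. Moreover $s_N\tau_N=s_N\sqrt{\log N/T}\to 0$ for every $q\in[0,1)$, because $s_N=o(N^{1/12})$ and $\tau_N=o(N^{-1/4})$.

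This is precisely the paper's Case~1/Case~2 split. With it in place of your kept-pair bound, the rest of your proof goes through.
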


Theorems~\ref{thm:null-joint} and \ref{thm:Bhat} also justify separate tests based on the individual components \(Q_{2,N}\), \(Q_{4,N}\), and \(Q_{6,N}\). For \(a\in\{2,4,6\}\), define the standardized statistic
\[
T_{a,N}=\frac{Q_{a,N}}{\sqrt{\widehat B_{aa,N}}},
\]
where \(\widehat B_{aa,N}\) is the corresponding diagonal entry of \(\widehat B_N\). Under \(H_0\),
\[
T_{a,N}\dto N(0,1).
\]
Since departures from \(H_0\) lead to inflated even moments, each test rejects the null for large positive values of \(T_{a,N}\). Therefore, at significance level \(\xi\), the rejection region of the \(L_a\) test is \(T_{a,N}> z_{1-\xi},~a\in\{2,4,6\}\),
where \(z_{1-\xi}\) is the \((1-\xi)\)-quantile of the standard normal distribution. Equivalently, the asymptotic \(p\)-value of the \(L_a\) test is
\[
p_a=1-\Phi(T_{a,N}),
\]
and we reject \(H_0\) whenever this \(p\)-value is smaller than \(\xi\).

Next, we study the power performance of each test.

\begin{theorem}
\label{thm:power}
Suppose Assumptions~\ref{ass:factor}--\ref{ass:growth} hold. In addition, assume that the innovations in Assumption~\ref{ass:ICM} are symmetric, that is,
\[
z_{it}\overset{d}= -z_{it}\qquad \text{for all } i,t.
\]
Consider a sequence of local alternatives \(H_{1,N}\) under which
\[
\alpha_i=\frac{\sigma_i}{\sqrt{w_T}}\de_i,\qquad i=1,\dots,N,
\]
for a deterministic vector \(\bm\de=(\de_1,\dots,\de_N)^\T\) satisfying
\[
\max_{1\le i\le N}|\de_i|\to 0,
\qquad
\Delta_{2\ell,N}:=\frac{1}{\sqrt N}\sum_{i=1}^N \de_i^{2\ell}\to \tau_{2\ell}\in[0,\infty),
\qquad \ell=1,2,3.
\]
Assume further that \(B_{aa,N}\to B_{aa}\in(0,\infty)\) for \(a\in\{2,4,6\}\), and that the conclusion of Theorem~\ref{thm:Bhat} continues to hold along the present local alternative sequence, namely
\[
\widehat B_{aa,N}-B_{aa,N}\pto 0,
\qquad a\in\{2,4,6\}.
\]

Then
\[
T_{2,N}=\frac{Q_{2,N}}{\sqrt{\widehat B_{22,N}}}
\dto N\!\left(\frac{\tau_2}{\sqrt{B_{22}}},\,1\right),
\]
\[
T_{4,N}=\frac{Q_{4,N}}{\sqrt{\widehat B_{44,N}}}
\dto N\!\left(\frac{6\tau_2+\tau_4}{\sqrt{B_{44}}},\,1\right),
\]
and
\[
T_{6,N}=\frac{Q_{6,N}}{\sqrt{\widehat B_{66,N}}}
\dto N\!\left(\frac{45\tau_2+15\tau_4+\tau_6}{\sqrt{B_{66}}},\,1\right).
\]
\end{theorem}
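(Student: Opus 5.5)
Under the local alternative we have $\widehat\alpha_i=\alpha_i+\bone_T^\T{\rm M}_{\F}\bme_{i.}/w_T$. Hence the oracle variable decomposes as
\[
x_i=\de_i+x_i^{0},\qquad x_i^0:=\frac{\b^\T\bme_{i.}}{\sigma_i},
\]
where $x_i^0$ has exactly the null law treated in Theorem~\ref{thm:null-joint}. The residuals $\widehat\bme_{i.}={\rm M}_{\F}(\Y_{i.}-\bone_T\widehat\alpha_i)$ do not depend on $\alpha_i$, because ${\rm M}_{\F}(\bone_T\alpha_i-\bone_T\widehat\alpha_i)$ removes the mean shift exactly. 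It follows that $\widehat\sigma_i$, and therefore $\widehat\B_N$, have the same distribution as under $H_0$. So we may write $t_i=(\sigma_i/\widehat\sigma_i)(\de_i+x_i^0)$, and the only new ingredient is the deterministic shift $\de_i$.

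\textbf{Step 1: binomial expansion.} For $a=2m$ we expand
\[
t_i^a=\Bigl(\frac{\sigma_i}{\widehat\sigma_i}\Bigr)^a\sum_{k=0}^{a}\binom{a}{k}\de_i^{k}(x_i^0)^{a-k}.
\]
The $k=0$ term gives exactly the null statistic $Q_{a,N}^{H_0}$. By Theorem~\ref{thm:null-joint}, $Q_{a,N}^{H_0}/\sqrt{B_{aa,N}}\dto N(0,1)$.

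\textbf{Step 2: the drift.} Each remaining term is split into its mean and its fluctuation. Symmetry of $z_{it}$ makes $x_i^0$ symmetric, and $\widehat\sigma_i$ is an even function of $\bme_{i.}$. Hence every odd-$k$ term has mean zero. For even $k=2\ell$, the mean is $N^{-1/2}\sum_i\binom{a}{2\ell}\de_i^{2\ell}E[(\sigma_i/\widehat\sigma_i)^a(x_i^0)^{a-2\ell}]$.

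The expectation equals $\mu_{a-2\ell}+O(T^{-1})$. It is the $t$-type moment, and under the ICM it differs from the Gaussian value only at order $T^{-1}$, since $\max_t|b_t|\le C_bT^{-1/2}$. Using $\Delta_{2\ell,N}\to\tau_{2\ell}$ with bounded total weight, the mean contributions are:
\begin{itemize}
\item for $a=2$: $\tau_2$;
\item for $a=4$: $\binom42\mu_2\tau_2+\tau_4=6\tau_2+\tau_4$;
\item for $a=6$: $\binom62\mu_4\tau_2+\binom64\mu_2\tau_4+\tau_6=45\tau_2+15\tau_4+\tau_6$.
\end{itemize}

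\textbf{Step 3: negligibility of fluctuations.} The centered terms with $k\ge1$ must be shown to be $o_p(1)$. Take the term $N^{-1/2}\sum_i\de_i^k\{(\cdot)(x_i^0)^{a-k}-E\}$. Its variance is bounded by
\[
N^{-1}\sum_{i,j}|\de_i|^k|\de_j|^k\,|\Cov(\cdot,\cdot)|.
\]
We bound the covariance by $C|r_{ij}|^{c}$ up to $O(T^{-1/2})$ corrections. This uses the same moment and covariance bounds for polynomials of $(x_i^0,x_j^0)$ that underlie the computation of $\B_N$ in Theorem~\ref{thm:null-joint}.

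With $\max_j\sum_i|r_{ij}|^{c}\lesssim s_N$ from Assumption~\ref{ass:dep}, the variance is at most
\[
C s_N\max_i|\de_i|^{k}\cdot N^{-1}\sum_i|\de_i|^{k}.
\]
For $k\ge2$ this is at most $C s_N\max_i|\de_i|^{k}\,N^{-1/2}\Delta_{2,N}\max_i|\de_i|^{k-2}$, which tends to $0$ because $s_N/\sqrt N\to0$. For $k=1$ we use $N^{-1}\sum_i\de_i^2\le N^{-1/2}\Delta_{2,N}$ together with Cauchy--Schwarz. This gives a bound of order $s_N\Delta_{2,N}/\sqrt N\to0$.

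\textbf{Step 4: conclusion.} Combining $\widehat B_{aa,N}-B_{aa,N}\pto0$, $B_{aa,N}\to B_{aa}$ and Slutsky's theorem yields the three stated limits.

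\textbf{Main obstacle.} The hardest point is the uniform $O(T^{-1})$ control of the mixed moments $E[(\sigma_i/\widehat\sigma_i)^a(x_i^0)^{a-2\ell}]$ under the ICM rather than Gaussianity. Since $x_i^0$ and $\widehat\sigma_i$ are not independent, I would first replace $\sigma_i/\widehat\sigma_i$ by $1+O_p(T^{-1/2})$ using the sub-exponential tails. I would then bound the error terms by Hölder's inequality, which is enough because they multiply the bounded quantity $\Delta_{2\ell,N}$.
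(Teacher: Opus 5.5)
\textbf{Overall.} Your skeleton matches the paper's. You use the binomial expansion, take the drift $\binom{a}{2\ell}\mu_{a-2\ell}\Delta_{2\ell,N}$ from the even terms, control fluctuations by $\max_i\sum_j|r_{ij}|\le s_N$ and $s_N/\sqrt N\to0$, and finish with Slutsky. Two of your remarks are sharper than the paper. First, the residuals, hence $\widehat\sigma_i$ and $\widehat\B_N$, do not depend on $\bal$, so the hypothesis $\widehat B_{aa,N}-B_{aa,N}\pto0$ follows automatically from Theorem~\ref{thm:Bhat}. Second, the $k=0$ term is literally the null statistic.

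\textbf{Where it fails.} You keep $(\sigma_i/\widehat\sigma_i)^a$ inside every term, whereas the paper first removes the studentization ($Q_{a,N}-\bar Q_{a,N}=o_p(1)$) and only afterwards handles polynomials in $\zeta_i$. This choice makes your Step~3 fail as written for the $k=1$ term
\[
\frac{a}{\sqrt N}\sum_{i=1}^N \de_i\Bigl(\frac{\sigma_i}{\widehat\sigma_i}\Bigr)^a (x_i^0)^{a-1}.
\]
Its summands are not polynomials in $x_i^0$, so the covariance bounds behind $\B_N$ do not cover them. The bound you assert is $C|r_{ij}|^c$ up to an $O(T^{-1/2})$ correction. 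Cauchy--Schwarz on the studentization error only makes that correction uniform in $(i,j)$, so it contributes $T^{-1/2}N^{-1}(\sum_i|\de_i|)^2$ to the variance, and your final variance display drops this.

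This contribution can be of order $\sqrt{N/T}\,\tau_2$. Take $\de_i\equiv N^{-1/4}$: all hypotheses hold with $\tau_2=1$, while $N^{-1/2}\sum_i|\de_i|=N^{1/4}$. Since Assumption~\ref{ass:growth} permits $T\ll N\ll T^2$, the bound does not vanish. For the same reason, your closing justification (the errors are harmless because they multiply the bounded $\Delta_{2\ell,N}$) is correct for even $k$. It is also correct for odd $k\ge3$, where $N^{-1/2}\sum_i|\de_i|^k\le\max_i|\de_i|^{k-2}\Delta_{2,N}$. It fails for $k=1$.

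\textbf{Repair.} Separate the studentization before computing covariances. Write $(\sigma_i/\widehat\sigma_i)^a=1+(X_i^{-a/2}-1)$ with $X_i=\widehat\sigma_i^2/\sigma_i^2$, and insert $\ind\{|X_i-1|\le 1/2\}$. This changes nothing on an event of probability tending to one. You need the truncation anyway: under the ICM, $E(\widehat\sigma_i^{-a})$ may be infinite (for Rademacher $z_{it}$, $\widehat\sigma_i=0$ with positive probability), so the expectations in your Step~2 need not exist as written. Minkowski's inequality, with $\|(X_i^{-a/2}-1)\ind\{|X_i-1|\le1/2\}\|_4=O(T^{-1/2})$, then gives
\[
\Bigl\|\frac{1}{\sqrt N}\sum_{i=1}^N\de_i\,(X_i^{-a/2}-1)\ind\{|X_i-1|\le 1/2\}\,(x_i^0)^{a-1}\Bigr\|_2
\le\frac{C}{\sqrt{NT}}\sum_{i=1}^N|\de_i|
\le C\,\frac{N^{1/4}}{T^{1/2}}\,\Delta_{2,N}^{1/2}\to0 .
\]
This step genuinely uses $\sqrt N/T\to0$, not merely boundedness of $\Delta_{2,N}$. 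The remaining polynomial part $N^{-1/2}\sum_i\de_i(x_i^0)^{a-1}$ has variance at most $Cs_NN^{-1/2}\Delta_{2,N}$, as you say. With this change your argument becomes essentially the paper's, reordered so that the main term is handled directly by Theorem~\ref{thm:null-joint}.
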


Consequently, at asymptotic significance level \(\xi\in(0,1)\), the power functions of the three one-sided tests satisfy
\[
\Prb_{H_{1,N}}(T_{2,N}>z_{1-\xi})
\to
1-\Phi\!\left(z_{1-\xi}-\frac{\tau_2}{\sqrt{B_{22}}}\right),
\]
\[
\Prb_{H_{1,N}}(T_{4,N}>z_{1-\xi})
\to
1-\Phi\!\left(z_{1-\xi}-\frac{6\tau_2+\tau_4}{\sqrt{B_{44}}}\right),
\]
and
\[
\Prb_{H_{1,N}}(T_{6,N}>z_{1-\xi})
\to
1-\Phi\!\left(z_{1-\xi}-\frac{45\tau_2+15\tau_4+\tau_6}{\sqrt{B_{66}}}\right).
\]
Theorem~\ref{thm:power} shows that, in the present local regime, the asymptotic power of each test is determined by its noncentrality parameter:
\[
\frac{\tau_2}{\sqrt{B_{22}}},\qquad
\frac{6\tau_2+\tau_4}{\sqrt{B_{44}}},\qquad
\frac{45\tau_2+15\tau_4+\tau_6}{\sqrt{B_{66}}}.
\]
Thus the three procedures differ through the additional higher-order signal summaries
\(\tau_4\) and \(\tau_6\).

To see the role of sparsity more clearly, consider the canonical equal-amplitude sparse alternative
\[
\de_i=a_N\ind(i\in S_N),
\qquad |S_N|=k_N.
\]
Then
\[
\Delta_{2,N}=\frac{k_N a_N^2}{\sqrt N},\qquad
\Delta_{4,N}=\frac{k_N a_N^4}{\sqrt N},\qquad
\Delta_{6,N}=\frac{k_N a_N^6}{\sqrt N}.
\]
Hence the three mean shifts are
\[
\lambda_{2,N}=\frac{k_N a_N^2}{\sqrt N},
\qquad
\lambda_{4,N}=6\frac{k_N a_N^2}{\sqrt N}+\frac{k_N a_N^4}{\sqrt N},
\qquad
\lambda_{6,N}=45\frac{k_N a_N^2}{\sqrt N}
+15\frac{k_N a_N^4}{\sqrt N}
+\frac{k_N a_N^6}{\sqrt N}.
\]
Moreover, when \(\bmS=\I_N\), we have \(r_{ij}=\ind(i=j)\), so  the variance constants simplify to
\[
B_{22,N}=2,\qquad B_{44,N}=96,\qquad B_{66,N}=10170,
\]

If we further set
\[
a_N=k_N^{-\epsilon},
\]
the standardized mean shifts become
\[
\eta_{2,N}=\frac{k_N^{1-2\epsilon}}{\sqrt{2N}},
\qquad
\eta_{4,N}=\frac{6k_N^{1-2\epsilon}+k_N^{1-4\epsilon}}{\sqrt{96N}},
\qquad
\eta_{6,N}=\frac{45k_N^{1-2\epsilon}+15k_N^{1-4\epsilon}+k_N^{1-6\epsilon}}{\sqrt{10170N}}.
\]
These expressions yield a transparent comparison of the three tests. 

When \(\epsilon>0\), the signal on each active coordinate decays as the support size increases, and the common quadratic component \(k_N^{1-2\epsilon}/\sqrt N\) dominates all three noncentralities. In this case, the leading constants satisfy
\[
\frac{1}{\sqrt2}
>
\frac{6}{\sqrt{96}}
>
\frac{45}{\sqrt{10170}},
\]
so \(L_2\) has the largest asymptotic noncentrality, followed by \(L_4\), and then \(L_6\). Thus, under relatively dense and weak alternatives, the \(L_2\) test is the most competitive. When \(\epsilon=0\), each active coordinate has constant order. Then
\[
\eta_{2,N}=\frac{k_N}{\sqrt{2N}},\qquad
\eta_{4,N}=\frac{7k_N}{\sqrt{96N}},\qquad
\eta_{6,N}=\frac{61k_N}{\sqrt{10170N}}.
\]
Hence \(L_4\) is only slightly more favorable than \(L_2\), while \(L_6\) remains less competitive. Finally, when \(\epsilon<0\), the signal magnitude on each active coordinate increases as sparsity strengthens. In that regime, the higher-order terms \(k_N^{1-4\epsilon}\) and \(k_N^{1-6\epsilon}\) become dominant, so \(L_4\) eventually dominates \(L_2\), and \(L_6\) eventually dominates \(L_4\). Therefore, the power ordering shifts from
\[
L_2 \succ L_4 \succ L_6
\]
under dense weak alternatives and
\[
L_6 \succ L_4 \succ L_2
\]
under sufficiently sparse and strong alternatives, with \(L_4\) serving as an intermediate and more balanced choice under moderate sparsity.

If we assume
\[
a_N=N^{-\epsilon}.
\]
Then the standardized mean shifts become
\[
\eta_{2,N}
=
\frac{k_N N^{-2\epsilon}}{\sqrt{2N}},
\quad
\eta_{4,N}
=
\frac{k_N\bigl(6N^{-2\epsilon}+N^{-4\epsilon}\bigr)}{\sqrt{96N}},
\quad
\eta_{6,N}
=
\frac{k_N\bigl(45N^{-2\epsilon}+15N^{-4\epsilon}+N^{-6\epsilon}\bigr)}{\sqrt{10170N}}.
\]
Hence all three noncentrality parameters are linear in \(k_N\). Therefore, once the signal magnitude \(a_N\) is fixed, increasing the support size \(k_N\) strengthens all three tests proportionally, and the power comparison is determined by the coefficients multiplying \(k_N\). Equivalently, one may define the critical support size
\[
k_{q,N}^\star \asymp \eta_{q,N}^{-1},
\qquad q=2,4,6,
\]
as the order of \(k_N\) required for the \(L_q\)-based test to have nontrivial asymptotic power. The test with the largest coefficient in \(\eta_{q,N}\) has the smallest critical support size and is therefore the most sensitive.

This characterization yields three regimes. If \(\epsilon>0\), then
\[
N^{-2\epsilon}\gg N^{-4\epsilon}\gg N^{-6\epsilon},
\]
so
\[
\eta_{2,N}\sim \frac{k_N N^{-2\epsilon}}{\sqrt{2N}},\qquad
\eta_{4,N}\sim \frac{6k_N N^{-2\epsilon}}{\sqrt{96N}},\qquad
\eta_{6,N}\sim \frac{45k_N N^{-2\epsilon}}{\sqrt{10170N}}.
\]
Since
\[
\frac{1}{\sqrt2}>\frac{6}{\sqrt{96}}>\frac{45}{\sqrt{10170}},
\]
the \(L_2\) test has the largest noncentrality and thus requires the smallest support size \(k_N\) for detection, followed by \(L_4\) and then \(L_6\). If \(\epsilon=0\), then
\[
\eta_{2,N}=\frac{k_N}{\sqrt{2N}},\qquad
\eta_{4,N}=\frac{7k_N}{\sqrt{96N}},\qquad
\eta_{6,N}=\frac{61k_N}{\sqrt{10170N}}.
\]
In this boundary case, \(L_4\) is slightly more favorable than \(L_2\), while \(L_6\) remains less competitive. Hence the corresponding critical support sizes satisfy
\[
k_{4,N}^\star<k_{2,N}^\star<k_{6,N}^\star.
\] 
If \(\epsilon<0\), then the higher-order terms dominate:
\[
\eta_{4,N}\sim \frac{k_N N^{-4\epsilon}}{\sqrt{96N}},
\qquad
\eta_{6,N}\sim \frac{k_N N^{-6\epsilon}}{\sqrt{10170N}},
\]
whereas
\[
\eta_{2,N}\sim \frac{k_N N^{-2\epsilon}}{\sqrt{2N}}.
\]
Therefore, \(L_6\) has the largest noncentrality, followed by \(L_4\), and then \(L_2\). In other words, when the coordinate-wise signals decay slowly or even strengthen with \(N\), the higher-order procedures can detect the alternative with a substantially smaller support size.

Generally speaking, when the signals are weak and the support is relatively large, the leading term is the common quadratic component \(k_N a_N^2/\sqrt N\), so the three tests behave similarly and the \(L_2\) test is typically competitive. As the support becomes sparser, the signal on each active coordinate must become larger in order to maintain comparable overall strength. In that case, the quartic and sextic terms become more important, which increases the mean shifts of \(L_4\) and \(L_6\) relative to \(L_2\). This explains why \(L_4\) is more attractive under moderate sparsity, while \(L_6\) becomes more favorable as the alternative becomes sparser.

\subsection{Cauchy Combination Test}

\cite{FengLanLiuMa2022} proposed the max statistic
\[
L_{\infty,N}=\max_{1\le i\le N} t_i^2.
\]
Under weak cross-sectional dependence, the usual extreme-value normalization applies:
\[
M_N
=
L_{\infty,N}-2\log N+\log\log N.
\]
The next theorem records the standard Gumbel limit in the present setting \citep{FengLanLiuMa2022}.
\begin{theorem}
\label{thm:linf}
Suppose Assumptions~\ref{ass:factor}--\ref{ass:growth} hold and \(H_0\) is true. Then, for every real \(x\),
\[
\Prb(M_N\le x)\to \exp(-\pi^{-1/2}e^{-x/2}).
\]
Moreover, if
\[
\max_{1\le i\le N} |\de_i| \ge (1+\eta)\sqrt{2\log N}
\]
for some \(\eta>0\), then the \(L_{\infty}\) test is consistent.
\end{theorem}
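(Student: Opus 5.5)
The plan is to reduce the feasible statistic $L_{\infty,N}=\max_i t_i^2$ to the oracle statistic $\max_i x_i^2$, where $x_i=\sqrt{w_T}\,\widehat\alpha_i/\sigma_i$. Under $H_0$ we have $\sqrt{w_T}\,\widehat\alpha_i=\b^\T\bme_{i.}$ because ${\rm M}_{\F}$ annihilates the factor part. Hence $x_i=\sigma_i^{-1}\sum_t b_t\varepsilon_{it}$, and conditional on $\F$ this is a weighted sum of independent vectors. Their weights satisfy $\sum_t b_t^2=1$ and $\max_t|b_t|\le C_bT^{-1/2}$ by Assumption~\ref{ass:factor}(iii). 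Since $\bme_t$ is independent of $\F$, all arguments can be carried out conditionally on $\F$. The random vector $\x=(x_1,\ldots,x_N)^\T$ then has mean zero and correlation matrix exactly $\R$.

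The first step is to show that $\max_i|\widehat\sigma_i^2/\sigma_i^2-1|=O_p(\sqrt{\log N/T})$. This follows from Bernstein-type inequalities for $\sum_t\varepsilon_{it}^2$, using the sub-Gaussian tails $E(e^{\eta z^2})<\infty$ and the fact that $\varepsilon_{it}=\sum_k(\bmS^{1/2})_{ik}z_{kt}$ is a linear form in i.i.d.\ variables. The projection terms from ${\rm M}_{\F}$ and $\widehat\alpha_i$ contribute $O_p(\log N/T)$ uniformly, since $p$ is fixed and $T^{-1}\F^\T\F$ is well conditioned. Since $\max_i x_i^2=O_p(\log N)$, we get
\[
|\max_i t_i^2-\max_i x_i^2|\le \max_i x_i^2\cdot\max_i|\sigma_i^2/\widehat\sigma_i^2-1| =O_p\bigl((\log N)^{3/2}T^{-1/2}\bigr)=o_p(1)
\]
by Assumption~\ref{ass:growth}. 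It therefore suffices to prove the Gumbel limit for $\max_i x_i^2-2\log N+\log\log N$.

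For the oracle maximum, I would first replace $\x$ by a Gaussian vector $\g\sim\mathcal N(\bzero,\R)$. Because $\x$ is a normalized sum of $T$ independent non-degenerate vectors with uniformly small weights, a Gaussian approximation for maxima of high-dimensional sums (Chernozhukov--Chetverikov--Kato type, with sub-exponential coordinates) gives
\[
\sup_y|\Prb(\max_i|x_i|\le y)-\Prb(\max_i|g_i|\le y)|\to0,
\]
provided $(\log (NT))^7/T\to0$, which is implied by Assumption~\ref{ass:growth}. Alternatively, one could use the moderate-deviation plus Bonferroni route of Cai, Liu and Xia and compare tail probabilities $\Prb(|x_{i_1}|\ge y_N,\ldots,|x_{i_d}|\ge y_N)$ directly with their Gaussian counterparts.

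For the Gaussian maximum, I would use the Berman/Bonferroni inclusion–exclusion argument of Cai, Liu and Xia (2014) and \citet{FengLanLiuMa2022}. This requires showing that pairs with $|r_{ij}|$ bounded away from zero contribute negligibly, and that for the remaining pairs the joint exceedance probabilities of $(g_i,g_j)$ factorize. The sparsity bound $\max_i\sum_j|r_{ij}|^q\le s_N$ with $s_N=o(N^{1/12})$ controls the number of strongly correlated pairs. A standard argument also requires $\max_{i\ne j}|r_{ij}|\le r<1$. If this is not explicitly assumed, I would note that it is implicit in the comparison with \citet{FengLanLiuMa2022}, or derive it from $\kappa$ boundedness in that reference.

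Each term $\Prb(g_i^2>2\log N-\log\log N+x)$ equals $\pi^{-1/2}e^{-x/2}/N\,(1+o(1))$. Summing these terms over $i$ yields $\exp(-\pi^{-1/2}e^{-x/2})$. The main obstacle is this dependent-extremes step: controlling the $d$-fold sums in the Bonferroni bounds under only $L_q$-sparsity of $\R$ rather than banded structure. I would simply invoke the corresponding lemma of \citet{FengLanLiuMa2022}, whose proof carries over verbatim once the Gaussian comparison above is in place.

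For consistency under the alternative, write $t_i=(\sigma_i/\widehat\sigma_i)(x_i^{0}+\de_i)$, where $x_i^0$ is the null part. Let $i^*$ be an index with $|\de_{i^*}|\ge(1+\eta)\sqrt{2\log N}$. Then
\[
|t_{i^*}|\ge (1-o_p(1))\bigl(|\de_{i^*}|-|x_{i^*}^0|\bigr),
\]
and $|x^0_{i^*}|=O_p(1)$. Hence $t_{i^*}^2\ge(1+\eta)^2 2\log N(1-o_p(1))$, which exceeds the critical value $2\log N-\log\log N+q_\xi$ with probability tending to one. Therefore $\Prb(M_N>q_\xi)\to1$.
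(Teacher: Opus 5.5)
\textbf{Gap: the condition $\max_{i\ne j}|r_{ij}|\le r<1$.} Your treatment of this condition is the one step that cannot be completed. You are right that the Berman/Bonferroni step needs it. But it is not implicit in Assumptions~\ref{ass:factor}--\ref{ass:growth}, and it cannot be derived from the convergence of $\kappa_{2k,N}$. Without it, the Gumbel limit is false.

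Take $N$ even and $\bmS$ block diagonal with $2\times 2$ blocks $\bone_2\bone_2^{\T}$. Then $\bmS^{1/2}$ has blocks $\bone_2\bone_2^{\T}/\sqrt2$, so $\varepsilon_{2k-1,t}=\varepsilon_{2k,t}=(z_{2k-1,t}+z_{2k,t})/\sqrt2$. This is a valid ICM with $\sigma_i^2=1$, $\max_i\sum_j|r_{ij}|^q=2$ and $\kappa_{2k,N}=2$ for all $k$. Hence Assumption~\ref{ass:dep} holds with $s_N=2$.

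Yet $t_{2k-1}=t_{2k}$ identically, because both depend only on $\bme_{i.}$ once $\F$ is partialled out. So $L_{\infty,N}$ is a maximum of $N/2$ conditionally independent coordinates, and $\Prb(M_N\le x)\to\exp(-\pi^{-1/2}e^{-x/2}/2)$. Replacing the off-diagonal block entries by $1-N^{-1}$ gives the same limit with $\bmS$ positive definite, so positive definiteness does not rescue it.

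The first claim of the theorem therefore holds only after adding such a hypothesis explicitly. The paper's justification is a bare citation to \citet{FengLanLiuMa2022} and implicitly relies on one. Your write-up should list it as an extra assumption, rather than say it is implicit or follows from bounded $\kappa$.

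\textbf{A smaller caution: ``carries over verbatim''.} Assumption~\ref{ass:dep} only gives $\max_i\sum_j r_{ij}^2\le s_N$, with $s_N$ allowed to grow like a power of $N$. Cai--Liu--Xia-type lemmas instead restrict, relative to $r$, the number of non-negligible correlations per row. A simple repair uses $\kappa_{2,N}=O(1)$:
\begin{itemize}
\item It bounds the number of pairs with $|r_{ij}|\ge(\log N)^{-1-\alpha_0}$ by $O\{N(\log N)^{2+2\alpha_0}\}$.
\item Hence the rows with more than $(\log N)^{3+2\alpha_0}$ such partners form a set of size $O(N/\log N)=o(N)$.
\item The probability that any of them exceeds $2\log N-\log\log N+x$ is $o(1)$, so they can be discarded before invoking the lemma.
\end{itemize}

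\textbf{The rest of the plan is correct.} It is also more explicit than anything in the paper.
\begin{itemize}
\item The uniform bound $\max_i|\widehat\sigma_i^2/\sigma_i^2-1|=O_p(\sqrt{\log N/T})$ is exactly what the proof of Theorem~\ref{thm:Bhat} establishes.
\item The Chernozhukov--Chetverikov--Kato comparison applies conditionally on $\F$ to the independent, uniformly sub-Gaussian summands $\sqrt T\,b_t\D^{-1/2}\bme_t$, whose average covariance is $\R$. It requires $\log^7(NT)/T\to0$, which Assumption~\ref{ass:growth} implies.
\item The consistency argument goes through because the residuals ${\rm M}_{\X}\bme_{i.}$ do not involve $\bal$.
\end{itemize}
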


Feng et al.~(2022) consider a minimum-\(p\) benchmark. Let $p_2, p_\infty$
be the marginal \(p\)-values from the \(L_2\) and \(L_\infty\) tests. Define
\[
P_{\min,N}=\min\{p_2,p_\infty\}.
\]
This minimum-\(p\) rule is easy to compute and serves as a natural benchmark for adaptive testing. Its finite-sample calibration can be implemented by simulation or multiplier bootstrap. 

However, combining only the \(L_2\) and \(L_\infty\) statistics may still be insufficient to capture the full range of alternatives encountered in high dimensions. The \(L_2\) test is typically more effective under relatively dense but weak signals, whereas the \(L_\infty\) test is tailored to extremely sparse alternatives driven by a few large coordinates. Between these two extremes, there exists a broad regime of moderately sparse alternatives for which intermediate \(L_q\)-type procedures, such as \(L_4\) and \(L_6\), can be more sensitive. This observation suggests that a more flexible adaptive procedure should combine multiple \(L_q\)-norm based tests rather than relying solely on the two endpoints. To develop such a joint testing framework, it is first necessary to understand the dependence structure among the component statistics. In particular, the asymptotic independence among different \(L_q\)-based tests plays a key role, as it provides the theoretical foundation for constructing valid and powerful combination rules.

\begin{theorem}
\label{thm:indep}
Suppose Assumptions~\ref{ass:factor}--\ref{ass:growth} hold and \(H_0\) is true. Then, for every \(\z\in\mR^3\) and \(x\in\mR\),
\[
\Prb\!\Bigl(
\B_N^{-1/2}\Q_N\le \z,\,
M_N\le x
\Bigr)
-
\Prb\!\Bigl(\B_N^{-1/2}\Q_N\le \z\Bigr)
\Prb(M_N\le x)
\to 0.
\]
In other words, the Gaussian limit of the \((L_2,L_4,L_6)\) block and the Gumbel limit of \(L_\infty\) are asymptotically independent.
\end{theorem}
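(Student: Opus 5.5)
Under $H_0$, $\widehat\alpha_i$ satisfies $\sqrt{w_T}\,\widehat\alpha_i=\b^\T\bme_{i.}$, so $x_i=\sigma_i^{-1}\sum_t b_t\varepsilon_{it}$. By Assumption~\ref{ass:ICM} the vector $\x=(x_1,\dots,x_N)^\T$ is a weighted sum of $T$ i.i.d.\ vectors with covariance $\R$, and $\max_t|b_t|\le C_bT^{-1/2}$. The plan is to reduce everything to this oracle vector and then apply the Hsing / Feng--Jiang--Liu style argument for asymptotic independence of sum-type and max-type statistics.

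\textbf{Step 1 (reduction).} I would first replace $\Q_N$ by the oracle $\Q_N^{\cir}$, corrected by the deterministic centering differences, and replace $M_N$ by its oracle analogue $M_N^{\cir}=\max_i x_i^2-2\log N+\log\log N$. The approximation $\B_N^{-1/2}(\Q_N-\Q_N^{\cir}-c_N)=o_p(1)$ is exactly what is used in proving Theorem~\ref{thm:null-joint}, so I take it as given from that proof. For the max part, $t_i^2/x_i^2=\sigma_i^2/\widehat\sigma_i^2=1+O_p(\sqrt{\log N/T})$ uniformly in $i$, and $\max_ix_i^2=O_p(\log N)$. Hence $M_N-M_N^{\cir}=O_p(\log N\sqrt{\log N/T})=o_p(1)$, which follows from Assumption~\ref{ass:growth}. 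Conditioning on $\F$ (independent of the errors), it therefore suffices to prove the statement for $(\B_N^{-1/2}\Q_N^{\cir},M_N^{\cir})$ with $\b$ treated as fixed.

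\textbf{Step 2 (Gaussian approximation).} Next I would use a high-dimensional Gaussian approximation or Lindeberg replacement, made possible by the sub-Gaussian tails and $\max|b_t|\lesssim T^{-1/2}$, to pass from $\x$ to $\g\sim\mathcal N(\bzero,\R)$ for the joint event $\{\B_N^{-1/2}\Q_N^{\cir}\le\z,\ M_N^{\cir}\le x\}$. The indicator of $M\le x$ is smoothed as a soft-max, and the $Q$-indicator is smoothed as well. I would then bound the third-order Lindeberg terms using $T^{-1/2}$ together with polylog factors. The $(\log N)^6$ and $s_N$ rates in Assumptions~\ref{ass:dep}--\ref{ass:growth} should absorb these terms.

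\textbf{Step 3 (Gaussian case, main obstacle).} For Gaussian $\g$, I would fix a large $y$ and set $A_i=\{g_i^2>2\log N-\log\log N+y\}$. By inclusion--exclusion (Bonferroni),
\[
\Prb(Q\le z,\ M>y)\le\sum_{d=1}^{2k}(-1)^{d-1}\sum_{|I|=d}\Prb(Q\le z,\ A_I),
\]
with matching lower bounds. The key claim is
\[
\Prb(Q\le z,\ A_I)=\Prb(Q\le z)\Prb(A_I)(1+o(1))+o(\Prb(A_I))
\]
uniformly over index sets $I$ whose indices are mutually weakly correlated; the strongly correlated tuples have negligible total mass by the $s_N$ bound. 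To prove the claim, for $|I|=d$ I would remove from $Q$ all indices $j$ in the neighbourhood $S_I=\{j:\max_{i\in I}|r_{ij}|>(\log N)^{-C}\}$, whose cardinality is $O(s_N(\log N)^{Cq})$. Removing them changes $Q$ by $O_p(|S_I|\,\mathrm{polylog}/\sqrt N)=o_p(1)$, even conditionally on $A_I$, because on $A_I$ we have $g_i^{2a}\lesssim(\log N)^3$.

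For the remaining coordinates, I would write $\g_{S_I^c}=\Gamma\g_I+\tilde\g$ with $\tilde\g$ independent of $\g_I$, where $\|\Gamma\|$ entries are tiny. Then $Q_{S_I^c}(\g)-Q_{S_I^c}(\tilde\g)=o_p(1)$ uniformly given $A_I$. This needs a moment bound such as
\[
N^{-1/2}\sum_j|g_j|^{a-1}|(\Gamma\g_I)_j|\lesssim N^{1/2}(\log N)^{-C+1/2},
\]
which is not small in general. So the crude bound fails, and I would instead expand $x^a$ to obtain the linear term $N^{-1/2}\sum_j a\tilde g_j^{a-1}(\Gamma\g_I)_j$. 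This term has mean zero and variance $\lesssim\sum_j\Gamma_{ji}^2\log N\le\log N\sum_j r_{ij}^2/N$, which I would check against the $\kappa$ and $s_N$ conditions. This is the delicate step. Finally, Step 2 and Theorem~\ref{thm:null-joint} identify $\Prb(Q\le z)$, Theorem~\ref{thm:linf} gives $\sum_I\Prb(A_I)\to$ the Gumbel terms, and letting $k\to\infty$ gives the factorization.
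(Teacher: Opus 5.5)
\textbf{The gap is in Step 2.} Carry out the Lindeberg replacement over the $T$ summands $b_t\bet_t$, as your appeal to $\max_t|b_t|\lesssim T^{-1/2}$ suggests. For the $L_q$ block $S=N^{-1/2}\sum_i h(x_i)$, the third-order term at step $t$ contains the diagonal contribution
\[
\frac{\kappa_3(z_{11})}{6}\,b_t^3\,\E\Bigl[\varphi'(S)\,\psi\{F_\beta\}\,N^{-1/2}\sum_{i=1}^N c_i\,h'''(W_{t,i})\Bigr],
\qquad c_i=\sum_{k=1}^N a_{ik}^3,
\]
where $W_t$ is the hybrid vector. Under $H_0$ no symmetry is assumed, so $\kappa_3(z_{11})\neq0$ in general.

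If you bound this by derivative norms times polylogarithmic moments, you get order $\sqrt N$ per step. Summing gives $\sqrt N\sum_t|b_t|^3\asymp\sqrt{N/T}$, not $T^{-1/2}$ times a polylog. Assumption~\ref{ass:growth} explicitly allows $T\ll N\ll T^2$, where $\sqrt{N/T}\to\infty$. So no combination of the $(\log N)^6$ and $s_N$ rates absorbs this term, and Step 2 as written fails in the regime $N\gg T$ that Assumption~\ref{ass:growth} explicitly allows.

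The missing idea is the parity mechanism behind the paper's replacement step (Lemma~\ref{lem:no-third-moment} and the remark that odd contributions only enter in pairs). The block uses only even powers, so $h'''$ is an odd polynomial and $\E\,h'''(W_{t,i})=O(T^{-1/2})$. Split the expectation above as follows:
\begin{itemize}
\item the mean part $\E\{\varphi'\psi\}\cdot N^{-1/2}\sum_i c_i\E h'''(W_{t,i})$, which is $O(\sqrt N\,T^{-1/2})$;
\item a covariance part, bounded by Cauchy--Schwarz using $\var\{N^{-1/2}\sum_i c_i h'''(W_{t,i})\}=O(s_N)$ to leading order. This holds because odd polynomials have Hermite rank one and $\max_i\sum_j|r_{ij}|\le s_N$.
\end{itemize}
Summing over $t$, the total becomes $O(\sqrt N/T+\sqrt{s_N/T})$, which is exactly what Assumption~\ref{ass:growth} controls. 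The remaining third-order pieces of the block are of order $T^{-1/2}$ up to $s_N$ and polylog factors. The soft-max pieces are of order $\beta^2T^{-1/2}$ times a polylog. Both are fine with your bound, provided $\beta^{-1}\log N\to0$ and $\beta^2T^{-1/2}(\log N)^C\to0$.

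\textbf{Comparison with the paper on Steps 1 and 3.} Your Step 3 is a genuinely different and more complete route. The paper only trims the $O_p(1)$ exceedance coordinates from the sum and then appeals to the proof strategy of Feng, Jiang, Li and Liu (2024). You actually prove the Gaussian factorization, via:
\begin{itemize}
\item Bonferroni inequalities;
\item deletion of the neighbourhood $S_I$;
\item the regression $\g_{S_I^c}=\Gamma\g_I+\tilde\g$;
\item an expansion whose linear term has conditional mean zero because $h'$ is odd.
\end{itemize}
This makes explicit that independence comes from decorrelating the bulk from the exceedance coordinates, not from removing the exceedances themselves. In the variance of that linear term you dropped a factor $\max_j\sum_l|r_{jl}|\le s_N$, but $O(s_N^2\log N/N)$ still vanishes. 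Your Step 1 bound $M_N-M_N^{\cir}=O_p\{(\log N)^{3/2}T^{-1/2}\}$ is also more explicit than the paper's ``handled in the same way''.

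Two small corrections:
\begin{itemize}
\item The Bonferroni upper bound needs the odd truncation $2k-1$; truncating at $2k$ gives the lower bound.
\item Negligibility of the strongly correlated tuples does not follow from the $s_N$ bound alone; you need $\max_{i\neq j}|r_{ij}|\le 1-\delta$. For example, $N/2$ exactly duplicated pairs satisfy Assumption~\ref{ass:dep} with $q=0$ and $s_N=2$, yet $\sum_{\text{pairs}}\Prb(A_i\cap A_j)\not\to0$. The Gumbel limit in Theorem~\ref{thm:linf} tacitly needs the same condition.
\end{itemize}
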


The asymptotic independence result motivates the following four-way Cauchy combination:
\begin{equation}
T_{C,N}
=
\frac14\sum_{a\in\{2,4,6,\infty\}}
\tan\!\Bigl\{\pi(1/2-p_a)\Bigr\}.
\label{eq:CCT}
\end{equation}

\begin{corollary}
\label{cor:CCT}
Suppose Assumptions~\ref{ass:factor}--\ref{ass:growth} hold and \(H_0\) is true. Then the tail of \(T_{C,N}\) is asymptotically standard Cauchy. Equivalently, for every fixed threshold \(t\),
\[
\Prb(T_{C,N}>t)-\Prb(C>t)\to 0,
\]
where \(C\) is a standard Cauchy random variable.
\end{corollary}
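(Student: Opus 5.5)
The plan is to reduce the statement, via the continuous mapping theorem, to a statement about a limiting random variable. I would then identify the law of that limit by using the stability of the Cauchy family under averaging of independent components. First, by Theorems~\ref{thm:null-joint} and \ref{thm:Bhat}, $(T_{2,N},T_{4,N},T_{6,N})$ converges in distribution to $\bm Z=(Z_2,Z_4,Z_6)^\T\sim\mathcal N(\bzero,\bm\Omega)$. Here $\bm\Omega$ is the correlation matrix of $\B$; its diagonal is one because each $T_{a,N}$ is standardized by $\widehat B_{aa,N}$. By Theorem~\ref{thm:linf}, $M_N\dto G$ with $\Prb(G\le x)=\exp(-\pi^{-1/2}e^{-x/2})$. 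By Theorem~\ref{thm:indep}, together with the consistency of $\widehat\B_N$ (which lets me replace $\B_N^{-1/2}\Q_N$ by the feasible standardized vector via Slutsky), the joint limit of $(T_{2,N},T_{4,N},T_{6,N},M_N)$ is $(\bm Z,G)$ with $\bm Z$ independent of $G$. The $L_\infty$ $p$-value is $p_\infty=1-F_G(M_N)$ with $F_G$ continuous, and $p_a=1-\Phi(T_{a,N})$. Hence $T_{C,N}$ is a continuous function of $(T_{2,N},T_{4,N},T_{6,N},M_N)$ on the event where all $p$-values lie in $(0,1)$, and that event has limiting probability one. The continuous mapping theorem then gives $T_{C,N}\dto T_C$, where
\[
T_C=\tfrac14\bigl\{W_2+W_4+W_6+W_\infty\bigr\},\qquad W_a=\tan\{\pi(\Phi(Z_a)-1/2)\},\qquad W_\infty=\tan\{\pi(F_G(G)-1/2)\}.
\]
Because the Cauchy law is continuous, convergence in distribution yields $\Prb(T_{C,N}>t)\to\Prb(T_C>t)$ for every fixed $t$. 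So it suffices to show that $T_C$ is standard Cauchy.

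Second, I would handle the marginals and the $L_\infty$ piece. Since $\Phi(Z_a)$ and $F_G(G)$ are $\mathrm{Unif}(0,1)$, each $W_a$ and $W_\infty$ is exactly standard Cauchy. Moreover, $W_\infty$ is independent of $(W_2,W_4,W_6)$. Write $S=\tfrac13(W_2+W_4+W_6)$, so that $T_C=\tfrac34 S+\tfrac14 W_\infty$. Cauchy stability under independent convex combinations, checked through characteristic functions as $e^{-\frac34|u|}e^{-\frac14|u|}=e^{-|u|}$, shows that $T_C$ is standard Cauchy as soon as $S$ is standard Cauchy. The problem therefore reduces to the Gaussian block.

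Third, I would show that $S$ is standard Cauchy, which I expect to be the main obstacle, since $W_2,W_4,W_6$ are dependent through $\bm\Omega$. My first attempt would be to compute the characteristic function of $S$ by conditioning and use the representation $W_a=\tan\{\pi(\Phi(Z_a)-1/2)\}$. I would try to exploit a Pillai--Meng type identity: for correlated standard normals, weighted averages of Cauchy transforms remain standard Cauchy. I would verify it pairwise for each bivariate normal $(Z_a,Z_b)$ and then extend to the trivariate block by induction on the number of summands, at each step conditioning on the running average.

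If that exact identity fails for general $\bm\Omega$, the fallback is the tail argument of Liu and Xie. One shows $\Prb(W_a>t,W_b>t)=o(t^{-1})$ for $a\ne b$, using $|\Omega_{ab}|<1$, which holds because $\B$ is positive definite. This gives $\Prb(S>t)/\Prb(C>t)\to1$, and one would then need to check that the discrepancy at moderate $t$ vanishes under the null calibration. I flag this last step as the point where the fixed-$t$ claim requires the most care.
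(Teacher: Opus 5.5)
Your Steps~1 and~2 are sound. Theorem~\ref{thm:indep}, the marginal limits and Slutsky give joint convergence of $(T_{2,N},T_{4,N},T_{6,N},M_N)$ to $(\bm Z,G)$, with $\bm Z$ independent of $G$; this needs $\B_N\to\B$ positive definite, which the corollary leaves implicit. Continuous mapping then gives $\Prb(T_{C,N}>t)\to\Prb(T_C>t)$ for every $t$. The continuity used here is that of the law of $T_C$, which has a density because of the independent summand $\tfrac14W_\infty$; continuity of the Cauchy law is not what is needed. Your characteristic-function computation also correctly shows that $T_C$ is standard Cauchy if and only if $S$ is.

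The gap is Step~3, and it cannot be closed. The Pillai--Meng identity applies to averages $\sum_jw_jX_j/Y_j$, where $(X_j)_j$ and $(Y_j)_j$ are independent copies of one Gaussian vector. It does not transfer to Gaussian-copula transforms $W_a=\tan\{\pi(\Phi(Z_a)-1/2)\}$, and for these the pairwise identity you plan to verify is false.

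To see this, take a bivariate pair with correlation $\rho$, put $g(z)=\tan\{\pi(\Phi(z)-1/2)\}$ and $h(\rho)=\Prb\{\tfrac12(W_1+W_2)>1\}$.
\begin{itemize}
\item By Mehler's formula, $h(\rho)=\sum_{n\ge0}\rho^n\,\E[H_n(Z_1)H_n(Z_2)\ind\{g(Z_1)+g(Z_2)>2\}]/n!$ with $Z_1,Z_2$ independent. Each coefficient is at most one in absolute value, so $h$ is real-analytic on $(-1,1)$.
\item As $\rho\downarrow-1$, $g(Z_1)+g(Z_2)\to g(Z_1)+g(-Z_1)=0$ in probability, so $h(\rho)\to0$, whereas $\Prb(C>1)=1/4$.
\item Hence $h(\rho)=1/4$ only on a discrete subset of $(-1,1)$, which contains $\rho=0$. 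For every other $\rho$, positive values included, $\tfrac12(W_1+W_2)$ is not standard Cauchy.
\end{itemize}
The same argument works in the off-diagonal entries of $\bm\Omega$. Let $\bm\Omega$ approach the singular matrix with $Z_4=Z_2=-Z_6$; then $\Prb(S>1)\to\Prb(W_2>3)\neq1/4$. So $S$ is standard Cauchy only for $\bm\Omega$ in a proper analytic subset of the correlation matrices.

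In this problem the off-diagonal entries lie strictly between $0$ and $1$; for $\bmS=\I_N$ they are $\Omega_{24}=12/\sqrt{192}$, $\Omega_{26}=90/\sqrt{20340}$ and $\Omega_{46}=900/\sqrt{976320}$. So there is no exact identity to invoke. This matches Liu and Xie (2020), who have exactness only under independence or perfect dependence. Your induction would fail in any case, since $\tfrac12(W_2+W_4)$ is not a Cauchy transform of a variable jointly Gaussian with $Z_6$.

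Consequently only your fallback is available, and the discrepancy at moderate $t$ that you flag does not vanish. By your own Step~1, $\Prb(T_{C,N}>t)\to\Prb(T_C>t)$, an $N$-free number that in general differs from $\Prb(C>t)$, and nothing about the null calibration changes that. The Liu--Xie argument needs three ingredients: pairwise bivariate normality with $|\Omega_{ab}|<1$ (which follows from positive definiteness of $\B$), adaptation to one-sided $p$-values, and tail additivity for the independent summand $\tfrac14W_\infty$. What it yields is the double-limit tail statement
\[
\lim_{t\to\infty}\lim_{N\to\infty}\frac{\Prb(T_{C,N}>t)}{\Prb(C>t)}=1 .
\]
The paper gives no separate proof of the corollary. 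Its justification is exactly Theorem~\ref{thm:indep} combined with the Liu--Xie tail theory, which supports this tail version but not the stated ``equivalent'' fixed-$t$ formulation. The fixed-$t$ version is strictly stronger and fails for generic $\bm\Omega$. You should therefore prove the tail statement and drop the attempt to close the fixed-$t$ step.
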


The appeal of the Cauchy combination in \eqref{eq:CCT} is twofold. First, Theorem~\ref{thm:indep} shows that the Gaussian \((L_2,L_4,L_6)\) block and the extreme-value \(L_\infty\) component are asymptotically independent under \(H_0\). As a result, the null tail of \(T_{C,N}\) admits a simple standard Cauchy approximation, which avoids repeated resampling or high-dimensional simulation when computing the combined \(p\)-value. Second, the transformation \(p\mapsto \tan\{\pi(1/2-p)\}\) is heavy-tailed, so very small component \(p\)-values make dominant contributions to \(T_{C,N}\), while moderately small \(p\)-values can still accumulate additively. Therefore, the Cauchy combination is able to retain sensitivity to the most informative component test in a given sparsity regime, without being tied to a single endpoint statistic such as \(L_2\) or \(L_\infty\).

This feature is particularly useful in the present problem. The \(L_2\) statistic is more effective against relatively dense alternatives, \(L_\infty\) is tailored to extremely sparse alternatives, and intermediate \(L_q\)-statistics such as \(L_4\) and \(L_6\) can be advantageous under moderate sparsity. Hence a four-way Cauchy combination provides a natural adaptive device: it borrows strength from whichever component is most powerful under the underlying alternative, while remaining computationally simple and theoretically well calibrated. Compared with the minimum-\(p\) rule, the Cauchy combination is typically less brittle, because it is still driven by the smallest few \(p\)-values but can also incorporate corroborating evidence from multiple informative components; see \citet{liu2020cauchy} for the general Cauchy combination theory and \citet{arias2011global} for the role of extreme-value/minimum-\(p\) procedures in sparse signal detection.

\section{Simulation}
\label{sec:sim}

Our Monte Carlo design follows the general calibration strategy of \cite{PesaranYamagata2024}. For each asset $i=1,\ldots,N$ and time point $t=1,\ldots,T$, the return is generated from
\begin{equation}
Y_{it}=\alpha_i+\sum_{\ell=1}^{3}\beta_{i\ell} f_{\ell t}+\kappa u_{it},
\label{eq:sim_return}
\end{equation}
where $\kappa=6.5$ throughout, $f_t=(f_{1t},f_{2t},f_{3t})^{\T}$ denotes the three observed factors, and the disturbance component $u_{it}$ is decomposed as
\begin{equation}
u_{it}=\gamma_i v_t+\eta_{it}.
\label{eq:sim_u}
\end{equation}
Hence, the error term may contain both a latent common factor $v_t$ and cross-sectionally dependent idiosyncratic variation $\eta_{it}$.

The observed factors are calibrated to resemble the three Fama--French factors. Specifically, for each $\ell=1,2,3$,
\begin{equation}
f_{\ell t}=\rho_{f,\ell} f_{\ell,t-1}+e_{\ell t},
\qquad
e_{\ell t}=\sqrt{h_{\ell t}}\,\xi_{\ell t},
\qquad
\xi_{\ell t}\stackrel{\mathrm{iid}}{\sim}N(0,1),
\label{eq:sim_factor}
\end{equation}
with autoregressive coefficients \((\rho_{f,1},\rho_{f,2},\rho_{f,3})=(-0.1,\,0.2,\,-0.2)\), and conditional variances following GARCH$(1,1)$ recursions,
\begin{equation}
h_{\ell t}
=
\omega_\ell(1-\varrho_\ell-\varphi_\ell)
+\varrho_\ell h_{\ell,t-1}
+\varphi_\ell e_{\ell,t-1}^2,
\qquad \ell=1,2,3,
\label{eq:sim_garch}
\end{equation}
where \((\omega_1,\omega_2,\omega_3)=(20.25,\,6.33,\,5.98)\), \((\varrho_1,\varrho_2,\varrho_3)=(0.61,\,0.70,\,-0.31)\), \((\varphi_1,\varphi_2,\varphi_3)=(0.31,\,0.21,\,0.10)\). The corresponding factor loadings are drawn independently across $i$ according to
\[
\beta_{i1}\sim U(0.3,1.8),\qquad
\beta_{i2}\sim U(-1.0,1.0),\qquad
\beta_{i3}\sim U(-0.6,0.9).
\]

To study the effect of omitted common shocks, we generate the latent factor $\{v_t\}$ as an i.i.d.\ mean-zero, variance-one sequence. Its loading vector is constructed so that the strength of the latent factor is indexed by an exponent $\delta_\gamma$. More precisely,
\begin{equation}
\gamma_i\sim U(0.7,0.9), \qquad i=1,\ldots,\lfloor N^{\delta_\gamma}\rfloor,
\label{eq:sim_gamma_nonzero}
\end{equation}
and
\begin{equation}
\gamma_i=0, \qquad i=\lfloor N^{\delta_\gamma}\rfloor+1,\ldots,N.
\label{eq:sim_gamma_zero}
\end{equation}
After generating these values, we randomly permute the entries of $(\gamma_1,\ldots,\gamma_N)^{\T}$ before assigning them to the individual assets, so that the nonzero latent loadings are not systematically concentrated in any particular subset of units. In line with the theoretical discussion, we consider
\[
\delta_\gamma\in\{0,\;1/4,\;1/2\}.
\]

Cross-sectional dependence in the idiosyncratic component is introduced through a spatial autoregressive specification. Let $\bet_t=(\eta_{1t},\ldots,\eta_{Nt})^{\T}$. We generate
\begin{equation}
\eta_{it}
=
\psi\sum_{j=1}^{N} w_{ij}\eta_{jt}
+\sigma_{\eta i}\varepsilon_{\eta,it},
\qquad i=1,\ldots,N,
\label{eq:sim_eta_scalar}
\end{equation}
which yields the vector form
\begin{equation}
\bet_t
=
(\I_N-\psi \W)^{-1}\D_\eta \bme_{\eta,t},
\label{eq:sim_eta_vector}
\end{equation}
where $\D_\eta=\mathrm{diag}(\sigma_{\eta1},\ldots,\sigma_{\eta N})$ and
$\bme_{\eta,t}=(\varepsilon_{\eta,1t},\ldots,\varepsilon_{\eta,Nt})^{\T}$.
We examine two degrees of network dependence, i.e., \(\psi\in\{0,\;0.25\}\). The weight matrix $\W=(w_{ij})_{1\le i,j\le N}$ has a rook-type nearest-neighbor structure: interior units are linked to their immediate left and right neighbors with equal weight, boundary units are linked to their sole neighbor, and the rows are normalized so that $w_{ii}=0$ and $\sum_{j=1}^{N} w_{ij}=1$ for all $i$. Consequently, the benchmark case of cross-sectional independence corresponds to $\psi=0$ together with $\delta_\gamma=0$. We also allow for heteroskedasticity across assets by drawing
\[
\sigma_{\eta i}^2 \stackrel{\mathrm{iid}}{\sim} \frac{1+\chi^2_{2,i}}{3}.
\]
Two innovation distributions are considered for $\varepsilon_{\eta,it}$:
\begin{align}
\text{(i) Gaussian case:}\quad & \varepsilon_{\eta,it}\stackrel{\mathrm{iid}}{\sim} N(0,1), \nonumber\\
\text{(ii) heavy-tailed case:}\quad &
\varepsilon_{\eta,it}\stackrel{\mathrm{iid}}{\sim}
\frac{t_{8,it}}{\sqrt{8/(8-2)}},
\label{eq:sim_innovations}
\end{align}
where $t_{8,it}$ denotes a Student-$t$ random variable with $8$ degrees of freedom, standardized to have unit variance.

For each replication, all processes are generated over the extended time range \(t=-49,-48,\ldots,0,1,\ldots,T\) with initial values
\[
f_{\ell,-50}=0,\qquad h_{\ell,-50}=1,\qquad \ell=1,2,3.
\]
The first $50$ observations are discarded as burn-in, and inference is then based on the sample $\{(r_{it},f_t): t=1,\ldots,T\}$.

Tables~\ref{tab:size_normal_6methods0}--\ref{tab:size_t8_6methods} report the empirical rejection frequencies at the nominal $5\%$ level under four scenarios obtained by combining the two innovation distributions (Gaussian and standardized $t_8$) with the two levels of spatial dependence ($\psi=0$ and $\psi=0.25$). Several conclusions emerge consistently across all designs.

The simulation results show that, across the considered configurations, most procedures achieve satisfactory size control. In particular, the three sum-type tests and the two combination procedures generally produce empirical rejection frequencies reasonably close to the nominal level, with only mild finite-sample fluctuations across different values of $\delta_{\gamma}$, error distributions, and dependence settings. This indicates that the proposed calibration is broadly stable in the presence of moderate latent-factor contamination, heavy-tailed innovations, and spatial dependence.

Among the six methods, the $L_\infty$ test tends to be somewhat more liberal in a few challenging small-$T$/large-$N$ settings. This behavior is not unexpected, since the null calibration of a max-type statistic relies on an extreme-value approximation, whose convergence is typically slower than that of the corresponding sum-type statistics. Overall, however, the reported sizes remain well behaved, and the tables provide reassuring evidence that the proposed testing procedures have good finite-sample size performance over a wide range of scenarios.

\begin{table}[!htbp]
\centering
\caption{Empirical sizes (\%) of the six tests under normal errors with $\psi=0$.}
\label{tab:size_normal_6methods0}
\setlength{\tabcolsep}{6pt}
\renewcommand{\arraystretch}{0.9}
\begin{tabular}{llccccccccc}
\toprule
& & \multicolumn{3}{c}{$\delta_{\gamma}=0$} 
  & \multicolumn{3}{c}{$\delta_{\gamma}=1/4$}
  & \multicolumn{3}{c}{$\delta_{\gamma}=1/2$} \\
\cmidrule(lr){3-5}\cmidrule(lr){6-8}\cmidrule(lr){9-11}
Method & $(T,N)$  & 50 & 100 & 200 & 50 & 100 & 200 & 50 & 100 & 200 \\
\midrule
\multirow{3}{*}{$L_2$}
& 60  & 6.9 & 5.6 & 5.0 & 6.3 & 5.3 & 5.7 & 6.7 & 6.3 & 6.4 \\
& 120 & 6.4 & 5.1 & 5.4 & 5.9 & 5.1 & 5.4 & 5.6 & 5.1 & 6.5 \\
& 240 & 5.5 & 5.1 & 5.3 & 5.1 & 6.3 & 5.7 & 6.4 & 5.8 & 6.1 \\
\midrule
\multirow{3}{*}{$L_4$}
& 60  & 7.0 & 6.6 & 6.5 & 6.7 & 6.4 & 7.4 & 7.0 & 7.8 & 7.1 \\
& 120 & 6.4 & 5.9 & 6.9 & 6.4 & 5.8 & 7.1 & 6.7 & 6.4 & 6.6 \\
& 240 & 6.5 & 6.4 & 5.6 & 6.7 & 5.9 & 5.9 & 6.8 & 6.4 & 6.2 \\
\midrule
\multirow{3}{*}{$L_6$}
& 60  & 5.8 & 5.8 & 7.1 & 5.8 & 6.2 & 7.3 & 5.2 & 6.3 & 6.3 \\
& 120 & 5.2 & 5.3 & 6.7 & 5.0 & 4.8 & 7.0 & 5.7 & 6.2 & 6.2 \\
& 240 & 5.3 & 6.0 & 5.9 & 5.7 & 5.2 & 5.6 & 5.6 & 5.4 & 5.3 \\
\midrule
\multirow{3}{*}{$L_\infty$}
& 60  & 7.0 & 8.1 & 9.9 & 7.6 & 8.1 & 9.6 & 7.2 & 8.2 & 8.4 \\
& 120 & 5.1 & 5.2 & 6.7 & 5.0 & 5.2 & 7.1 & 5.6 & 6.0 & 5.9 \\
& 240 & 4.7 & 5.1 & 5.4 & 5.2 & 4.3 & 5.1 & 4.9 & 5.0 & 4.6 \\
\midrule
\multirow{3}{*}{minP}
& 60  & 8.2 & 7.7 & 8.4 & 7.4 & 8.2 & 8.3 & 7.8 & 8.2 & 8.8 \\
& 120 & 6.8 & 5.9 & 7.0 & 5.8 & 5.3 & 6.9 & 6.4 & 6.3 & 6.9 \\
& 240 & 6.0 & 6.0 & 5.7 & 6.0 & 5.7 & 5.7 & 6.6 & 6.6 & 6.2 \\
\midrule
\multirow{3}{*}{CC}
& 60  & 7.6 & 7.2 & 8.5 & 7.2 & 7.6 & 8.3 & 7.2 & 7.6 & 8.1 \\
& 120 & 6.7 & 6.3 & 7.6 & 6.0 & 5.6 & 7.6 & 6.4 & 7.0 & 7.4 \\
& 240 & 6.6 & 7.0 & 6.5 & 6.9 & 6.2 & 6.4 & 7.1 & 6.7 & 6.4 \\
\bottomrule
\end{tabular}
\end{table}

\begin{table}[!htbp]
\centering
\caption{Empirical sizes (\%) of the six tests under $t_8$ errors and $\psi=0$.}
\label{tab:size_t8_6methods0}
\setlength{\tabcolsep}{6pt}
\renewcommand{\arraystretch}{0.9}
\begin{tabular}{llccccccccc}
\toprule
& & \multicolumn{3}{c}{$\delta_{\gamma}=0$} 
  & \multicolumn{3}{c}{$\delta_{\gamma}=1/4$}
  & \multicolumn{3}{c}{$\delta_{\gamma}=1/2$} \\
\cmidrule(lr){3-5}\cmidrule(lr){6-8}\cmidrule(lr){9-11}
Method & $(T,N)$  & 50 & 100 & 200 & 50 & 100 & 200 & 50 & 100 & 200 \\
\midrule
\multirow{3}{*}{$L_2$}
& 60  & 5.8 & 5.8 & 5.4 & 5.9 & 6.3 & 5.3 & 6.4 & 6.7 & 6.4 \\
& 120 & 5.7 & 6.0 & 5.1 & 5.9 & 6.2 & 4.7 & 6.2 & 7.0 & 5.2 \\
& 240 & 5.1 & 6.4 & 6.0 & 5.1 & 6.6 & 6.1 & 5.7 & 5.7 & 6.2 \\
\midrule
\multirow{3}{*}{$L_4$}
& 60  & 5.7 & 6.1 & 6.2 & 5.9 & 6.9 & 5.8 & 5.7 & 6.8 & 6.4 \\
& 120 & 6.4 & 5.6 & 6.2 & 6.2 & 5.8 & 6.0 & 5.9 & 6.4 & 5.2 \\
& 240 & 6.6 & 6.3 & 6.8 & 6.4 & 6.2 & 6.4 & 6.4 & 6.4 & 6.2 \\
\midrule
\multirow{3}{*}{$L_6$}
& 60  & 5.0 & 5.3 & 6.2 & 4.3 & 5.7 & 5.8 & 4.2 & 5.8 & 5.7 \\
& 120 & 4.6 & 4.5 & 5.9 & 4.2 & 5.1 & 6.0 & 5.0 & 5.3 & 6.1 \\
& 240 & 4.9 & 5.1 & 5.8 & 5.0 & 5.0 & 6.2 & 4.9 & 5.1 & 5.4 \\
\midrule
\multirow{3}{*}{$L_\infty$}
& 60  & 6.2 & 7.0 & 7.8 & 5.4 & 7.5 & 8.3 & 5.8 & 6.9 & 8.1 \\
& 120 & 4.4 & 5.7 & 6.4 & 4.7 & 5.5 & 7.1 & 4.6 & 5.6 & 6.9 \\
& 240 & 4.3 & 4.6 & 4.7 & 4.1 & 4.6 & 4.7 & 4.3 & 4.3 & 4.3 \\
\midrule
\multirow{3}{*}{minP}
& 60  & 7.0 & 7.5 & 8.2 & 6.8 & 8.1 & 7.2 & 6.6 & 8.3 & 8.3 \\
& 120 & 5.8 & 6.0 & 6.8 & 5.8 & 6.6 & 6.7 & 6.2 & 7.0 & 6.7 \\
& 240 & 5.4 & 6.6 & 5.8 & 5.7 & 6.7 & 5.8 & 5.9 & 6.1 & 6.2 \\
\midrule
\multirow{3}{*}{CC}
& 60  & 6.4 & 7.0 & 7.8 & 6.4 & 6.9 & 7.1 & 6.5 & 8.0 & 7.4 \\
& 120 & 6.2 & 6.1 & 6.7 & 5.9 & 6.4 & 6.9 & 6.4 & 6.7 & 6.7 \\
& 240 & 6.2 & 6.8 & 6.7 & 6.2 & 6.7 & 6.6 & 6.3 & 6.1 & 6.7 \\
\bottomrule
\end{tabular}%
\end{table}

\begin{table}[!htbp]
\centering
\caption{Empirical sizes (\%) of the six tests under normal errors and $\psi=0.25$.}
\label{tab:size_normal_6methods}
\setlength{\tabcolsep}{6pt}
\renewcommand{\arraystretch}{0.9}
\begin{tabular}{llccccccccc}
\toprule
& & \multicolumn{3}{c}{$\delta_{\gamma}=0$} 
  & \multicolumn{3}{c}{$\delta_{\gamma}=1/4$}
  & \multicolumn{3}{c}{$\delta_{\gamma}=1/2$} \\
\cmidrule(lr){3-5}\cmidrule(lr){6-8}\cmidrule(lr){9-11}
Method & $(T,N)$  & 50 & 100 & 200 & 50 & 100 & 200 & 50 & 100 & 200 \\
\midrule
\multirow{3}{*}{$L_2$}
& 60  & 7.9 & 6.1 & 6.6 & 7.6 & 6.6 & 6.8 & 7.8 & 7.1 & 7.5 \\
& 120 & 6.7 & 6.5 & 5.5 & 5.8 & 5.4 & 6.2 & 6.2 & 5.7 & 7.6 \\
& 240 & 5.6 & 5.9 & 5.5 & 6.1 & 6.3 & 5.9 & 7.3 & 5.7 & 6.3 \\
\midrule
\multirow{3}{*}{$L_4$}
& 60  & 7.6 & 6.2 & 8.1 & 7.5 & 6.4 & 7.8 & 7.1 & 7.8 & 7.0 \\
& 120 & 6.4 & 6.8 & 7.6 & 6.4 & 5.9 & 7.4 & 7.1 & 7.1 & 7.0 \\
& 240 & 6.8 & 6.5 & 5.6 & 7.2 & 5.9 & 5.8 & 7.4 & 6.0 & 6.9 \\
\midrule
\multirow{3}{*}{$L_6$}
& 60  & 6.3 & 5.4 & 7.5 & 6.4 & 5.5 & 6.7 & 5.6 & 6.7 & 5.9 \\
& 120 & 5.3 & 6.1 & 6.8 & 5.2 & 5.3 & 6.7 & 5.9 & 6.3 & 6.4 \\
& 240 & 5.6 & 5.7 & 6.1 & 6.2 & 5.1 & 6.0 & 5.8 & 5.4 & 6.0 \\
\midrule
\multirow{3}{*}{$L_\infty$}
& 60  & 7.3 & 7.4 & 10.1 & 6.8 & 7.8 & 8.8 & 7.0 & 8.4 & 7.6 \\
& 120 & 5.3 & 5.5 & 6.4 & 5.2 & 5.4 & 5.9 & 5.2 & 6.2 & 5.6 \\
& 240 & 5.1 & 4.9 & 4.4 & 5.1 & 4.4 & 4.9 & 5.1 & 4.9 & 4.7 \\
\midrule
\multirow{3}{*}{minP}
& 60  & 8.9 & 7.5 & 9.8 & 8.5 & 8.1 & 8.8 & 8.3 & 8.6 & 8.9 \\
& 120 & 7.0 & 6.9 & 7.0 & 6.2 & 6.4 & 6.8 & 7.1 & 6.8 & 7.1 \\
& 240 & 6.2 & 6.3 & 5.9 & 6.7 & 6.0 & 6.1 & 7.2 & 6.2 & 6.6 \\
\midrule
\multirow{3}{*}{CC}
& 60  & 8.6 & 6.9 & 9.4 & 8.3 & 7.4 & 8.7 & 7.7 & 8.5 & 8.2 \\
& 120 & 6.8 & 7.3 & 7.5 & 6.4 & 6.8 & 7.0 & 7.1 & 7.2 & 7.7 \\
& 240 & 6.6 & 7.1 & 6.2 & 7.5 & 6.2 & 6.5 & 7.6 & 6.2 & 7.0 \\
\bottomrule
\end{tabular}%
\end{table}

\begin{table}[!htbp]
\centering
\caption{Empirical sizes (\%) of the six tests under $t_8$ errors and $\psi=0.25$.}
\label{tab:size_t8_6methods}
\setlength{\tabcolsep}{6pt}
\renewcommand{\arraystretch}{0.9}
\begin{tabular}{llccccccccc}
\toprule
& & \multicolumn{3}{c}{$\delta_{\gamma}=0$} 
  & \multicolumn{3}{c}{$\delta_{\gamma}=1/4$}
  & \multicolumn{3}{c}{$\delta_{\gamma}=1/2$} \\
\cmidrule(lr){3-5}\cmidrule(lr){6-8}\cmidrule(lr){9-11}
Method & $(T,N)$ & 50 & 100 & 200 & 50 & 100 & 200 & 50 & 100 & 200 \\
\midrule

\multirow{3}{*}{$L_2$}
& 60  & 6.3 & 7.2 & 6.3 & 6.9 & 7.6 & 5.9 & 7.0 & 7.6 & 7.0 \\
& 120 & 6.8 & 6.2 & 5.9 & 6.7 & 6.7 & 6.0 & 6.9 & 7.8 & 5.9 \\
& 240 & 5.6 & 7.1 & 6.7 & 6.0 & 7.1 & 6.5 & 6.3 & 6.6 & 6.2 \\
\midrule

\multirow{3}{*}{$L_4$}
& 60  & 6.6 & 7.1 & 6.9 & 6.3 & 7.6 & 6.3 & 6.6 & 8.2 & 6.6 \\
& 120 & 6.4 & 6.4 & 6.3 & 6.7 & 6.5 & 6.7 & 6.0 & 6.8 & 6.1 \\
& 240 & 6.0 & 6.7 & 7.3 & 6.3 & 6.8 & 6.9 & 6.7 & 6.3 & 6.5 \\
\midrule

\multirow{3}{*}{$L_6$}
& 60  & 4.8 & 5.4 & 6.1 & 4.2 & 6.1 & 5.8 & 4.4 & 6.8 & 6.1 \\
& 120 & 5.3 & 5.1 & 5.6 & 5.1 & 5.4 & 6.3 & 4.8 & 5.7 & 5.6 \\
& 240 & 4.7 & 5.7 & 6.3 & 4.8 & 5.7 & 6.2 & 5.2 & 5.0 & 5.4 \\
\midrule

\multirow{3}{*}{$L_\infty$}
& 60  & 5.4 & 7.6 & 8.6 & 5.0 & 7.6 & 7.8 & 5.7 & 7.3 & 8.0 \\
& 120 & 4.9 & 5.3 & 6.2 & 4.4 & 5.1 & 6.2 & 4.6 & 5.1 & 6.2 \\
& 240 & 4.2 & 4.9 & 4.8 & 4.5 & 5.3 & 4.3 & 4.2 & 4.6 & 4.2 \\
\midrule

\multirow{3}{*}{minP}
& 60  & 7.3 & 8.9 & 8.8 & 7.0 & 9.4 & 8.0 & 7.4 & 9.8 & 9.2 \\
& 120 & 6.2 & 7.0 & 7.1 & 7.1 & 6.5 & 6.9 & 6.4 & 7.2 & 6.2 \\
& 240 & 5.7 & 6.9 & 6.6 & 5.9 & 6.8 & 5.9 & 6.2 & 6.6 & 6.2 \\
\midrule

\multirow{3}{*}{CC}
& 60  & 7.0 & 7.8 & 8.0 & 6.8 & 8.0 & 7.2 & 6.8 & 9.1 & 8.2 \\
& 120 & 6.4 & 6.8 & 7.0 & 6.9 & 7.0 & 7.1 & 6.3 & 6.9 & 6.6 \\
& 240 & 5.9 & 7.0 & 7.0 & 6.4 & 7.0 & 6.6 & 6.6 & 6.0 & 6.6 \\
\bottomrule
\end{tabular}%
\end{table}

For the power study, we consider alternatives for
$\bal=(\alpha_1,\ldots,\alpha_N)^{\T}$ with exactly
$n$ nonzero components. Specifically, letting
$S_\alpha\subset\{1,\ldots,N\}$ denote the support set with
$|S_\alpha|=n$, we generate
\[
\alpha_i=\frac{4}{n^{1/2.2}}\,s_i,
\quad i\in S_\alpha,
\qquad
\alpha_i=0,
\quad i\notin S_\alpha.
\]
 In the baseline
setting, we take $S_\alpha=\{1,\ldots,n\}$ and use alternating
signs $s_i\in\{1,-1\}$ over the active coordinates. Thus, as
$n$ increases, the individual nonzero signals become weaker,
allowing us to assess the power of different tests against sparse alternatives
with varying degrees of sparsity. 

Here we fix $(T,N)=(120,200)$ and consider three levels of latent factor strength, $\delta_{\gamma}\in\{0,1/4,1/2\}$. The corresponding power curves are displayed in Figures~\ref{fig1}--\ref{fig3}, where suffixes $1$, $2$, and $3$ represent $\delta_{\gamma}=0$, $1/4$, and $1/2$, respectively.

Overall, the three figures show a very similar ordering of the six procedures across the different values of $\delta_{\gamma}$. When the alternative is extremely sparse, namely when only a very small number of $\alpha_i$'s are nonzero, the higher-order and max-type procedures are clearly more effective than the sum-type statistic. In particular, $L_{\infty}$ and $L_6$ deliver the highest power at the very sparse end, while minP and CC also perform very competitively and remain close to the best individual method. As the number of nonzero signals increases, the advantage gradually shifts away from the extreme-sparsity-oriented procedures. In the moderately sparse region, $L_4$ becomes particularly attractive: it has consistently strong power, often close to the best method, and provides a simple and stable choice when one prefers not to rely on combining multiple tests. In this sense, $L_4$ offers a good compromise between sensitivity to sparse signals and robustness across a broader range of alternatives.

A more refined comparison between the two combination methods shows that CC is generally preferable to minP in the moderately sparse regime. Although the two procedures are very close overall, CC tends to have a visible advantage when the signal is neither extremely sparse nor very dense, suggesting that combining information from all component tests is beneficial in this intermediate region. By contrast, when the alternative becomes quite dense, minP is typically slightly stronger than CC, although the gap is not large. Thus, the relative comparison between CC and minP depends on the sparsity level: CC is more favorable in moderate sparsity, whereas minP can have a small edge once the signal becomes sufficiently diffuse.

For larger values of $n$, the $L_2$ statistic eventually becomes the most powerful procedure. This is expected, because under the present design each nonzero intercept is scaled by $n^{-1/2.2}$, so the alternative becomes less sparse but also weaker componentwise as $n$ increases. Such alternatives are more favorable to sum-type aggregation, and hence $L_2$ dominates in the denser regime. The impact of increasing $\delta_{\gamma}$ from $0$ to $1/4$ and then to $1/2$ is relatively mild: all methods experience only a modest loss of power, and the overall ranking of the procedures remains essentially unchanged. Taken together, the results suggest that $L_{\infty}$ and $L_6$ are most effective for very sparse alternatives, CC is especially competitive in the moderately sparse region, minP becomes slightly preferable to CC in the denser region, and $L_4$ serves as a particularly appealing simple choice when a single non-combined procedure is desired.

\begin{figure}[htb]
	\centering
	\includegraphics[width=\linewidth]{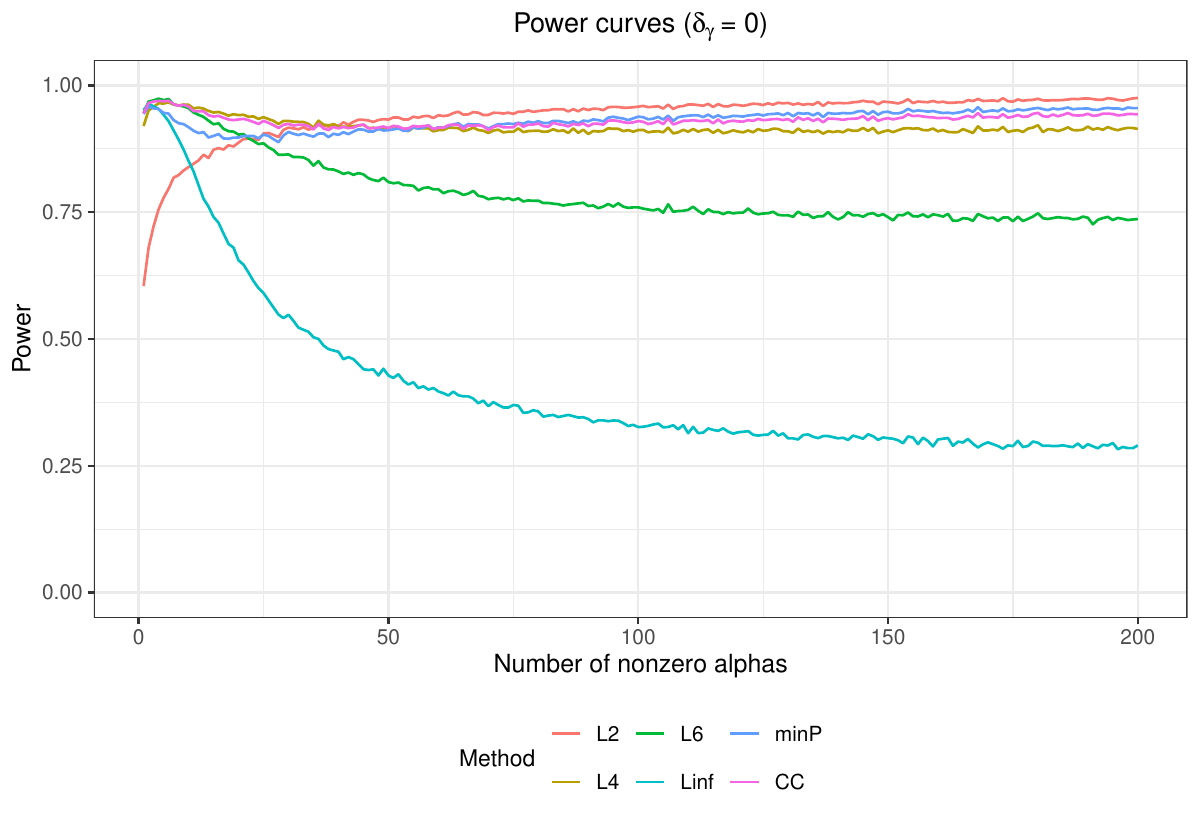}
 \caption{Power curves of six methods with $\delta_\gamma=0$ and $(T,N)=(120,200)$.}\label{fig1}
\end{figure}

\begin{figure}[htb]
	\centering
	\includegraphics[width=\linewidth]{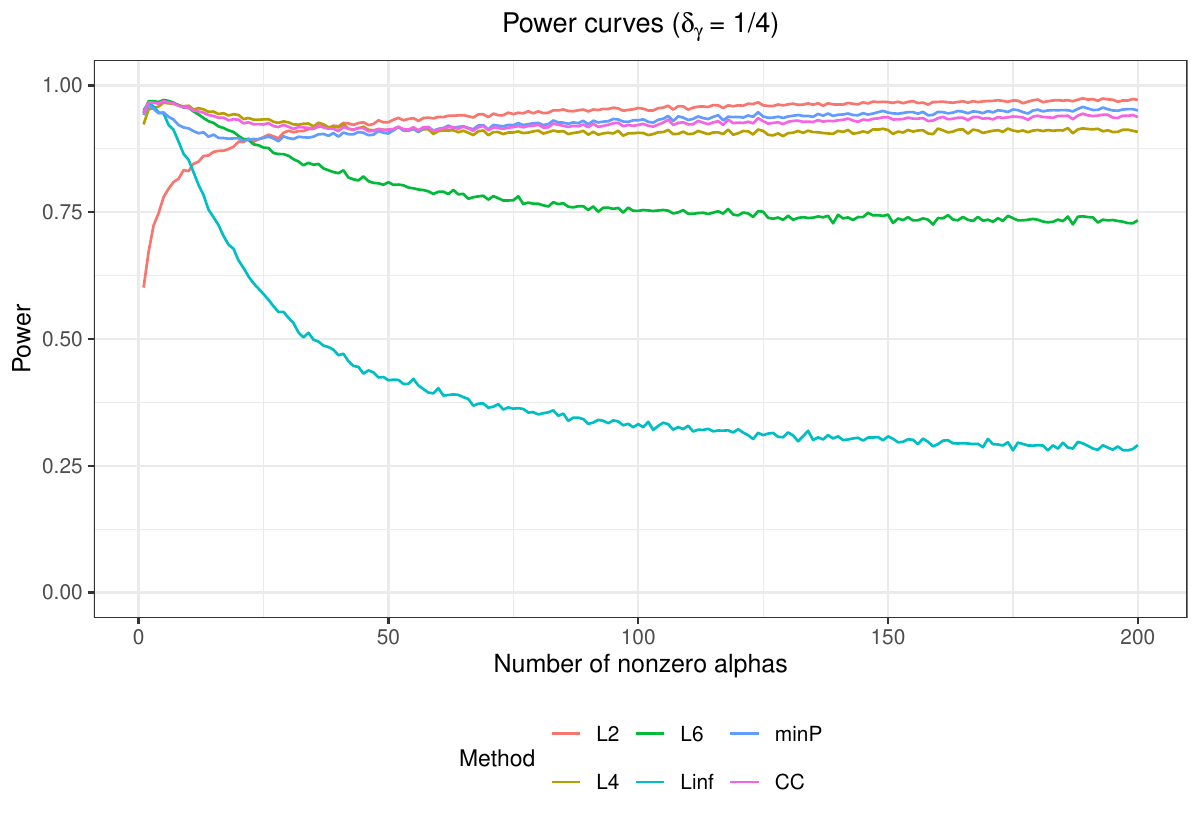}
 \caption{Power curves of six methods with $\delta_\gamma=1/4$ and $(T,N)=(120,200)$.}\label{fig2}
\end{figure}

\begin{figure}[htb]
	\centering
	\includegraphics[width=\linewidth]{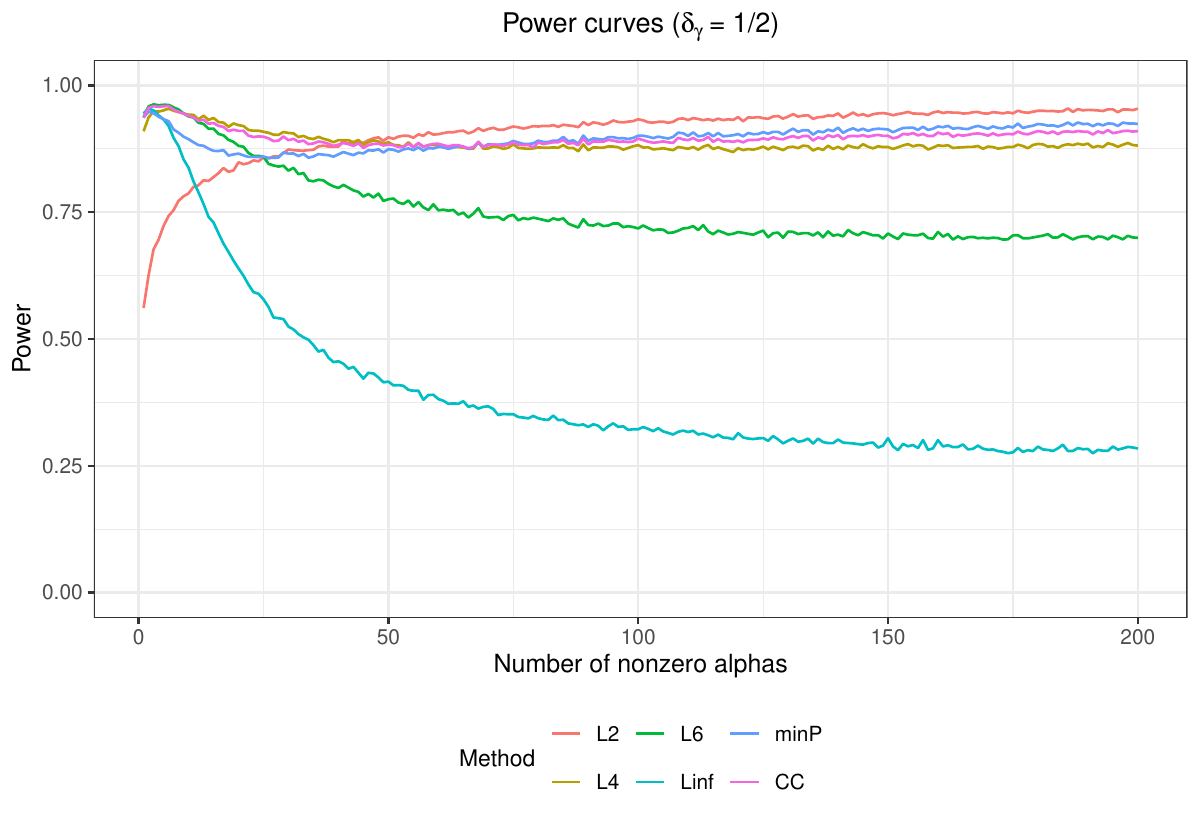}
 \caption{Power curves of six methods with $\delta_\gamma=1/2$ and $(T,N)=(120,200)$.}\label{fig3}
\end{figure}

\section{Real Data Application}
\label{sec:empirical_capm}

In this section, we examine whether the CAPM can account for the cross section of U.S. equity returns over time. Our analysis is based on monthly stock price data for S\&P 500 constituents from January 2005 to January 2024. Monthly stock returns are computed from month-end prices, and the factor data are matched with the corresponding monthly observations from the Fama--French data set. Since the CAPM is nested in the Fama--French factor structure, we use only the market excess return, \( \text{Mkt}-\text{RF} \), together with the risk-free rate \( \text{RF} \), and form the excess return of asset \(i\) in month \(t\) as
\[
Y_{it}=R_{it}-RF_t.
\]
For each asset \(i\), the CAPM over a given subsample is
\[
Y_{it}=\alpha_i+\beta_i(\text{Mkt}-\text{RF})_t+\varepsilon_{it},
\qquad t=1,\dots,T,
\]
and the null hypothesis of interest is
\[
H_0:\alpha_1=\cdots=\alpha_N=0.
\]

To investigate the time variation in pricing performance, we adopt a rolling-window design. For a given window length \(T\), we move the window forward one month at a time and apply the six tests developed in this paper to each subsample. These are the \(L_2\), \(L_4\), \(L_6\), and \(L_\infty\) tests, together with the two combination procedures minP and CC. The implementation follows exactly the algorithm used in our simulation study. In each window, we first keep only those securities with complete monthly return observations over the entire window, so the cross-sectional dimension is allowed to vary over time. We proceed only when at least 100 securities remain in the window. We then estimate the CAPM for each retained security by ordinary least squares, compute the studentized alpha statistics, estimate the residual dependence structure using the same thresholding rule as in the baseline implementation, and finally obtain the six \(p\)-values. This procedure yields a sequence of rolling-window \(p\)-values for each method, which allows us to assess both the timing and the frequency of CAPM rejections.

Figure~\ref{fig:capm_pvalue_curves} plots the resulting rolling \(p\)-value paths, while Table~\ref{tab:capm_rej_window} summarizes the rejection frequencies at the 5\% level for window lengths \(T=12\times(5{:}10)\), namely \(T=60,72,84,96,108,\) and \(120\). 

Several conclusions emerge from the CAPM application. First, the overall evidence points to systematic departures from the zero-alpha restriction over a substantial part of the sample period. Among the individual \(L_q\)-type procedures, the \(L_2\) and \(L_4\) tests are the most informative and stable, whereas the \(L_\infty\) test is uniformly less sensitive. This ranking is consistent with our simulation findings. In particular, the relatively weak performance of \(L_\infty\) suggests that the empirical departures from the CAPM are not driven primarily by a very small number of extreme alphas. Instead, the stronger performance of \(L_2\), \(L_4\), and the two combination procedures indicates a more moderately sparse or diffuse pattern of mispricing across the cross section. The behavior of \(L_6\) lies between these two extremes, which is also broadly in line with the simulation evidence that higher-order norms become more competitive when the signal is stronger but may be less stable than lower-order norms in moderate regimes.

Second, the comparison across methods further highlights the value of combination-based inference. The minP and CC procedures remain highly competitive throughout, and their performance is close to, or slightly better than, that of the best individual \(L_q\)-based tests over a range of window lengths. This pattern accords with the motivation of the combined procedures: when the underlying sparsity level is unknown, combining information from several norms offers a more robust strategy than relying on any single test alone.

Third, the rolling-window \(p\)-value plots provide useful time-series evidence on when the CAPM fails. For \(L_2\), \(L_4\), minP, and CC, the \(p\)-values fall below the 5\% threshold over an extended interval rather than only at isolated dates, indicating that the rejection of the CAPM is persistent rather than episodic. Moreover, the low-\(p\)-value region broadly covers the COVID-19 period, especially for the intermediate and longer window lengths, which suggests that the CAPM has difficulty explaining stock returns during that episode of severe market disruption. At the same time, the rejection region is not confined to the pandemic alone, implying that the model misspecification is more structural than event-specific. Overall, these empirical findings reinforce the message from the theory and simulations: lower- and moderate-order \(L_q\) tests, together with their combinations, provide the most reliable evidence against the null when the alternative is neither extremely sparse nor fully dense.

\begin{figure}[htb]
	\centering
	\includegraphics[width=\linewidth]{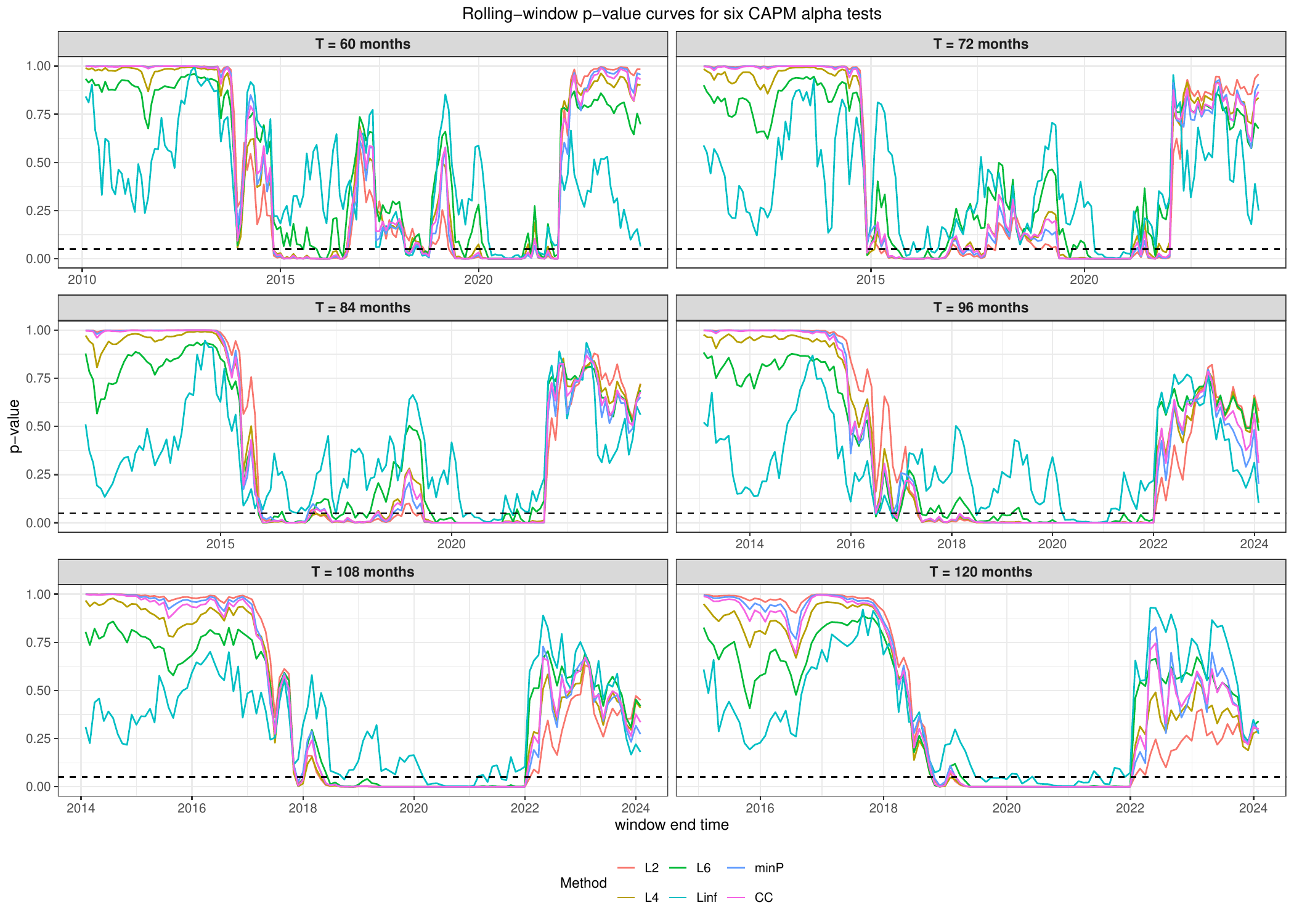}
 \caption{p-value paths of the six tests over rolling windows.}\label{fig:capm_pvalue_curves}
\end{figure}

\begin{table}[!htbp]
\centering
\caption{Rejection rates (\%) of the six tests under the rolling-window CAPM analysis for different window lengths \(T\).}
\label{tab:capm_rej_window}
\setlength{\tabcolsep}{8pt}
\renewcommand{\arraystretch}{1.1}
\begin{tabular}{lcccccc}
\toprule
Method & \(T=60\) & \(T=72\) & \(T=84\) & \(T=96\) & \(T=108\) & \(T=120\) \\
\midrule
\(L_2\)                & 33.73 & 38.85 & 46.90 & 42.86 & 39.67 & 34.86 \\
\(L_4\)                & 33.14 & 33.12 & 45.52 & 42.86 & 38.84 & 35.78 \\
\(L_6\)                & 18.93 & 19.11 & 31.03 & 36.84 & 37.19 & 33.03 \\
\(L_\infty\)           & 13.02 & 11.46 &  8.28 & 10.53 & 12.40 & 19.27 \\
minP                   & 36.09 & 35.67 & 42.07 & 42.86 & 37.19 & 33.94 \\
CC                     & 34.91 & 33.76 & 42.07 & 43.61 & 38.02 & 33.94 \\
\bottomrule
\end{tabular}
\end{table}

\section{Conclusion}
\label{sec:conc}

This paper studies testing alpha in high-dimensional linear factor pricing models. We develop a family of \(L_q\)-type procedures, including the \(L_2\), \(L_4\), \(L_6\), and \(L_\infty\) tests, establish their asymptotic null behavior, and show how their power depends on the sparsity of the alternative. We further propose an adaptive Cauchy combination test, which integrates information from different \(L_q\)-based procedures and performs well across a broad range of alternatives. Overall, the paper provides a unified framework for high-dimensional alpha testing under linear factor pricing models.

There are several important directions for future research. First, the current analysis is developed under temporally independent observations. In many financial applications, however, asset returns exhibit serial dependence and more general dynamic dependence structures. It would therefore be of substantial interest to extend the proposed \(L_q\)-based and adaptive testing framework to dependent observations; see, for example, \citet{MaFengWangBao2024Dep}. Second, the present paper focuses on a static linear factor pricing model. In many empirical applications, factor loadings and risk premia may vary over time with conditioning information or state variables. Extending the present high-dimensional alpha testing framework to conditional time-varying factor models is therefore a natural and important topic for future research; see \citet{FersonHarvey1991,JagannathanWang1996,LettauLudvigson2001,MaLanSuTsai2020,MaFengWangBao2024}. We leave these problems for future research.

\section{Appendix}

\subsection{Useful Lemmas}
\label{app:prelim}

This appendix collects the simple bounds that drive the remainder calculations. Recall \(\b=(b_1,\ldots,b_T)^{\T}:={\rm M}_{\F}\bone_T/\sqrt{w_T}\). Under Assumption~\ref{ass:factor}, \(\sum_t b_t^2=1\) and \(\max_t|b_t|\le C_bT^{-1/2}\). For every component index \(k\), define
\[
s_k=\sum_{t=1}^T b_t z_{kt}.
\]
Under Assumption \ref{ass:ICM}, the variables \(s_1,\dots,s_N\) are independent and identically distributed over \(k\).

\begin{lemma}[Powers of the weight vector]
\label{lem:bpower}
Under Assumption \ref{ass:factor}, for every integer \(r\ge 2\),
\[
\sum_{t=1}^T|b_t|^r\le C_rT^{1-r/2}.
\]
\end{lemma}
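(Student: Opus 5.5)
We will derive the bound by pulling out the maximal weight and using the normalization, both of which are supplied by Assumption~\ref{ass:factor}(iii): \(\sum_{t=1}^T b_t^2=1\) and \(\max_{1\le t\le T}|b_t|\le C_bT^{-1/2}\). No probabilistic input is needed; the argument is deterministic, holding on the event where \(\F^\T\F\) is invertible and these conditions are in force.

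For any integer \(r\ge 2\), split each summand as \(|b_t|^r=|b_t|^{r-2}\,b_t^2\) and bound the first factor uniformly in \(t\):
\[
\sum_{t=1}^T|b_t|^r\le \Bigl(\max_{1\le t\le T}|b_t|\Bigr)^{r-2}\sum_{t=1}^T b_t^2\le \bigl(C_bT^{-1/2}\bigr)^{r-2}\cdot 1=C_b^{\,r-2}\,T^{1-r/2}.
\]
The lemma therefore holds with \(C_r=C_b^{\,r-2}\), which depends only on \(r\) and \(C_b\). In the case \(r=2\), the bound reduces to the normalization itself, with \(C_2=1\).

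There is no real obstacle here. The only point to make explicit is that the constant is uniform in \(T\). If one prefers to deduce the max-bound from first principles rather than take it as assumed (\(w_T\asymp T\) by the eigenvalue condition, and \(|({\rm M}_{\F}\bone_T)_t|\) bounded via the tail conditions on \(\f_t\)), that would add a probabilistic layer. Since Assumption~\ref{ass:factor}(iii) states the bound directly, we use it as given.
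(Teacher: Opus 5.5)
Your proof is correct and is exactly the paper's argument: write \(|b_t|^r=|b_t|^{r-2}b_t^2\), bound the first factor by \((C_bT^{-1/2})^{r-2}\), and use \(\sum_t b_t^2=1\). This gives \(C_r=C_b^{\,r-2}\). Your remark that the bound is deterministic given the conditions in Assumption~\ref{ass:factor}(iii) is also accurate.
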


\begin{proof}
Using \(\sum_t b_t^2=1\) and \(\max_t|b_t|\le C_bT^{-1/2}\),
\[
\sum_{t=1}^T|b_t|^r
=
\sum_{t=1}^T|b_t|^{r-2}b_t^2
\le
\Bigl(\max_{1\le t\le T}|b_t|\Bigr)^{r-2}\sum_{t=1}^Tb_t^2
\le
C_b^{r-2}T^{-(r-2)/2}.
\]
Since \(-(r-2)/2=1-r/2\), the claim follows.
\end{proof}

\begin{lemma}[Cumulants of the projected components]
\label{lem:skcumulant}
Let \(\kappa_r(\cdot)\) denote the ordinary cumulant of order \(r\). Then under Assumption~\ref{ass:factor} and \ref{ass:ICM}, for every integer \(r\ge3\),
\[
|\kappa_r(s_k)|\le C_rT^{1-r/2}.
\]
\end{lemma}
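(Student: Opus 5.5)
The argument uses two standard properties of cumulants: additivity over independent summands and homogeneity under scaling. We will also need that every cumulant of $z_{kt}$ is finite, which follows from the moment condition in Assumption~\ref{ass:ICM}.

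\emph{Step 1: additivity and homogeneity.} Fix $k$. Since the factors are independent of the errors (Assumption~\ref{ass:factor}(i)), we may work conditionally on $\F$, so that $\b$ is a deterministic vector. By Assumption~\ref{ass:ICM}, the summands $b_tz_{kt}$, $t=1,\dots,T$, are then independent. Cumulants of independent summands add, and $\kappa_r(cX)=c^r\kappa_r(X)$. Hence
\[
\kappa_r(s_k)=\sum_{t=1}^T b_t^r\,\kappa_r(z_{11}),
\]
using that the $z_{kt}$ are identically distributed.

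\emph{Step 2: finiteness of $\kappa_r(z_{11})$.} The condition $E(e^{\eta z_{11}^2})<\infty$ gives $E|z_{11}|^r<\infty$ for every $r$. Indeed, $|x|^r\le C_{r,\eta}e^{\eta x^2}$ for all real $x$. The cumulant $\kappa_r(z_{11})$ is a fixed polynomial in the moments $E z_{11}^j$, $j\le r$. It is therefore a finite constant $K_r$ that depends only on $r$ and on the law of $z_{11}$, and not on $k$, $N$ or $T$.

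\emph{Step 3: bound via Lemma~\ref{lem:bpower}.} Combining the first two steps,
\[
|\kappa_r(s_k)|\le K_r\sum_{t=1}^T|b_t|^r .
\]
Lemma~\ref{lem:bpower} gives $\sum_t|b_t|^r\le C_rT^{1-r/2}$ on the event where the bounds of Assumption~\ref{ass:factor}(iii) hold. Absorbing $K_r$ into the constant yields $|\kappa_r(s_k)|\le C_rT^{1-r/2}$.

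The only point needing care is that $\b$ depends on $\F$. I would state the bound conditionally on $\F$, on the event of probability tending to one where Assumption~\ref{ass:factor}(iii) holds. The resulting constant is uniform in $k$ and deterministic, so the same bound holds on that event. There is no real obstacle beyond this conditioning remark.
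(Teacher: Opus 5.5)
Your proof is correct and follows the paper's argument: additivity and homogeneity of cumulants over the independent summands $b_t z_{kt}$, identical distribution of the $z_{kt}$, and then Lemma~\ref{lem:bpower}. Your additional remarks are sensible refinements that the paper leaves implicit: finiteness of $\kappa_r(z_{11})$ from the moment condition $E(e^{\eta z_{it}^2})<\infty$, and conditioning on $\F$ on the event where Assumption~\ref{ass:factor}(iii) holds.
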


\begin{proof}
By independence over \(t\), cumulants add:
\[
\kappa_r(s_k)
=
\kappa_r\Bigl(\sum_{t=1}^Tb_tz_{kt}\Bigr)
=
\sum_{t=1}^Tb_t^r\kappa_r(z_{kt}).
\]
Because \(z_{kt}\) has the same law for all \((k,t)\),
\[
|\kappa_r(s_k)|
\le |
\kappa_r(z_{11})|
\sum_{t=1}^T|b_t|^r.
\]
Lemma~\ref{lem:bpower} gives the stated order.
\end{proof}

The next lemma explains why the assumption \(E(z_{it}^3)=0\) is unnecessary for the present paper. All the statistics studied in the paper are built from \emph{even} powers. As a result, odd cumulants can only appear in pairs, and therefore their contribution is still of order at most \(T^{-1}\).

\begin{lemma}[Why the third moment assumption can be removed]
\label{lem:no-third-moment}
Fix a collection of nonnegative integers \((\ell_1,\dots,\ell_q)\) such that
\[
\ell_1+\cdots+\ell_q\le 12
\qquad\text{and}\qquad
\ell_1+\cdots+\ell_q\ \text{is even}.
\]
Let \(g_1,\dots,g_q\) be i.i.d.\ standard normals. Then
\[
E\Bigl(\prod_{j=1}^q s_{k_j}^{\ell_j}\Bigr)
=
E\Bigl(\prod_{j=1}^q g_j^{\ell_j}\Bigr)+O(T^{-1}),
\]
whenever \(k_1,\dots,k_q\) are distinct. If the total degree is odd, then
\[
E\Bigl(\prod_{j=1}^q s_{k_j}^{\ell_j}\Bigr)=O(T^{-1/2}).
\]
\end{lemma}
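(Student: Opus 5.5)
Since \(k_1,\dots,k_q\) are distinct, the \(s_{k_j}\) are independent, so the product expectation factorizes: \(E\prod_j s_{k_j}^{\ell_j}=\prod_j E(s_{k_j}^{\ell_j})\). Hence it suffices to treat a single variable: I will show that for every integer \(1\le \ell\le 12\),
\[
E(s_k^{\ell})=E(g^{\ell})+O(T^{-1})\quad(\ell\text{ even}),\qquad E(s_k^{\ell})=O(T^{-1/2})\quad(\ell\text{ odd}),
\]
with \(g\sim\mathcal N(0,1)\). I will also use the trivial bound \(|E(s_k^\ell)|\le C_\ell\) uniformly in \(T\). Then I multiply out the factors. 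Each of these moments is either \(E(g^{\ell_j})+O(T^{-1})\) or \(O(T^{-1/2})\), and all are bounded.

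For the single-variable claim I will use the moment--cumulant formula. The first cumulant is \(\kappa_1(s_k)=0\), and \(\kappa_2(s_k)=\sum_t b_t^2=1\). For \(r\ge3\), Lemma~\ref{lem:skcumulant} gives \(|\kappa_r(s_k)|\le C_rT^{1-r/2}\), which is \(O(T^{-1/2})\) for \(r=3\) and \(O(T^{-1})\) for \(r\ge4\). The \(\ell\)-th moment is a sum over set partitions of \(\{1,\dots,\ell\}\) into blocks of size at least 2, where each partition contributes the product of its block cumulants. The partition into pairs only reproduces \(E(g^\ell)\), which vanishes when \(\ell\) is odd.

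Every other partition contains at least one block of size \(r\ge3\). Consider first \(\ell\) even. A partition with a block of size at least 4 contributes \(O(T^{-1})\). A partition whose only non-pair blocks have size 3 must contain an even number of 3-blocks, since the sizes sum to the even number \(\ell\). It therefore has at least two such blocks and contributes \(O(T^{-1/2})^2=O(T^{-1})\). This pairing of odd cumulants is exactly the point that removes the third-moment assumption. Now consider \(\ell\) odd. There is no pure pairing, and every partition contains at least one block of size at least 3, so every term is \(O(T^{-1/2})\). The cumulants up to order 12 are finite because \(E(e^{\eta z^2})<\infty\), and the number of partitions is a fixed constant since \(\ell\le12\). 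The \(O(\cdot)\) constants are therefore uniform.

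It remains to assemble the product. Write \(E(s_{k_j}^{\ell_j})=m_j+e_j\), where \(m_j=E(g^{\ell_j})\), with \(e_j=O(T^{-1})\) if \(\ell_j\) is even and \(m_j=0\), \(e_j=O(T^{-1/2})\) if \(\ell_j\) is odd.

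\textbf{Total degree even.} The number of odd \(\ell_j\) is even. If it is zero, expanding the product gives \(\prod_j m_j+O(T^{-1})=E\prod_j g_j^{\ell_j}+O(T^{-1})\). If there are at least two odd factors, the product is \(O(T^{-1})\), and \(E\prod_j g_j^{\ell_j}=0\), so the claim holds.

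\textbf{Total degree odd.} At least one \(\ell_j\) is odd, so the product is \(O(T^{-1/2})\) because all other factors are bounded.

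The only delicate point is the parity bookkeeping in the cumulant expansion. It ensures that lone third cumulants never survive at order \(T^{-1/2}\) in even-degree moments; the rest is routine.
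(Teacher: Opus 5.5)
Your proof is correct and follows essentially the same route as the paper. Both use the moment--cumulant formula, the bound $|\kappa_r(s_k)|\le C_rT^{1-r/2}$ from Lemma~\ref{lem:skcumulant}, and the parity observation: in an even-degree moment, odd cumulant blocks (of size at least three, since $\kappa_1=0$) must occur at least in pairs, so every non-pairing term is $O(T^{-1})$. The only difference is organizational. You first factor the expectation using independence of the distinct $s_{k_j}$ and then do the parity bookkeeping twice, within each single-variable expansion and across factors. The paper runs the same parity argument once over partitions of the whole product, with mixed cumulants vanishing implicitly.
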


\begin{proof}
Apply the moment--cumulant formula. The Gaussian moment is obtained by summing over pair partitions only. Every non-Gaussian correction therefore contains at least one block of size other than two.

Suppose the total degree is even. Then a partition cannot contain exactly one odd block, because the sum of the remaining block sizes would then be odd. Therefore every partition that contains an odd block must contain at least two odd blocks. By Lemma~\ref{lem:skcumulant}, any odd block has size at least three and contributes a factor of order at most \(T^{-1/2}\). Two such blocks therefore contribute at most \(T^{-1}\). If a partition contains no odd block but contains some block of size at least four, then Lemma~\ref{lem:skcumulant} gives an \(O(T^{-1})\) factor directly. Thus every non-pairing partition contributes at most \(CT^{-1}\). Because the total degree is at most twelve, the number of partitions is a fixed constant, so the full remainder is \(O(T^{-1})\).

If the total degree is odd, every admissible partition contains at least one odd block, and hence at least one factor of order \(T^{-1/2}\). This proves the second display.
\end{proof}

Recall \(\D=\diag(\sigma_1^2,\dots,\sigma_N^2)\) and \(\R=\D^{-1/2}\bmS \D^{-1/2}\). Let \(\A=(a_{ij})_{1\le i,j\le N}=\D^{-1/2}\bmS^{1/2}\). Then \(\A\A^{\T}=\R\).

\begin{lemma}[Oracle representation]
\label{lem:oraclerep}
Under Assumption~\ref{ass:ICM} and \(H_0\),
\[
\x=(x_1,\dots,x_N)^{\T}=\A\s,
\qquad \s=(s_1,\dots,s_N)^{\T}.
\]
Consequently,
\[
E(x_i)=0,
\qquad
E(x_ix_j)=r_{ij}.
\]
\end{lemma}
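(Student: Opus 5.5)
Under \(H_0\) we have \(\alpha_i=0\), so the plan is to write \(\Y_{i.}=\F\bmb_i+\bme_{i.}\), where \(\bme_{i.}=(\varepsilon_{i1},\ldots,\varepsilon_{iT})^{\T}\). We will compute \(\widehat\alpha_i\) directly and then read off the linear representation.

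Since \({\rm M}_{\F}\F=\bzero\), the factor component is annihilated:
\[
\bone_T^{\T}{\rm M}_{\F}\Y_{i.}=\bone_T^{\T}{\rm M}_{\F}\bme_{i.}.
\]
Dividing by \(w_T\) and multiplying by \(\sqrt{w_T}/\sigma_i\) gives
\[
x_i=\frac{\bone_T^{\T}{\rm M}_{\F}\bme_{i.}}{\sqrt{w_T}\,\sigma_i}=\sigma_i^{-1}\sum_{t=1}^T b_t\varepsilon_{it},
\]
with \(\b={\rm M}_{\F}\bone_T/\sqrt{w_T}\) as in Assumption~\ref{ass:factor}. The ICM of Assumption~\ref{ass:ICM} gives \(\varepsilon_{it}=\sum_k (\bmS^{1/2})_{ik}z_{kt}\). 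Substituting and exchanging the finite sums in \(t\) and \(k\),
\[
x_i=\sum_{k=1}^N \sigma_i^{-1}(\bmS^{1/2})_{ik}\sum_t b_t z_{kt}=\sum_k a_{ik}s_k .
\]
This is exactly \(\x=\A\s\), where \(\A=\D^{-1/2}\bmS^{1/2}\).

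For the moments, the one subtle point is that \(\b\) depends on \(\F\), so it is random. We condition on \(\F\). By Assumption~\ref{ass:factor}(i), \(\F\) is independent of the errors, so conditionally the \(z_{kt}\) remain i.i.d. with mean zero and unit variance. This yields
\[
E(s_k\mid\F)=0,\qquad E(s_ks_l\mid\F)=\ind(k=l)\sum_t b_t^2=\ind(k=l),
\]
using the normalization \(\sum_t b_t^2=1\). Hence \(E(x_i\mid\F)=0\), and
\[
E(x_ix_j\mid\F)=\sum_k a_{ik}a_{jk}=(\A\A^{\T})_{ij}=r_{ij},
\]
because \(\A\A^{\T}=\D^{-1/2}\bmS\D^{-1/2}=\R\). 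These conditional moments do not depend on \(\F\), so taking expectations gives the unconditional statements.

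The only real obstacle is the conditioning step, i.e. making sure the randomness of \(\b\) does not disturb the i.i.d. structure of the \(s_k\). Independence of factors and errors handles it. Conditional on \(\F\), the \(s_k\) are also independent and identically distributed across \(k\), which justifies the earlier remark that \(s_1,\dots,s_N\) are i.i.d.
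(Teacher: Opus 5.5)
Your proposal is correct and follows the paper's proof essentially step for step. You annihilate the factor term with ${\rm M}_{\F}\F=\bzero$, write $x_i=\sigma_i^{-1}\sum_t b_t\varepsilon_{it}$, substitute $\bme_t=\bmS^{1/2}\z_t$ to get $\x=\A\s$, and conclude $E(\x\x^{\T})=\A\A^{\T}=\R$ from the uncorrelated, unit-variance $s_k$. Your explicit conditioning on $\F$ via Assumption~\ref{ass:factor}(i), needed because $\b$ is random, is a point the paper leaves implicit, and it makes the moment step cleaner.
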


\begin{proof}
By the definition of \(x_i\),
\[
x_i=\frac{\sqrt{w_T}\widehat\alpha_i}{\sigma_i}
=
\frac{1}{\sigma_i\sqrt{w_T}}\bone_T^{\T}{\rm M}_{\F}\bme_{i.}
=
\sum_{t=1}^Tb_t\frac{\varepsilon_{it}}{\sigma_i}.
\]
Stacking the \(N\) coordinates and using \(\bme_t=\bmS^{1/2}\z_t\),
\[
\x=\D^{-1/2}\sum_{t=1}^Tb_t\bme_t
=\D^{-1/2}\bmS^{1/2}\sum_{t=1}^Tb_t\z_t
=\A\s.
\]
Then
\[
\E(\x\x^{\T})=\A\E(\s\s^{\T})\A^{\T}=\A\A^{\T}=\R,
\]
because the \(s_k\)'s are independent with unit variance.
\end{proof}

\begin{lemma}[Mixed moments of the oracle vector]
\label{lem:oracle-mixed}
Let \(\y=(y_1,\dots,y_N)^{\T}=\A\g\) with $\g=(g_1,\dots,g_N)^{\T}\sim \mathcal{N}(\bzero,\I_N)$. Under Assumption~\ref{ass:factor}-\ref{ass:dep} and \(H_0\), for even integers \(a,b\ge 0\) with and \(a+b\le 12\),
\[
N^{-1}\sum_{1\le i,j\le N}\left\{\E(x_i^ax_j^b)-\E(y_i^ay_j^b)\right\}=O(T^{-1}).
\]
\end{lemma}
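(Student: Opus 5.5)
We expand $x_i^a x_j^b$ as a polynomial in the independent components $s_k$ and compare it term by term with the same polynomial in the Gaussian components $g_k$.

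\textbf{Expansion.} By Lemma~\ref{lem:oraclerep}, $x_i=\sum_k a_{ik}s_k$ and $y_i=\sum_k a_{ik}g_k$. Hence
\[
E(x_i^ax_j^b)-E(y_i^ay_j^b)=\sum_{k_1,\dots,k_{a+b}} \Bigl(\prod_{u\le a} a_{ik_u}\prod_{u>a}a_{jk_u}\Bigr)\Bigl\{E\Bigl(\prod_u s_{k_u}\Bigr)-E\Bigl(\prod_u g_{k_u}\Bigr)\Bigr\}.
\]
Group the index tuple $(k_1,\dots,k_{a+b})$ according to the partition $\pi$ of $\{1,\dots,a+b\}$ it induces. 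Within a class, the distinct values are $k_1',\dots,k_q'$, with multiplicities $\ell_1,\dots,\ell_q$ summing to $a+b\le 12$, which is even.

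\textbf{Per-term bound.} Lemma~\ref{lem:no-third-moment} applies to each class. The bracket vanishes whenever some block has size one, because both sides are zero. Otherwise the bracket is $O(T^{-1})$ uniformly, with a constant depending only on $\pi$. There are finitely many partitions, so it suffices to show, for each $\pi$ whose blocks all have size at least $2$,
\[
\frac1N\sum_{i,j}\sum_{k'\ \text{distinct}}\Bigl|\prod_u a_{ik_u}\prod_u a_{jk_u}\Bigr|\le C.
\]
The bracket depends on the $k_u$ only through $\pi$, so it can be pulled out of the sum.

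\textbf{Main obstacle.} The hard step is this combinatorial weight bound, since $\A=\D^{-1/2}\bmS^{1/2}$ need not be sparse. Two simplifications help. First, drop the distinctness constraint; Möbius inversion over coarser partitions costs only a constant factor. Second, factor the sum over blocks.

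Take a block $\beta$ with $m_\beta$ indices coming from row $i$ and $n_\beta$ from row $j$, where $m_\beta+n_\beta\ge2$. The sum over $k$ of $|a_{ik}|^{m_\beta}|a_{jk}|^{n_\beta}$ is controlled as follows:
\begin{itemize}
\item If the block is pure, say $n_\beta=0$, it is at most $\sum_k a_{ik}^2=r_{ii}=1$. This uses $|a_{ik}|\le1$, which follows from $\sum_k a_{ik}^2=1$.
\item If the block is mixed, it is at most $\sum_k|a_{ik}||a_{jk}|=:\rho_{ij}$.
\end{itemize}
This gives, up to constants, the bound $N^{-1}\sum_{i,j}\rho_{ij}^{c}$, where $c$ is the number of mixed blocks. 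Terms with $c=0$ factor into pure moments and vanish: a pure block structure makes the bracket $E(\text{row } i\text{ part})E(\text{row } j\text{ part})$ minus its Gaussian analogue only when $i\neq j$ share no index, and then the $i$ and $j$ contributions are independent.

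That last vanishing step needs care. The diagonal term $i=j$ is $O(T^{-1})$ trivially, since it contributes only $N\cdot N^{-1}$. For the off-diagonal pure case, $E(x_i^ax_j^b)$ includes only $k$-tuples with rows disjoint in $k$, but independence alone does not give vanishing. Instead I would bound those terms through $|E x_i^a - E y_i^a|=O(T^{-1})$ per row, using a direct cumulant computation together with $\sum_k a_{ik}^4\le1$. The product difference $E x_i^a E x_j^b - E y_i^a E y_j^b$ is not itself the object, however. The honest route is to note that the relevant cross-row dependence enters only through mixed blocks, which yield factors $\rho_{ij}$ with $\rho_{ij}\ge |r_{ij}|$.

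I would therefore need $N^{-1}\sum_{i\neq j}\rho_{ij}^{c}$ bounded for $c\ge1$. Here I would try to use Assumption~\ref{ass:dep} together with a sparsity transfer from $\R$ to $\A$. If $\bmS^{1/2}$ inherits the row-sparsity $s_N$ of $\R$, then $N^{-1}\sum_{i,j}\rho_{ij}\lesssim s_N^{C}$. A bound of order $s_N^{C}T^{-1}$ would still suffice for downstream use given Assumption~\ref{ass:growth}, so I would state the lemma with the $O(T^{-1})$ absorbing these factors. If the transfer fails, I would fall back on expressing everything through cumulants of $\x$ directly: the joint cumulants of $x_i,x_j$ are $\sum_k a_{ik}^{m}a_{jk}^{n}\kappa_{m+n}(s)$, which reduces the problem to sums of $|a_{ik}a_{jk}|$ of the same kind.
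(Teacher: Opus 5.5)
Your expansion in the $s_k$ and the partition-by-partition use of Lemma~\ref{lem:no-third-moment} follow the paper's route. However, the step you flagged as delicate does not merely need care: it fails, and the statement as written fails with it.

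The configurations with no mixed block (your $c=0$ with $i\neq j$) neither vanish nor become small. Each carries a nonzero bracket such as $\E(s_1^4)\E(s_1^2)-3=\kappa_4(s_1)$. There are of order $N^2$ such pairs against the normalization $N^{-1}$, so your per-row bound only gives $O(N/T)$ for their total, and that order is attained.

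Concretely, take $\bmS=\I_N$, so $\A=\I_N$, $x_i=s_i$ and $y_i=g_i$, and take $(a,b)=(4,2)$. For $i\neq j$, independence gives $\E(x_i^4x_j^2)-\E(y_i^4y_j^2)=\E(s_1^4)-3=\kappa_4(s_1)$. Hence the left-hand side equals $(N-1)\kappa_4(s_1)+O(T^{-1})$, where $\kappa_4(s_1)=\kappa_4(z_{11})\sum_t b_t^4$ and $\sum_t b_t^4\ge T^{-1}$ by Cauchy--Schwarz. For any innovation allowed by Assumption~\ref{ass:ICM} with $\kappa_4(z_{11})\neq 0$ (e.g.\ Rademacher), the left-hand side is of exact order $N/T$. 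This may even diverge under Assumption~\ref{ass:growth}, and the case $(a,b)=(0,4)$ gives $N\kappa_4(s_1)$ directly.

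So no bound on the weights can give $O(T^{-1})$. Your closing claim that cross-row dependence enters only through mixed blocks is true for covariances, not for raw mixed moments. The paper's proof skips exactly these configurations through the unjustified weight bound $CN^{-1}\sum_{i,j}r_{ij}^{(a+b)/2}$, so it has the same hole.

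What is true, and what Proposition~\ref{prop:oraclecov} actually uses, is the covariance version $N^{-1}\sum_{i,j}\{\Cov(x_i^a,x_j^b)-\Cov(y_i^a,y_j^b)\}$. For a tuple with $\{k_1,\dots,k_a\}\cap\{k_{a+1},\dots,k_{a+b}\}=\emptyset$ one has $\E(\prod_u s_{k_u})=\E(\prod_{u\le a}s_{k_u})\,\E(\prod_{u>a}s_{k_u})$, and likewise for $\g$. These tuples therefore cancel exactly against $\E(x_i^a)\E(x_j^b)-\E(y_i^a)\E(y_j^b)$, and only tuples in which the two rows share an index survive.

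For the surviving terms, taking absolute values of every entry of $\A$ creates a second, avoidable difficulty. It turns each Gaussian $i$--$j$ pairing into $\rho_{ij}=\sum_k|a_{ik}||a_{jk}|$. Bounding its average would require a sparsity transfer from $\R$ to $\bmS^{1/2}$ that Assumption~\ref{ass:dep} does not supply. A loss of $s_N^{C}$ also cannot simply be absorbed into the stated $O(T^{-1})$.

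Instead, apply the moment--cumulant formula to the pair $(x_i,x_j)$, with an unrestricted $k$-sum inside each block. A block of size two then contributes exactly $1$ or $r_{ij}$, and a block of size $r\ge3$ contributes $\kappa_r(s_1)\sum_k a_{ik}^m a_{jk}^n$. The weights can be controlled through $\R$ with Cauchy--Schwarz and operator-norm bounds rather than entrywise sparsity. For instance, $\sum_{i,j}\sum_k a_{ik}^2a_{jk}^2=\sum_k\{(\A^\T\A)_{kk}\}^2\le CN$ because $(\A^\T\A)_{kk}\le\sigma_k^2/c_\sigma$, and one also has $\|\R\|\le\max_i\sum_j|r_{ij}|\le s_N$.

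This yields the covariance comparison at order $s_NT^{-1}$ rather than $T^{-1}$. That is still $o(1)$ under Assumptions~\ref{ass:dep}--\ref{ass:growth} and suffices for the downstream arguments.
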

\begin{proof}
    Expand \(x_i\) via Lemma~\ref{lem:oraclerep} that \(x_i=\sum_{k=1}^Na_{ik}s_k\). Using $\E(s_k)=\E(g_k)=0$, 
    \begin{align*}
        &\E(x_i^ax_j^b)-\E(y_i^ay_j^b)\\
        =&\E\left\{\left(\sum_{k=1}^Na_{ik}s_k\right)^a\left(\sum_{k=1}^Na_{gk}s_k\right)^b-\left(\sum_{k=1}^Na_{ik}g_k\right)^a\left(\sum_{k=1}^Na_{gk}g_k\right)^b\right\}\\
        =&\sum_{1\le k_1,\dots,k_a,k_{a+1},\dots,k_{a+b}\le N}a_{ik_1}\dots a_{ik_a}a_{jk_{a+1}}\dots a_{jk_{a+b}}\E(s_{k_1}\dots s_{k_{a+b}}-g_{k_1}\dots g_{k_{a+b}})\\
        =&\sum_{\substack{1\le k_1,\dots,k_a,k_{a+1},\dots,k_{a+b}\le N\\\#\text{distinct}\le \frac{a+b}{2}}}a_{ik_1}\dots a_{ik_a}a_{jk_{a+1}}\dots a_{jk_{a+b}}\E(s_{k_1}\dots s_{k_{a+b}}-g_{k_1}\dots g_{k_{a+b}}).
    \end{align*}
    Lemma~\ref{lem:no-third-moment} implies \(\E(s_{k_1}\dots s_{k_{a+b}}-g_{k_1}\dots g_{k_{a+b}})=O(T^{-1})\). For the coefficients, observing that \(AA^T=R\), then under the weak cross-sectional dependence Assumption~\ref{ass:dep}, 
    \begin{align*}
        N^{-1}\sum_{i,j}\sum_{\substack{1\le k_1,\dots,k_a,k_{a+1},\dots,k_{a+b}\le N\\\#\text{distinct}\le \frac{a+b}{2}}}a_{ik_1}\dots a_{ik_a}a_{jk_{a+1}}\dots a_{jk_{a+b}}\lesssim CN^{-1}\sum_{i,j}r_{ij}^{\frac{a+b}{2}}=O(1).
    \end{align*}
    The required bound follows.
\end{proof}

\begin{proposition}[Oracle covariance matrix]
\label{prop:oraclecov}
Let \(\Q_N^{\cir}=(Q_{2,N}^{\cir},Q_{4,N}^{\cir},Q_{6,N}^{\cir})^{\T}\), and let \(\B_N^{(0)}\) denote the matrix obtained from \(\B_N\) by deleting the \(v^{-1}\) correction terms in the diagonal entries. Then under Assumption~\ref{ass:factor}-\ref{ass:dep} and \(H_0\),
\[
\Cov(\Q_N^{\cir})=\B_N^{(0)}+\bD_N,
\qquad
\max_{1\le k,\ell\le 3}|(\Delta_N)_{k\ell}|\le CT^{-1}.
\]
Moreover,
\[
\max_{1\le k,\ell\le 3}|B_{k\ell,N}-B^{(0)}_{k\ell,N}|
\le C v^{-1}.
\]
\end{proposition}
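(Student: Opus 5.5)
The plan has two parts: compute the Gaussian covariance of the oracle vector exactly, then transfer the result to the ICM through Lemma~\ref{lem:oracle-mixed}. Throughout, write $a=2m$ and $b=2m'$ with $m,m'\in\{1,2,3\}$.

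By definition,
\[
\Cov(Q_{a,N}^{\cir},Q_{b,N}^{\cir})=\frac1N\sum_{i,j=1}^N\bigl\{\E(x_i^ax_j^b)-\E(x_i^a)\E(x_j^b)\bigr\}.
\]
First consider the Gaussian surrogate $\y=\A\g$ from Lemma~\ref{lem:oracle-mixed}. The pair $(y_i,y_j)$ is bivariate normal with unit variances and correlation $r_{ij}$. Its joint even moments follow from the Hermite (Mehler) expansion $y^{2m}=\sum_{k}c_{m,k}H_{2k}(y)$, which gives
\[
\E(y_i^{2m}y_j^{2m'})=\sum_{k\ge0}c_{m,k}c_{m',k}(2k)!\,r_{ij}^{2k}.
\]
Subtracting the $k=0$ term $\mu_{2m}\mu_{2m'}$ leaves a polynomial in $r_{ij}^2$. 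Working out this polynomial should reproduce the displayed leading entries:
\begin{itemize}
\item $(2,2)$: $2r^2$;
\item $(2,4)$: $12r^2$;
\item $(2,6)$: $90r^2$;
\item $(4,4)$: $72r^2+24r^4$;
\item $(4,6)$: $540r^2+360r^4$;
\item $(6,6)$: $4050r^2+5400r^4+720r^6$.
\end{itemize}
These agree with $P_m(\rho)-\mu_{2m}^2$ at $v=\infty$. The coefficients come from the standard expansion: $y^4=H_4+6H_2+3$ and $y^6=H_6+15H_4+45H_2+15$. For example, the $(4,6)$ entry is $6\cdot45\cdot2\,r^2+1\cdot15\cdot24\,r^4=540r^2+360r^4$. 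Summing over $i,j$ and dividing by $N$ shows that the Gaussian covariance equals $\B_N^{(0)}$ exactly, with no remainder.

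Next, pass from $\y$ to $\x$. Lemma~\ref{lem:oracle-mixed} with $a+b\le12$ controls the first term: $N^{-1}\sum_{i,j}\{\E(x_i^ax_j^b)-\E(y_i^ay_j^b)\}=O(T^{-1})$. For the product of means, note that $\E(x_i^a)-\E(y_i^a)=O(T^{-1})$ uniformly in $i$. This follows from the same cumulant argument (Lemma~\ref{lem:no-third-moment} applied with $b=0$) together with $\sum_k a_{ik}^2=r_{ii}=1$.

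This product-of-means term is the main obstacle. The naive double sum $N^{-1}\sum_{i,j}\{\E x_i^a\E x_j^b-\E y_i^a\E y_j^b\}$ is of order $N\cdot T^{-1}$, not $T^{-1}$. The fix is to avoid splitting the covariance this way. Instead, write
\[
\E(x_i^ax_j^b)-\E x_i^a\E x_j^b
\]
through the moment–cumulant expansion in the independent $s_k$'s. Only index configurations that connect $i$ and $j$ through a shared $s_k$ survive, so every surviving term carries a factor $a_{ik}a_{jk}\cdots$. The same counting as in Lemma~\ref{lem:oracle-mixed} then bounds the non-Gaussian part of the centered covariance by $T^{-1}N^{-1}\sum_{i,j}|r_{ij}|^{c}$ for some $c\ge1$. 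Under Assumption~\ref{ass:dep} this sum is $O(s_N)$, or at least bounded. In other words, the argument of Lemma~\ref{lem:oracle-mixed} should be applied directly to joint cumulants, that is, centered moments, rather than to raw moments. This gives $\max_{k,\ell}|(\bD_N)_{k\ell}|\le CT^{-1}$. If the dependence constant introduces a factor $s_N$, I would absorb it into $C$ via the boundedness of $\kappa_{2k,N}$, checking that only $r_{ij}^2$-type sums appear once $|a_{ik}a_{jk}|$ is bounded using Cauchy–Schwarz.

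Finally, the second claim is immediate. $B_{k\ell,N}-B^{(0)}_{k\ell,N}$ is nonzero only on the diagonal, where it equals $v^{-1}N^{-1}\sum_{i,j}(\text{polynomial in } r_{ij}^2 \text{ with fixed coefficients})$. By Assumption~\ref{ass:dep}, the quantities $\kappa_{2k,N}$ converge for $k\le4$ and are therefore bounded. Hence the difference is at most $Cv^{-1}$.
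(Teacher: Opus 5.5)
Your plan follows the paper's route. The Hermite expansions $y^4=H_4+6H_2+3$ and $y^6=H_6+15H_4+45H_2+15$ make the Gaussian covariance exactly $\B_N^{(0)}$, and your six entries are correct. The ICM case is then reached by a moment--cumulant comparison, and the diagonal correction is bounded by $Cv^{-1}$ through the boundedness of $\kappa_{2k,N}$. Your diagnosis of the transfer step is also correct, and it exposes a real hole in the paper's one-line argument. The raw-moment comparison in Lemma~\ref{lem:oracle-mixed} is spoiled by disconnected index configurations, which contribute order $N/T$. For instance, with $\bmS=\I$ and $a=b=4$, the pairs $i\neq j$ alone give $(N-1)\{6\kappa_4(s_1)+\kappa_4(s_1)^2\}$. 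These terms cancel only in the centred covariance. So expanding $\Cov(x_i^a,x_j^b)$ over partitions with at least one block joining the $x_i$-slots to the $x_j$-slots, using $\kappa(x_{i_1},\dots,x_{i_r})=\kappa_r(s_1)\sum_k a_{i_1k}\cdots a_{i_rk}$, is the right repair.

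The gap is in the bound you then claim. First, a joining block of size at least three contributes $\sum_k a_{ik}^pa_{jk}^q$, which is not a power of $r_{ij}$. Such factors must be controlled through $\sum_k a_{ik}^2=1$, $\max_k\sum_i a_{ik}^2\le C_\sigma/c_\sigma$, Cauchy--Schwarz, and $\|\A\|_{\mathrm{op}}^2=\|\R\|_{\mathrm{op}}$.

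Second, and decisively, when $i$ and $j$ are joined by a single pair, $r_{ij}$ enters to the first power. Then $s_N$ cannot be absorbed into $C$ via $\kappa_{2k,N}$, which controls only even powers. For example, the partitions $\{i,i,i\},\{i,j\},\{j,j,j\}$ contribute $16\kappa_3(s_1)^2N^{-1}\sum_{i,j}c_ir_{ij}c_j$ to the $(4,4)$ entry, where $c_i=\sum_k a_{ik}^3$. This is bounded only by $16\kappa_3(s_1)^2\|\R\|_{\mathrm{op}}$, and the bound is attained in the following example.
\begin{itemize}
\item Let $\bmS=\R$ be block diagonal with equicorrelated blocks of size $m_N\to\infty$ (slowly) and correlation $m_N^{-1/2}$.
\item Assumption~\ref{ass:dep} holds, with $\kappa_{2,N}\to2$ and $\kappa_{2k,N}\to1$ for $k\ge2$.
\item $\A=\R^{1/2}$ has nonnegative entries with $c_i\to1$, so all other $\kappa_3^2$-terms are nonnegative and everything else is $O(T^{-1})$.
\item Suppose $\E(z_{it}^3)\neq0$, which Assumption~\ref{ass:ICM} allows, and $\sum_t b_t^3\asymp T^{-1/2}$ (e.g.\ mean-zero factors). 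Then $(\bD_N)_{44}\asymp m_N^{1/2}T^{-1}$.
\end{itemize}
So in this generality the constant-$C$ bound is not merely unproved by your argument; it is false.

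What your route does give, after sorting partitions by their joining blocks, is $\max_{k,\ell}|(\bD_N)_{k\ell}|\le C(1+\|\R\|_{\mathrm{op}})T^{-1}\le Cs_NT^{-1}$. The cases are as follows.
\begin{itemize}
\item Two or more joining blocks, or a single joining block with $p,q\ge2$, give bounded coefficients.
\item A single joining pair gives a factor $\|\R\|_{\mathrm{op}}$.
\item A single joining block with $\min(p,q)=1<\max(p,q)$ gives a factor $\|\R\|_{\mathrm{op}}^{1/2}$.
\end{itemize}
This bound is $o(1)$ under Assumptions~\ref{ass:dep}--\ref{ass:growth}, which is all the proof of Theorem~\ref{thm:null-joint} uses. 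It sharpens to $CT^{-1}$ when $\E(z_{it}^3)=0$, because the first-power cases then cost $O(T^{-3/2})$ and $s_N=o(T^{1/2})$.
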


\begin{proof}
Every covariance entry is an average of terms of the form
\[
E(x_i^ax_j^b)-E(x_i^a)E(x_j^b),
\qquad a+b\le 12,
\]
with \((a,b)\in\{(2,2),(2,4),(2,6),(4,4),(4,6),(6,6)\}\). By Lemma~\ref{lem:oracle-mixed}, we may replace \((x_i,x_j)\) by the Gaussian pair \((y_i,y_j)\) at an \(O(T^{-1})\) cost.

For jointly Gaussian \((Y_1,Y_2)\) with correlation \(\rho\), we use the algebraic identities
\[
Y^2-1=H_2(Y),
\qquad
Y^4-3=H_4(Y)+6H_2(Y),
\qquad
Y^6-15=H_6(Y)+15H_4(Y)+45H_2(Y),
\]
and the well-known formula
\[
\E\{H_m(Y_1)H_n(Y_2)\}=\ind(m=n)\,m!\rho^m.
\]
This gives
\begin{align*}
\Cov(Y_1^2-1,Y_2^2-1)&=2\rho^2,\\
\Cov(Y_1^2-1,Y_2^4-3)&=12\rho^2,\\
\Cov(Y_1^2-1,Y_2^6-15)&=90\rho^2,\\
\Cov(Y_1^4-3,Y_2^4-3)&=72\rho^2+24\rho^4,\\
\Cov(Y_1^4-3,Y_2^6-15)&=540\rho^2+360\rho^4,\\
\Cov(Y_1^6-15,Y_2^6-15)&=4050\rho^2+5400\rho^4+720\rho^6.
\end{align*}
After averaging over \(\rho=r_{ij}\), we obtain the entries of \(B_N^{(0)}\). The replacement error is \(O(T^{-1})\) uniformly in each entry.

It remains to compare \(B_N^{(0)}\) with the corrected benchmark \(B_N\). By construction, the two matrices differ only in the diagonal entries, with
\begin{align*}
B_{22,N}-B_{22,N}^{(0)}
&=
\frac{1}{vN}\sum_{i,j=1}^N \bigl(10r_{ij}^2+4r_{ij}^4\bigr),\\
B_{44,N}-B_{44,N}^{(0)}
&=
\frac{1}{vN}\sum_{i,j=1}^N \bigl(936r_{ij}^2+864r_{ij}^4+192r_{ij}^6\bigr),\\
B_{66,N}-B_{66,N}^{(0)}
&=
\frac{1}{vN}\sum_{i,j=1}^N \bigl(101250r_{ij}^2+202500r_{ij}^4+114480r_{ij}^6+12960r_{ij}^8\bigr).
\end{align*}
Assumption~\ref{ass:dep} implies that the averages of \(r_{ij}^{2k}\), \(k=1,2,3,4\), are uniformly bounded. Hence
\[
\max_{1\le k,\ell\le3}|B_{k\ell,N}-B^{(0)}_{k\ell,N}|
\le Cv^{-1}.
\]
The additional \(O(v^{-2})\) remainder in the exact Gaussian diagonal variances from Section~2.1 is uniformly negligible because \(v\asymp T\to\infty\). This proves the claim.
\end{proof}

\begin{lemma}
\label{lem:sigmaexpand}
Under Assumption~\ref{ass:factor}-\ref{ass:dep} and \(H_0\), for each \(i\),
\[
\frac{\widehat\sigma_i^2}{\sigma_i^2}-1
=
L_{iT}+R_{iT},
\]
where
\[
L_{iT}=
\sum_{1\le s\ne t\le T}c_{st}\,\eta_{is}\eta_{it},
\qquad \eta_{it}=\frac{\varepsilon_{it}}{\sigma_i}, \qquad R_{iT}=\sum_{t=1}^Tc_{tt}(\eta_{it}^2-1),
\]
for deterministic coefficients satisfying
\[
|c_{st}|\le CT^{-1},
\qquad
\sum_{s,t=1}^T c_{st}^2\le CT^{-1},
\qquad
\sum_{t=1}^T|c_{tt}|\le C,
\]
and
\[
\E\left(\left|\frac{\widehat\sigma_i^2}{\sigma_i^2}-1\right|^2\right)=O(T^{-1}).
\]
\end{lemma}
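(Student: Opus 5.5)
Write $\bet_{i.}=(\eta_{i1},\ldots,\eta_{iT})^{\T}$ with $\eta_{it}=\varepsilon_{it}/\sigma_i$. Under $H_0$ we have $\Y_{i.}=\bone_T\cdot 0+\F\bmb_i+\bme_{i.}$. The residual vector is
$\widehat\bme_{i.}={\rm M}_{\F}(\Y_{i.}-\bone_T\widehat\alpha_i)={\rm M}_{\F}\bme_{i.}-{\rm M}_{\F}\bone_T\widehat\alpha_i$, since ${\rm M}_{\F}\F=0$. With $\b={\rm M}_{\F}\bone_T/\sqrt{w_T}$ and $\sqrt{w_T}\widehat\alpha_i=\b^{\T}\bme_{i.}$, this reads $\widehat\bme_{i.}=({\rm M}_{\F}-\b\b^{\T})\bme_{i.}$. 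The matrix $\mathbf{P}:={\rm M}_{\F}-\b\b^{\T}$ is a symmetric idempotent matrix of rank $T-p-1=v$, because $\b$ lies in the range of ${\rm M}_{\F}$ and has unit norm. Hence
\[
\frac{\widehat\sigma_i^2}{\sigma_i^2}-1=\frac1v\bet_{i.}^{\T}\mathbf{P}\bet_{i.}-1
=\sum_{s\ne t}c_{st}\eta_{is}\eta_{it}+\sum_t c_{tt}(\eta_{it}^2-1),
\]
with $c_{st}=P_{st}/v$. The constant term vanishes exactly because $\sum_t P_{tt}=\tr(\mathbf{P})=v$, which is why $v=T-p-1$ is the correct normaliser. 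We condition on $\F$ throughout; this is legitimate by the independence of factors and errors in Assumption~\ref{ass:factor}(i), so the $c_{st}$ are deterministic given $\F$.

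For the coefficient bounds, write $\mathbf{P}=\I_T-\mathbf{H}$ with $\mathbf{H}=\F(\F^{\T}\F)^{-1}\F^{\T}+\b\b^{\T}$, which has rank $p+1$.
\begin{itemize}
\item \emph{Off-diagonal entries.} For $s\ne t$, $|P_{st}|\le |H_{st}|\le \sqrt{H_{ss}H_{tt}}$, since $\mathbf{H}$ is positive semidefinite.
\item \emph{Diagonal of $\mathbf{H}$.} We have $b_t^2\le C_b^2T^{-1}$ by Assumption~\ref{ass:factor}(iii), and $\f_t^{\T}(\F^{\T}\F)^{-1}\f_t\le \|\f_t\|^2/(Tc_F)$ by the eigenvalue bound. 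The sub-Weibull tails give $\max_t\|\f_t\|^2=O_p((\log T)^{2/a_1})$, so $\max_t H_{tt}=O_p(T^{-1}(\log T)^{c})$.
\item \emph{Size of $|c_{st}|$.} The preceding bound yields $|c_{st}|\le CT^{-1}$ only up to a logarithmic factor, on an event of probability tending to one. I would either absorb the logarithm into $C$ (it is harmless against every later rate, given Assumption~\ref{ass:growth}) or state the bound on that high-probability event. I regard this as the one delicate point.
\item \emph{The remaining two bounds.} These are clean: $\sum_{s,t}c_{st}^2=\tr(\mathbf{P}^2)/v^2=1/v\le CT^{-1}$, and $\sum_t|c_{tt}|=\sum_t P_{tt}/v=1$, using $0\le P_{tt}\le 1$.
\end{itemize}

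For the second-moment bound, $L_{iT}$ and $R_{iT}$ are uncorrelated, because the $\eta_{it}$ are independent over $t$ with mean zero. By Assumption~\ref{ass:ICM}, $\eta_{it}=\sum_k a_{ik}z_{kt}$ is a linear combination of independent sub-Gaussian-type variables with unit variance, so $E\eta_{it}^4\le C$ uniformly in $i$ by Rosenthal's inequality and $\sum_k a_{ik}^2=1$. This gives the two pieces:
\[
E L_{iT}^2=2\sum_{s\ne t}c_{st}^2\le CT^{-1},\qquad E R_{iT}^2=\sum_t c_{tt}^2\,\var(\eta_{it}^2)\le C\max_t|c_{tt}|\sum_t|c_{tt}|\le CT^{-1},
\]
where the second uses $c_{tt}\le 1/v$. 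Adding them yields $E|\widehat\sigma_i^2/\sigma_i^2-1|^2=O(T^{-1})$, conditionally on $\F$ and hence, on the good event, unconditionally. The main obstacle I expect is the uniform control of $\max_t H_{tt}$ needed for $|c_{st}|\le CT^{-1}$. Only the off-diagonal sup-bound depends on it; the Hilbert–Schmidt and trace bounds, which are all the moment computation uses, hold exactly.
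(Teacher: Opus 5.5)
Your proof is essentially the paper's. Your $\mathbf{P}={\rm M}_{\F}-\b\b^{\T}$ is exactly the residual-maker ${\rm M}_{\X}$ for $\X=(\bone_T,\F)$, by Frisch--Waugh--Lovell. Your decomposition, the identities $\sum_t c_{tt}=1$ and $\sum_{s,t}c_{st}^2=v^{-1}$, and the second-moment computation match the paper's argument. Using only fourth moments where the paper invokes sub-Gaussianity, and making the conditioning on $\F$ explicit, are harmless refinements.

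The step you flag as delicate is not delicate, and your treatment of it is wrong. Since $\mathbf{P}$ is an orthogonal projection, $|P_{st}|=|e_s^{\T}\mathbf{P}e_t|\le\|\mathbf{P}\|_{\mathrm{op}}=1$ for all $s,t$, where $e_s$ is the $s$th standard basis vector. Hence $|c_{st}|=|P_{st}|/v\le v^{-1}\le CT^{-1}$ deterministically. This is exactly how the paper gets the bound from the operator norm.

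Your $\mathbf{H}$-based estimate bounds $P_{st}$, not $c_{st}$: you forgot to divide by $v$. Dividing gives the much smaller off-diagonal size $O_p\{T^{-2}(\log T)^{c}\}$, not ``$CT^{-1}$ up to a logarithm.''

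Neither fallback you propose would have been valid anyway. A diverging $\log T$ factor cannot be absorbed into a constant. A high-probability version of the bound is also weaker than what the lemma states.
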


\begin{proof}
Because the number of factors is fixed, the OLS residual variance is a quadratic form in \(\bme_{i.}=(\varepsilon_{i1},\dots,\varepsilon_{iT})^{\T}\):
\[
\widehat\sigma_i^2
=
\frac1v\bme_{i.}^{\T}{\rm M}_{\X}\bme_{i.},
\]
where \({\rm M}_{\X}=\I_T-\X(\X^{\T}\X)^{-1}\X^{\T}\) with $\X=(\bone_T,\F)$, i.e., the residual-maker with respect to the intercept and the factors. Write
\[
v^{-1}{\rm M}_{\X}=(c_{st})_{1\le s,t\le T}.
\]
Since \({\rm M}_{\X}\) is idempotent of rank \(v\asymp T\), its operator norm is one and \(\tr({\rm M}_{\X})=\tr({\rm M}_{\X}^2)=v\). Therefore each \(c_{st}\) is \(O(T^{-1})\), \(\sum_{t}c_{tt}=v^{-1}\tr({\rm M}_{\X})=1\) and \(\sum_{s,t}c_{st}^2=v^{-2}\tr({\rm M}_{\X}^2)=v^{-1}\).

Let $\eta_{it}=\varepsilon_{it}/\sigma_i$ and $\bet_{i.}=(\eta_{i1},\dots,\eta_{iT})^{\T}$, then we have the following decomposition:
\[\frac{\widehat\sigma_i^2}{\sigma_i^2}-1=\bet_{i.}^{\T}(v^{-1}{\rm M}_{\X})\bet_{i.}-1=L_{iT}+R_{iT},\]
where
\[
L_{iT}=
\sum_{1\le s\ne t\le T}c_{st}\,\eta_{is}\eta_{it}, \qquad R_{iT}=\sum_{t=1}^Tc_{tt}(\eta_{it}^2-1).
\]

Note that $\eta_{it}=\sigma_i^{-1}\sum_{k=1}^N(\bmS^{1/2})_{ik}z_{kt}$ are independent over $t$ with $\E(\eta_{it})=0$, $\E(\eta_{it}^2)=1$ and $\max_{i,t}\Vert\eta_{it}\Vert_{\psi_2}=O(1)$, thus
\[
\E(|L_{iT}+R_{iT}|^2)\le 2\{\E(|L_{iT}|^2)+\E(|R_{iT}|^2)\}\le C\sum_{s,t}c_{st}^2=O(T^{-1}).
\]

\end{proof}

\begin{lemma}
\label{lem:keymixed}
Under Assumption~\ref{ass:factor}-\ref{ass:growth} and \(H_0\), for each \(a\in\{2,4,6\}\),
\[
N^{-1/2}\sum_{i=1}^Nx_i^a\Bigl(\frac{\widehat\sigma_i^2}{\sigma_i^2}-1\Bigr)=o_p(1).
\]
\end{lemma}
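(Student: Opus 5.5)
We prove $N^{-1/2}\sum_i x_i^a(\widehat\sigma_i^2/\sigma_i^2-1)=o_p(1)$ by showing its second moment vanishes. Set
\[
S_{a,N}:=N^{-1/2}\sum_{i=1}^N x_i^a\Bigl(\frac{\widehat\sigma_i^2}{\sigma_i^2}-1\Bigr)
=N^{-1/2}\sum_{i=1}^N x_i^a(L_{iT}+R_{iT}),
\]
using the decomposition of Lemma~\ref{lem:sigmaexpand}. The plan has two steps: first isolate the mean, which is of exact order $\sqrt N/T$; then bound the centered second moment by $O(T^{-1})$ times a cross-sectional dependence sum. The mean is then killed by Assumption~\ref{ass:growth} and the fluctuation by $T\to\infty$ together with Assumption~\ref{ass:dep}.

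\emph{Step 1 (mean).} Write $x_i=\sum_t b_t\eta_{it}$ and $\widehat\sigma_i^2/\sigma_i^2-1=\sum_{s,t}c_{st}\eta_{is}\eta_{it}-1$, so that $E\{x_i^a(\widehat\sigma_i^2/\sigma_i^2-1)\}$ becomes a finite polynomial in the $b_t$, the $c_{st}$ and moments of $\eta_{it}$. Independence over $t$ forces the time indices to pair up.

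\begin{itemize}
\item In the Gaussian case, the key fact is ${\rm M}_{\X}\bone_T=\bzero$ and ${\rm M}_{\X}{\rm M}_{\F}=$ a matrix whose action on $\b$ vanishes. Since $\b\in{\rm span}(\X)$, ${\rm M}_{\X}\b=\bzero$, so $x_i$ and $\widehat\sigma_i^2$ are independent (Cochran). Hence $E\{x_i^a(\widehat\sigma_i^2/\sigma_i^2-1)\}=\mu_a\cdot 0=0$.
\item For ICM errors, the only surviving terms involve a fourth-order cumulant at a shared time $t$, of the form $\sum_t c_{tt}b_t^2\kappa_4$. These are $O(T^{-1})$ by Lemma~\ref{lem:bpower} and $|c_{tt}|\le CT^{-1}$.
\end{itemize}

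Thus $|E S_{a,N}|\le C\sqrt N/T\to0$ by Assumption~\ref{ass:growth}. The $O(T^{-1})$ per-$i$ rate is uniform in $i$ because the $\eta_{it}$ have uniformly bounded moments under Assumption~\ref{ass:ICM}.

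\emph{Step 2 (variance).} Write
\[
\var(S_{a,N})=N^{-1}\sum_{i,j}\Cov\bigl(x_i^aD_i,\;x_j^aD_j\bigr),\qquad D_i=\widehat\sigma_i^2/\sigma_i^2-1.
\]
The main step is to show that each covariance is bounded by $CT^{-1}\bigl(|r_{ij}|+T^{-1}\bigr)$, or more precisely by $CT^{-1}$ times a sum of powers $|r_{ij}|^{m}$ with $m\ge1$, plus $O(T^{-2})$.

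\begin{itemize}
\item For the diagonal $i=j$ terms, Cauchy--Schwarz with $E|D_i|^4=O(T^{-2})$ and bounded moments of $x_i$ give $O(T^{-1})$. Summing over $i$ and dividing by $N$ contributes $O(T^{-1})$.
\item For $i\ne j$, expand through $\x=\A\s$ (Lemma~\ref{lem:oraclerep}). Similarly, $\eta_{it}=\sum_k a_{ik}z_{kt}$. If the component sets feeding $i$ and $j$ did not overlap, the covariance would vanish, so every surviving term must carry at least one link $r_{ij}$ (or a product $a_{ik}a_{jk}$). The factor $T^{-1}$ arises because $D_i$ and $D_j$ are each centered quadratic forms: their joint contribution is $\sum_{s,t}c_{st}^2\,\cdot(\text{correlation})=O(T^{-1})$ by Lemma~\ref{lem:sigmaexpand}.
\item Aggregating gives $N^{-1}\sum_{i,j}|r_{ij}|^m\le N^{-1}\sum_{i,j}|r_{ij}|^{q'}\le s_N$ for powers $m\ge q$, using $|r_{ij}|\le1$. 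Hence $\var(S_{a,N})\le C s_N/T+O(T^{-1})$.
\end{itemize}

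Since $s_N^6\le\sqrt N$ and $\sqrt N/T\to0$, we get $s_N/T\to0$. Chebyshev's inequality then finishes the proof.

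\emph{Main obstacle.} The hard part is the covariance bound in Step 2 for non-Gaussian ICM errors. Unlike the Gaussian case, independence of $x_i$ and $D_i$ fails, and the cross terms involve higher cumulants at coinciding times and components. To control these, I would first Gaussianize: replace $\z_t$ by Gaussian vectors, where exact independence of $\b^\T\bme$ and ${\rm M}_{\X}\bme$ makes the covariance $E(x_i^ax_j^a)E(D_iD_j)-\mu_a^2\cdot0$. Here $E(D_iD_j)=2v^{-1}r_{ij}^2$ exactly, giving the clean bound $CT^{-1}r_{ij}^2$. I would then bound the non-Gaussian correction by the moment–cumulant argument of Lemma~\ref{lem:no-third-moment}: every non-pairing partition carries an extra $\sum_t|b_t|^r$ or $\sum_t|c_{tt}|^r$ factor of order $T^{-1}$, while the component-index sums are controlled by $\A\A^\T=\R$ as in Lemma~\ref{lem:oracle-mixed}.
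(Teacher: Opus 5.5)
Your proposal is correct and follows essentially the paper's route. You bound the bias by $O(\sqrt N/T)$ because ${\rm M}_{\X}\b=\bzero$ (equivalent to the paper's ${\rm M}_{\X}{\rm H}_{\F}=\bzero$), split the variance into an $O(T^{-1})$ diagonal part and a vanishing off-diagonal part, and finish with Chebyshev; your Gaussianize-then-cumulant bookkeeping simply replaces the Pesaran--Yamagata product-of-quadratic-forms formulas the paper cites. Two corrections are needed, neither fatal: the per-pair $O(T^{-2})$ remainders aggregate to $O(N/T^{2})$ (the paper's $NT^{-2}$ term, which vanishes by Assumption~\ref{ass:growth}), not $O(T^{-1})$; and under the ICM the fourth-cumulant cross terms carry links such as $\sum_k a_{ik}^2a_{jk}^2$, which are not bounded by powers of $|r_{ij}|$ but still aggregate to $O(T^{-1})$ because $\sum_i a_{ik}^2=(\A^{\T}\A)_{kk}\le C_\sigma/c_\sigma$.
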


\begin{proof}
    Recall that $\bet_t=\D^{-1/2}\bme_t=\A\z_t$, 
    \[x_i^2=\bet_{i.}^{\T}(w_T^{-1}{\rm H}_{\F})\bet_{i.},\qquad X_i:=\hat{\sigma}_i^2/\sigma_i^2=\bet_{i.}^{\T}(v^{-1}{\rm M}_{\X})\bet_{i.}\]
    where ${\rm H}_{\F}={\rm M}_{\F}\bone_T\bone_T^{\T}{\rm M}_{\F}$. For each $i$, $\bet_{i.}=(\eta_{i1},\dots,\eta_{iT})^{\T}\sim IID(\bzero,\I_T)$, and $\Vert\eta_{it}\Vert_{\psi_2}=O(1)$ as $\Vert z_{it}\Vert_{\psi_2}=O(1)$. Next, using the moments of products of quadratic forms under non-Gaussianity (for example, see Lemma 6 in \cite{PesaranYamagata2024}), we analyze the order of $N^{-1/2}\sum_{i}x_i^a(X_i-1)$.

    For $a=2$, 
    \begin{align*}
        \E(x_i^2)=&\E\{\bet_{i.}^{\T}(w_T^{-1}{\rm H}_{\F})\bet_{i.}\}=\tr\{(w_T^{-1}{\rm H}_{\F})\}=1,\\
        \E(x_i^2X_i)=&\E\{\bet_{i.}^{\T}(w_T^{-1}{\rm H}_{\F})\bet_{i.}\bet_{i.}^{\T}(v^{-1}{\rm M}_{\X})\bet_{i.}\}\\
        =&\gamma_{\eta,2}\tr\{(w_T^{-1}{\rm H}_{\F})\odot(v^{-1}{\rm M}_{\X})\}+\tr\{(w_T^{-1}{\rm H}_{\F})\}\tr\{(v^{-1}{\rm M}_{\X})\}\\
        &+2\tr\{(w_T^{-1}{\rm H}_{\F})(v^{-1}{\rm M}_{\X})\}\\
        =&1+O(T^{-1}),
    \end{align*}
    where $\gamma_{\eta,2}:=\E(\eta_{it}^4)-3=O(1)$. Here, the last step follows from the fact that ${\rm M}_{\X}{\rm H}_{\F}=\bzero$ and $\tr({\rm H}_{\F}\odot{\rm M}_{\X})=O(T)$. Hence,
    \[
    N^{-1/2}\sum_i\E\{x_i^2(X_1-1)\}=O(N^{1/2}T^{-1}).
    \]
    Furthermore,
    \begin{align*}
        &\var\left\{N^{-1/2}\sum_ix_i^2(X_i-1)\right\}\\
        =&N^{-1}\sum_i\var\{x_i^2(X_i-1)\}+N^{-1}\sum_{i\ne j}\Cov\{x_i^2(X_i-1),x_j^2(X_j-1)\}
    \end{align*}
    We first calculate that \(\E(x_i^4)=3+O(T^{-1})\), \(\E(x_i^4X_i)=3+O(T^{-1})\) and \(\E(x_i^4X_i^2)=3+O(T^{-1})\) (for details, see Lemma 11 in \cite{PesaranYamagata2024}). Thus,
    \[
    \var\{x_i^2(X_i-1)\}=\E\{x_i^4(X_i-1)^2\}-[\E\{x_i^2(X_i-1)\}]^2=O(T^{-1})+O(T^{-2})=O(T^{-1}).
    \]
    Next, we prove (for details, see Lemma 16 in \cite{PesaranYamagata2024}) 
    \[
    N^{-1}\sum_{i\ne j}\Cov\{x_i^2(X_i-1),x_j^2(X_j-1)\}=O(T^{-1}+NT^{-2}).
    \]
    In summary, 
    \[
    N^{-1/2}\sum_ix_i^2(X_i-1)=O_p(T^{-1/2}+N^{1/2}T^{-1})=o_p(1),
    \]
    as long as $N^{1/2}T^{-1}\to 0$ and $\min(N,T)\to\infty$.

    Using higher moments of quadratic forms, we can prove that \(N^{-1/2}\sum_ix_i^a(X_i-1)=o_p(1)\), for $a=4,6$.
\end{proof}

\begin{proposition}
\label{prop:oracle-rate}
Under Assumptions~\ref{ass:factor}--\ref{ass:growth}, for each \(a\in\{2,4,6\}\),
\[
Q_{a,N}-Q_{a,N}^{\cir}=o_p(1).
\]
\end{proposition}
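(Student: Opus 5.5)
We write $t_i = x_i X_i^{-1/2}$, where $X_i=\widehat\sigma_i^2/\sigma_i^2$. This holds because $t_i=\sqrt{w_T}\widehat\alpha_i/\widehat\sigma_i$ and $x_i=\sqrt{w_T}\widehat\alpha_i/\sigma_i$. For $a=2m$ this gives $t_i^a=x_i^a X_i^{-m}$, and hence
\[
Q_{a,N}-Q_{a,N}^{\cir}
=\frac{1}{\sqrt N}\sum_{i=1}^N x_i^a\bigl(X_i^{-m}-1\bigr)-\sqrt N\,(\mu_{a,v}-\mu_a).
\]
The plan is to expand $X_i^{-m}$ around $1$, so that the fluctuation part reduces to Lemma~\ref{lem:keymixed}. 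The deterministic centering discrepancy will then be matched against the second-order term of this expansion.

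\textbf{Step 1 (Taylor expansion).} On the event $\mathcal E_N=\{\max_i|X_i-1|\le 1/2\}$ we have
\[
X_i^{-m}-1=-m(X_i-1)+\tfrac{m(m+1)}{2}(X_i-1)^2+\rho_i,
\qquad |\rho_i|\le C|X_i-1|^3 .
\]
We will show $\Prb(\mathcal E_N)\to1$. Lemma~\ref{lem:sigmaexpand} writes $X_i-1$ as a quadratic form $L_{iT}+R_{iT}$ in sub-Gaussian variables. By Hanson--Wright, $\Prb(|X_i-1|>u)\le 2\exp\{-cT\min(u^2,u)\}$, and a union bound over $i$ gives $\max_i|X_i-1|=O_p(\sqrt{\log N/T})$. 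Assumption~\ref{ass:growth} ensures this tends to zero.

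\textbf{Step 2 (linear and cubic terms).} The linear term is $-mN^{-1/2}\sum_i x_i^a(X_i-1)$, which is $o_p(1)$ by Lemma~\ref{lem:keymixed}. For the cubic remainder we use the crude bound
\[
N^{-1/2}\sum_i x_i^a|X_i-1|^3\le \sqrt N\max_i|X_i-1|^3\cdot N^{-1}\sum_i x_i^a .
\]
The average $N^{-1}\sum_i x_i^a$ is $O_p(1)$ because $E x_i^a=O(1)$. The whole term is then $O_p\bigl(\sqrt N(\log N/T)^{3/2}\bigr)$, which is $o_p(1)$ since $\sqrt N/T\to0$.

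\textbf{Step 3 (quadratic term vs.\ centering).} This is the main obstacle. Term by term, $\tfrac{m(m+1)}{2}N^{-1/2}\sum_i x_i^a(X_i-1)^2$ has mean of order $\sqrt N\cdot T^{-1}$, which is only $o(1)$ marginally. Similarly $\mu_{a,v}-\mu_a=O(v^{-1})$, so $\sqrt N(\mu_{a,v}-\mu_a)=O(\sqrt N/T)$. Both are therefore $o(1)$ under Assumption~\ref{ass:growth}, because $\sqrt N(\log N)^6/T\to0$. So it is unnecessary to match them exactly, and the bias terms vanish separately.

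It remains to control the fluctuation $N^{-1/2}\sum_i\{x_i^a(X_i-1)^2-E[x_i^a(X_i-1)^2]\}$. Its variance has two parts:
\begin{itemize}
\item diagonal terms, of order $N^{-1}\sum_i E x_i^{2a}(X_i-1)^4=O(T^{-2})$;
\item off-diagonal covariances, bounded through the correlations.
\end{itemize}
For the off-diagonal part, the same quadratic-form moment calculations used in Lemma~\ref{lem:keymixed} (following Lemma~16 of \cite{PesaranYamagata2024}) should give $N^{-1}\sum_{i\neq j}\Cov=O(T^{-1}+NT^{-2})$ at worst, hence $o(1)$.

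Even simpler, $E|\cdot|\le \sqrt N\max_i E[x_i^a(X_i-1)^2]=O(\sqrt N/T)$ by Cauchy--Schwarz. This works because $E x_i^{2a}=O(1)$ and $E(X_i-1)^4=O(T^{-2})$, the latter following from the quadratic-form structure in Lemma~\ref{lem:sigmaexpand}. This $L_1$ bound already makes the quadratic term $o_p(1)$ with no variance computation.

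Combining Steps 1--3 on $\mathcal E_N$ yields $Q_{a,N}-Q_{a,N}^{\cir}=o_p(1)$ for $a\in\{2,4,6\}$.
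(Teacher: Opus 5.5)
Your proof is correct and follows the paper's argument: the same second-order Taylor expansion of $X_i^{-a/2}$ on $\{\max_i|X_i-1|\le 1/2\}$, the linear term handled by Lemma~\ref{lem:keymixed}, and the $t_v$-centering discrepancy bounded separately as $O(\sqrt N/T)$. The differences are minor. You bound the nonnegative quadratic term by its mean via Cauchy--Schwarz and $E(X_i-1)^4=O(T^{-2})$, which avoids the paper's $\max_i|x_i|^a=O_p\{(\log N)^{a/2}\}$ factor. You also verify through Hanson--Wright that the Taylor event has probability tending to one, a step the paper leaves implicit.
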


\begin{proof}
Let
\[
X_i=\frac{\widehat\sigma_i^2}{\sigma_i^2}.
\]
Then \(t_i^a=x_i^aX_i^{-a/2}\), so
\[
Q_{a,N}-Q_{a,N}^{\cir}
=
\frac1{\sqrt N}\sum_{i=1}^N x_i^a\{X_i^{-a/2}-1\}
+
\sqrt N(\mu_a-\mu_{a,v}).
\]
Because \(v=T-p-1\asymp T\), the Student centering correction satisfies
\[
\sqrt N|\mu_a-\mu_{a,v}|\le C\frac{\sqrt N}{T}.
\]
It remains to study the first term.

On the event \(\max_i|X_i-1|\le1/2\), a Taylor expansion gives
\[
X_i^{-a/2}-1
=
-\frac a2(X_i-1)+\frac{a(a+2)}8(X_i-1)^2+\rho_{ai},
\qquad |\rho_{ai}|\le C|X_i-1|^3.
\]
Hence
\[
Q_{a,N}-Q_{a,N}^{\cir}=S_{1,a,N}+S_{2,a,N}+S_{3,a,N}+O\!\left(\frac{\sqrt N}{T}\right),
\]
where
\begin{align*}
S_{1,a,N}&=-\frac a{2\sqrt N}\sum_{i=1}^N x_i^a(X_i-1),\\
S_{2,a,N}&=\frac{a(a+2)}{8\sqrt N}\sum_{i=1}^N x_i^a(X_i-1)^2,\\
S_{3,a,N}&=\frac1{\sqrt N}\sum_{i=1}^N x_i^a\rho_{ai}.
\end{align*}
By Lemma~\ref{lem:keymixed},
\[S_{1,a,N}=o_p(1).\]
For the second term, noting that \(x_i=\sum_{t=1}^Tb_t\eta_{it}\) and \(\Vert b_t\eta_{it}\Vert_{\psi_2}=|b_t|\cdot\Vert \eta_{it}\Vert_{\psi_2}=O(T^{-1/2})\). Then, using Berstein inequality, we have \(\max_{1\le i\le N}|x_i|=O_p(\sqrt{\log N})\), which combining with the result of Lemma~\ref{lem:sigmaexpand}, leads to
\[
|S_{2,a,N}|\le C\max_{1\le i\le N}|x_i|^a\cdot N^{-1/2}\sum_i(X_i-1)^2=O_p\{N^{1/2}(\log N)^{a/2}T^{-1}\}.
\]
Similarly,
\[
|S_{3,a,N}|\le C\max_{1\le i\le N}|x_i|^a\cdot N^{-1/2}\sum_i|X_i-1|^3=O_p\{N^{1/2}(\log N)^{a/2}T^{-3/2}\}.
\]
In summary, \(Q_{a,N}-Q_{a,N}^{\cir}=o_p(1)\) under the assumption that \(N^{1/2}(\log N)^3T^{-1}\to 0\).

\end{proof}

\subsection{Proof of Theorem~\ref{thm:null-joint}}
\label{app:proof-null}

Fix a deterministic vector \(\bml=(\lambda_2,\lambda_4,\lambda_6)^{\T}\in\mR^3\), and define
\[
S_N=\bml^{\T}\Q_N,
\qquad
S_N^{\cir}=\bml^{\T}\Q_N^{\cir}.
\]
We need to show
\[
S_N\dto \mathcal{N}\bigl(0,\bml^{\T}\B\bml\bigr).
\]
Proposition~\ref{prop:oracle-rate} implies
\[
S_N-S_N^{\cir}
=
\sum_{a\in\{2,4,6\}}\lambda_a\bigl(Q_{a,N}-Q_{a,N}^{\cir}\bigr)=o_p(1).
\]
Thus it suffices to prove the limit for \(S_N^{\cir}\).

Recall that \(\y=(y_1,\dots,y_N)^{\T}\sim\mathcal{N}(\bzero,\R)\), and define
\[
S_N^G=\frac1{\sqrt N}\sum_{i=1}^N h(y_i),
\qquad
h(u)=\lambda_2(u^2-1)+\lambda_4(u^4-3)+\lambda_6(u^6-15).
\]
By Proposition~\ref{prop:oraclecov},
\[
\var(S_N^{\cir})=\var(S_N^G)+O(T^{-1}),
\qquad
E(S_N^{\cir})=E(S_N^G)+O\!\left(\frac{\sqrt N}{T}\right).
\]
Since \(E(S_N^G)=0\) and \(\sqrt N/T\to0\), it is enough to establish the central limit theorem for \(S_N^G\), and then show that the non-Gaussian correction from \(S_N^{\cir}\) is negligible. The corrected benchmark \(B_N\) differs from the leading Gaussian matrix only by a diagonal \(O(v^{-1})\) perturbation, so the two normalizations are asymptotically equivalent.

Although the test statistics are defined from raw even powers, we use Hermite polynomials only as a proof device. Write
\[
h(u)=\beta_2H_2(u)+\beta_4H_4(u)+\beta_6H_6(u),
\]
where
\[
\beta_2=\lambda_2+6\lambda_4+45\lambda_6,
\qquad
\beta_4=\lambda_4+15\lambda_6,
\qquad
\beta_6=\lambda_6.
\]
Hence
\[
S_N^G=\beta_2T_{2,N}+\beta_4T_{4,N}+\beta_6T_{6,N},
\qquad
T_{m,N}=\frac1{\sqrt N}\sum_{i=1}^NH_m(y_i),\quad m=2,4,6.
\]
For jointly Gaussian \((Y_1,Y_2)\) with correlation \(\rho\),
\[
E\{H_m(Y_1)H_n(Y_2)\}=0\quad (m\neq n),
\qquad
E\{H_m(Y_1)H_m(Y_2)\}=m!\rho^m.
\]
Therefore,
\[
\var(S_N^G)
=
\beta_2^2\frac{2}{N}\sum_{i,j}r_{ij}^2
+
\beta_4^2\frac{24}{N}\sum_{i,j}r_{ij}^4
+
\beta_6^2\frac{720}{N}\sum_{i,j}r_{ij}^6
=
\bml^{\T}\B_N^{(0)} \bml.
\]
By Proposition~\ref{prop:oraclecov},
\[
\max_{1\le k,\ell\le 3}|B_{k\ell,N}-B^{(0)}_{k\ell,N}|
\le Cv^{-1}=o(1),
\]
so
\[
\bml^{\T}\B_N\bml
=
\bml^{\T}\B_N^{(0)}\bml+o(1).
\]
In particular, \(\var(S_N^G)\to \bml^{\T}\B\bml\in(0,\infty)\).

We now show that every fixed-order cumulant of order at least three vanishes. For \(m\ge3\), multilinearity gives
\[
\kappa_m(S_N^G)
=
N^{-m/2}
\sum_{i_1,\dots,i_m=1}^N
\kappa\bigl(h(y_{i_1}),\dots,h(y_{i_m})\bigr).
\]
Each random variable \(h(y_{i_\nu})\) is a finite linear combination of \(H_2(y_{i_\nu})\), \(H_4(y_{i_\nu})\), and \(H_6(y_{i_\nu})\). Thus every joint cumulant above is a finite linear combination of terms of the form
\[
\kappa\bigl(H_{q_1}(y_{i_1}),\dots,H_{q_m}(y_{i_m})\bigr),
\qquad q_\nu\in\{2,4,6\}.
\]
The Gaussian diagram formula expresses such a cumulant as a finite sum over \emph{connected} diagrams. A connected diagram on \(m\) labeled vertices contains at least \(m-1\) edges. Because the Hermite rank is at least two, every edge contributes at least one factor of \(|r_{i_u i_v}|^2\). Therefore, for some constant \(C_m\),
\[
\Bigl|\kappa\bigl(h(y_{i_1}),\dots,h(y_{i_m})\bigr)\Bigr|
\le
C_m\sum_{\mathcal T\in\mathfrak T_m}\prod_{(u,v)\in\mathcal T}|r_{i_ui_v}|^2,
\]
where \(\mathfrak T_m\) is the finite set of spanning trees on \(\{1,\dots,m\}\). Since the number of such trees depends only on \(m\), it is enough to bound the sum associated with one fixed tree.

Let
\[
\Gamma_N:=\max_{1\le i\le N}\sum_{j=1}^N r_{ij}^2.
\]
Under Assumption~\ref{ass:dep}, \(\Gamma_N\le C(\log N)^C\). Take any tree \(\mathcal T\). By summing the indices one at a time from the leaves toward the root,
\[
\sum_{i_1,\dots,i_m=1}^N\prod_{(u,v)\in\mathcal T}|r_{i_ui_v}|^2
\le N\Gamma_N^{m-1}.
\]
Hence,
\[
|\kappa_m(S_N^G)|
\le
C_mN^{-m/2}\cdot N\Gamma_N^{m-1}
=
C_mN^{1-m/2}\Gamma_N^{m-1}.
\]
Because \(\Gamma_N\) grows no faster than a power of \(\log N\),
\[
|\kappa_m(S_N^G)|\le C_mN^{1-m/2}(\log N)^{C(m-1)}\to0,
\qquad m\ge3.
\]
For instance,
\[
|\kappa_3(S_N^G)|\le C N^{-1/2}(\log N)^{2C},
\qquad
|\kappa_4(S_N^G)|\le C N^{-1}(\log N)^{3C}.
\]

We now explain why the same limit law also holds for \(S_N^{\cir}\). The key point is that every occurrence of a genuinely non-Gaussian cumulant of the independent components \(s_k\) costs at least \(T^{-1/2}\), and because the statistic is built from \emph{even} powers only, any odd cumulant must appear in pairs, which upgrades the total loss to \(T^{-1}\); this is exactly Lemma~\ref{lem:no-third-moment}.

To make this precise, fix a positive integer \(m\). The cumulant \(\kappa_m(S_N^{\cir})\) is a linear combination of moments of \(\{x_i\}\) whose total degree does not exceed \(6m\). Each such moment can be expanded in the independent variables \(\{s_k\}\). The Gaussian contribution corresponds to pairings only and reproduces \(\kappa_m(S_N^G)\). Every non-Gaussian term contains at least one block of size different from two. If the block size is at least four, Lemma~\ref{lem:skcumulant} contributes \(T^{-1}\) immediately. If an odd block appears, then because the overall degree is even, at least two odd blocks appear, and hence the product is again at most \(T^{-1}\). Therefore,
\[
\kappa_m(S_N^{\cir})=\kappa_m(S_N^G)+R_{m,N},
\qquad |R_{m,N}|\le C_mT^{-1}.
\]
For \(m=1\), this gives \(E(S_N^{\cir})=o(1)\). For \(m=2\), it gives
\[
\var(S_N^{\cir})=\var(S_N^G)+O(T^{-1})=\bml^{\T}\B_N^{(0)}\bml+O(T^{-1}).
\]
Because \(\max_{1\le k,l\le 3}|(B_N-B_N^{(0)})_{kl}|=O(v^{-1})\) by Proposition~\ref{prop:oraclecov}, we also have
\[
\var(S_N^{\cir})=\bml^{\T}\B_N\bml+o(1).
\]
For \(m\ge3\), we have
\[
|\kappa_m(S_N^{\cir})|
\le C_mN^{1-m/2}(\log N)^{C(m-1)}+C_mT^{-1}
\to0.
\]
Since all cumulants of order at least three vanish and the variance converges to \(\bml^{\T}B\bml\), the method of cumulants implies
\[
S_N^{\cir}\dto \mathcal{N}\bigl(0,\bml^{\T}B\bml\bigr).
\]
We now transfers the limit from \(S_N^{\cir}\) to \(S_N\). Because \(\bml\) is arbitrary, the Cram\'er--Wold device yields
\[
\B_N^{-1/2}\bm Q_N\dto \mathcal{N}(\bzero,\I_3).
\]
This completes the proof of Theorem~\ref{thm:null-joint}.

\subsection{Proof of Theorem~\ref{thm:Bhat}}
\label{app:proof-bhat}

\begin{proof}
Write
\({\rm M}_{\X}=\I_T-{\rm P}_{\X}\) with ${\rm P}_{\X}=\X(\X^{\T}\X)^{-1}\X^{\T}$ and $\X=(\bone_T,\F)$. Under \(H_0\), the OLS residual vector for asset \(i\) is
\[
\widehat{\bme}_{i.}={\rm M}_{\X}\bme_{i.},
\]
so if we define
\[
\widehat s_{ij}
:=
\frac1v\,\widehat{\bme}_{i.}^{\T}\widehat{\bme}_{j.}
=
\frac1v\,\bme_{i.}^{\T} {\rm M}_{\X}\bme_j,
\qquad
1\le i,j\le N,
\]
then
\[
\widehat r_{ij}
=
\frac{\widehat s_{ij}}{(\widehat s_{ii}\widehat s_{jj})^{1/2}}.
\]

Since \(\widehat s_{ij}
=v^{-1}\bme_{i.}^{\T} \bme_{j.}-v^{-1}\bme_{i.}^{\T} {\rm P}_{\X}\bme_{j.}\),
\[
\frac{\widehat s_{ij}}{\sigma_i\sigma_j}-r_{ij}
=
\frac{T}{v}\cdot
\frac1T\sum_{t=1}^T
\left(
\frac{\varepsilon_{it}\varepsilon_{jt}}{\sigma_i\sigma_j}-r_{ij}
\right)
+
\left(\frac{T}{v}-1\right)r_{ij}
-
\frac1{v\sigma_i\sigma_j}\bme_{i.}^{\T} {\rm P}_{\X}\bme_j.
\]
We control the three terms on the right-hand side separately.

First, define
\[
\xi_{ij,t}
=
\frac{\varepsilon_{it}\varepsilon_{jt}}{\sigma_i\sigma_j}-r_{ij}.
\]
Under Assumption~\ref{ass:ICM} that $\max_{i,t}\Vert z_{it}\Vert_{\psi_2}=O(1)$, each standardized error
\[
\eta_{it}:=\frac{\varepsilon_{it}}{\sigma_i}=\sigma_i^{-1}\sum_{k=1}^N(\bmS^{1/2})_{ik}z_{kt}
\]
is a sub-Gaussian random variable with \(\max_{i,t}\|\eta_{it}\|_{\psi_2}=O(1)\).
Therefore the product \(\eta_{it}\eta_{jt}\) is uniformly sub-Gaussian, and so is
\(\xi_{ij,t}\). Since \(\{\xi_{ij,t}\}_{t=1}^T\) are i.i.d.\ over \(t\) with mean zero, Bernstein's inequality yields that, for any \(x\in(0,1]\),
\[
\Prb\!\left(
\left|
\frac1T\sum_{t=1}^T \xi_{ij,t}
\right|>x
\right)
\le
2\exp(-cTx^2),
\]
where \(c>0\) does not depend on \(i,j,N,T\). Taking
\(x=A\sqrt{(\log N)/T}\) with \(A>0\) sufficiently large and applying the union bound over all \(1\le i,j\le N\), we obtain
\[
\max_{1\le i,j\le N}
\left|
\frac1T\sum_{t=1}^T \xi_{ij,t}
\right|
=
O_p\!\left(\sqrt{\frac{\log N}{T}}\right).
\]
Since \(v=T-p-1\asymp T\) and \(p\) is fixed,
\[
\left|\frac{T}{v}-1\right|
=
O(T^{-1}),
\]
so the second term is uniformly \(O(T^{-1})\).

It remains to control the projection term. Because \({\rm P}_{\X}\) is an orthogonal projection of rank
\(p+1\), we may write
\[
{\rm P}_{\X}=\sum_{\ell=1}^{p+1}u_\ell u_\ell^{\T},
\]
where \(u_1,\dots,u_{p+1}\in\mR^T\) are orthonormal vectors. Therefore
\[
\frac{1}{\sigma_i\sigma_j}
\bigl|\bme_{i.}^{\T} {\rm P}_{\X}\bme_{j.}\bigr|
\le
\sum_{\ell=1}^{p+1}
\frac{|u_\ell^{\T} \bme_{i.}|}{\sigma_i}
\frac{|u_\ell^{\T} \bme_{j.}|}{\sigma_j}
\le
(p+1)\,M_N^2,
\]
where
\[
M_N
:=
\max_{1\le i\le N}\max_{1\le \ell\le p+1}
\frac{|u_\ell^{\T} \bme_{i.}|}{\sigma_i}.
\]
For each fixed \((i,\ell)\), since \(u_\ell\) has Euclidean norm one and the coordinates
\(\{\varepsilon_{it}/\sigma_i\}_{t=1}^T\) are independent sub-Gaussian random variables,
\(u_\ell^{\T} \bme_{i.}/\sigma_i\) is again sub-Gaussian. Hence, by a union bound over \(i\) and \(\ell\),
\[
M_N=O_p\!\bigl(\sqrt{\log N}\bigr).
\]
It follows that
\[
\max_{1\le i,j\le N}
\frac{1}{v\sigma_i\sigma_j}
\bigl|\bme_{i.}^{\T} {\rm P}_{\X}\bme_{j.}\bigr|
=
O_p\!\left(\frac{\log N}{T}\right)
=
o_p\!\left(\sqrt{\frac{\log N}{T}}\right).
\]
Combining the three bounds, we conclude that
\[
\max_{1\le i,j\le N}
\left|
\frac{\widehat s_{ij}}{\sigma_i\sigma_j}-r_{ij}
\right|
=
O_p\!\left(\sqrt{\frac{\log N}{T}}\right).
\]
In particular, taking \(i=j\) gives
\[
\max_{1\le i\le N}
\left|
\frac{\widehat s_{ii}}{\sigma_i^2}-1
\right|
=
O_p\!\left(\sqrt{\frac{\log N}{T}}\right).
\]

Define
\[
\widehat d_i:=\frac{\widehat s_{ii}}{\sigma_i^2},
\qquad
\widehat\theta_{ij}:=\frac{\widehat s_{ij}}{\sigma_i\sigma_j}.
\]
Then
\[
\widehat r_{ij}
=
\frac{\widehat\theta_{ij}}{(\widehat d_i\widehat d_j)^{1/2}}.
\]
By the above bound obtained and the lower variance bound in Assumption~\ref{ass:ICM},
with probability tending to one,
\[
\frac12\le \widehat d_i\le \frac32
\qquad\text{for all }1\le i\le N.
\]
On this event,
\[
\widehat r_{ij}-r_{ij}
=
\frac{\widehat\theta_{ij}-r_{ij}}{(\widehat d_i\widehat d_j)^{1/2}}
+
r_{ij}
\left\{
(\widehat d_i\widehat d_j)^{-1/2}-1
\right\}.
\]
Since the function \(x\mapsto x^{-1/2}\) is Lipschitz on \([1/2,3/2]\),
\[
\left|(\widehat d_i\widehat d_j)^{-1/2}-1\right|
\le
C\bigl(|\widehat d_i-1|+|\widehat d_j-1|\bigr).
\]
Also \(|r_{ij}|\le1\). Therefore
\[
\max_{1\le i,j\le N}|\widehat r_{ij}-r_{ij}|
\le
C\max_{1\le i,j\le N}|\widehat\theta_{ij}-r_{ij}|
+
C\max_{1\le i\le N}|\widehat d_i-1|
=
O_p\!\left(\sqrt{\frac{\log N}{T}}\right).
\]

Let \(\Delta_N:=\max_{1\le i,j\le N}|\widehat r_{ij}-r_{ij}|\), then \(\Delta_N=O_p\{\sqrt{(\log N)/T}\}\). Because \(\tau_N=C_\tau\sqrt{(\log N)/T}\), by taking the constant \(C_\tau\) sufficiently large we may assume
\[
\Prb\!\left(\Delta_N\le \frac{\tau_N}{2}\right)\to1.
\]
Let
\[
\mathcal E_N:=\left\{\Delta_N\le \frac{\tau_N}{2}\right\}.
\]
We work on \(\mathcal E_N\), whose probability tends to one.

Fix \(m\in\{1,2,3,4\}\) and consider
\[
A_{ij}^{(m)}:=\bigl|\widetilde r_{ij}^{\,2m}-r_{ij}^{\,2m}\bigr|.
\]
Since \(\widetilde r_{ii}=r_{ii}=1\), we only need to consider \(i\neq j\).

We split the off-diagonal indices into two cases.

\smallskip
\noindent
\emph{Case 1: \(|r_{ij}|\le 2\tau_N\).}
If \(|\widehat r_{ij}|\le \tau_N\), then \(\widetilde r_{ij}=0\), so
\[
A_{ij}^{(m)}=|r_{ij}|^{2m}
\le
(2\tau_N)^{2m-q}|r_{ij}|^q.
\]
If \(|\widehat r_{ij}|>\tau_N\), then on \(\mathcal E_N\),
\[
|r_{ij}|
\ge
|\widehat r_{ij}|-|\widehat r_{ij}-r_{ij}|
>
\tau_N-\frac{\tau_N}{2}
=
\frac{\tau_N}{2},
\]
and also
\[
|\widetilde r_{ij}|=|\widehat r_{ij}|
\le
|r_{ij}|+\frac{\tau_N}{2}
\le
\frac52\tau_N.
\]
Hence
\[
A_{ij}^{(m)}
\le
|\widetilde r_{ij}|^{2m}+|r_{ij}|^{2m}
\le
C\tau_N^{2m}
\le
C\tau_N^{2m-q}|r_{ij}|^q.
\]
Therefore, in all situations with \(|r_{ij}|\le 2\tau_N\),
\[
A_{ij}^{(m)}
\le
C\tau_N^{2m-q}|r_{ij}|^q.
\]

\smallskip
\noindent
\emph{Case 2: \(|r_{ij}|>2\tau_N\).}
On \(\mathcal E_N\),
\[
|\widehat r_{ij}|
\ge
|r_{ij}|-\frac{\tau_N}{2}
>
\frac32\tau_N>\tau_N,
\]
so \(\widetilde r_{ij}=\widehat r_{ij}\). By the mean value theorem,
\[
A_{ij}^{(m)}
=
\bigl|\widehat r_{ij}^{\,2m}-r_{ij}^{\,2m}\bigr|
\le
C|\widehat r_{ij}-r_{ij}|\,|r_{ij}|^{2m-1}
\le
C\tau_N |r_{ij}|^{2m-1}.
\]
Because \(q<1\le 2m-1\) and \(|r_{ij}|\le1\),
\[
|r_{ij}|^{2m-1}\le |r_{ij}|^q.
\]
Hence
\[
A_{ij}^{(m)}
\le
C\tau_N|r_{ij}|^q.
\]

Combining the two cases and using \(\tau_N\to0\), we obtain on \(\mathcal E_N\),
\[
A_{ij}^{(m)}
\le
C\tau_N|r_{ij}|^q,
\qquad i\neq j.
\]
Therefore,
\[
\frac1N\sum_{i,j=1}^N A_{ij}^{(m)}
=
\frac1N\sum_{i\neq j}A_{ij}^{(m)}
\le
\frac{C\tau_N}{N}\sum_{i=1}^N\sum_{j\neq i}|r_{ij}|^q
\le
C s_N\tau_N.
\]
Now Assumptions~\ref{ass:dep} and \ref{ass:growth} imply \(s_N\tau_N\to0\). Indeed,
\[
s_N\tau_N
=
s_N\sqrt{\frac{\log N}{T}}
=
\left(
\frac{s_N^6(\log N)^{12}}{\sqrt N}
\right)^{1/6}
\left(
\frac{\sqrt N(\log N)^6}{T}
\right)^{1/2}
\frac{1}{N^{1/6}(\log N)^{9/2}}
\to0.
\]
Hence, for each \(m=1,2,3,4\),
\[
\frac1N\sum_{i,j=1}^N
\bigl|\widetilde r_{ij}^{\,2m}-r_{ij}^{\,2m}\bigr|
\pto 0.
\]

Each entry of \(\widehat \B_N-\B_N\) is a finite linear combination of such averages, with coefficients bounded uniformly in \(N\), and with some additional factors \(v^{-1}=O(T^{-1})\) on the diagonal correction terms. For example,
\[
|\widehat B_{66,N}-B_{66,N}|
\le
C\sum_{m=1}^4
\left(1+\frac1v\right)
\frac1N\sum_{i,j=1}^N
\bigl|\widetilde r_{ij}^{\,2m}-r_{ij}^{\,2m}\bigr|,
\]
and the same type of bound holds for the other five entries. Therefore,
\[
\max_{1\le k,\ell\le 3}
|\widehat B_{k\ell,N}-B_{k\ell,N}|
\pto 0.
\]

If in addition \(\B_N\to \B\) for some positive definite matrix \(\B\), then \(\B_N\) is eventually positive definite and bounded away from singularity. Since \(\widehat \B_N-\B_N\pto0\) entrywise and the matrices are \(3\times3\), we also have \(\|\widehat \B_N-\B_N\|\pto0\) in any matrix norm. The map \(A\mapsto A^{-1/2}\) is continuous on the positive-definite cone, so
\[
\|\widehat \B_N^{-1/2}-\B_N^{-1/2}\|_{\max}\pto 0.
\]
Finally, Theorem~\ref{thm:null-joint} states that
\[
\B_N^{-1/2}\Q_N\dto \mathcal{N}(\bzero,\I_3).
\]
Since
\[
\widehat \B_N^{-1/2}\Q_N
=
\B_N^{-1/2}\Q_N
+
\bigl(\widehat \B_N^{-1/2}-\B_N^{-1/2}\bigr)\Q_N
\]
and \(\widehat \B_N^{-1/2}-\B_N^{-1/2}\pto0\), Slutsky's theorem yields
\[
\widehat \B_N^{-1/2}\Q_N\dto \mathcal{N}(\bzero,\I_3).
\]
This completes the proof.
\end{proof}

\subsection{Proof of Theorem \ref{thm:power}}
\begin{proof}
We prove the result separately for \(a=2\), \(a=4\), and \(a=6\). Write
\[
\zeta_i
=
\frac{\sqrt{w_T}(\widehat\alpha_i-\alpha_i)}{\sigma_i},
\qquad
X_i=\frac{\widehat\sigma_i^2}{\sigma_i^2}.
\]
Then, by definition of \(\de_i=\sqrt{w_T}\alpha_i/\sigma_i\),
\[
\frac{\sqrt{w_T}\widehat\alpha_i}{\sigma_i}
=
\zeta_i+\de_i.
\]
Hence
\[
t_i
=
\frac{\sqrt{w_T}\widehat\alpha_i}{\widehat\sigma_i}
=
(\zeta_i+\de_i)X_i^{-1/2}.
\]
Moreover,
\[
\widehat\alpha_i-\alpha_i
=
\frac{\bone_T^{\T} {\rm M}_{\F}\bme_{i.}}{w_T},
\]
so the random vector \((\zeta_1,\dots,\zeta_N)^{\T}\) has exactly the same stochastic structure as the oracle vector \((x_1,\dots,x_N)^{\T}\) under \(H_0\). In particular, all moment and covariance bounds used in the proof of Theorem~\ref{thm:null-joint} remain valid for \(\zeta_i\).

Now define the ``oracle-under-\(H_1\)'' statistic
\[
\bar Q_{a,N}
=
\frac{1}{\sqrt N}\sum_{i=1}^N \Bigl\{(\zeta_i+\de_i)^a-\mu_a\Bigr\},
\qquad a\in\{2,4,6\},
\]
where \(\mu_2=1\), \(\mu_4=3\), and \(\mu_6=15\).

Exactly as in Proposition~\ref{prop:oracle-rate}, the studentization error is asymptotically negligible. Indeed, the same argument gives
\[
Q_{a,N}-\bar Q_{a,N}=o_p(1),
\qquad a\in\{2,4,6\},
\]
because \(X_i^{-a/2}-1\) contributes only a higher-order term, and
\[
\sqrt N\,|\mu_{a,v}-\mu_a|
=
O\!\left(\frac{\sqrt N}{T}\right)
=o(1)
\]
by Assumption~\ref{ass:growth}. Therefore it suffices to study \(\bar Q_{a,N}\).

We first establish a simple bound that will be used repeatedly below. Let
\[
S_N(\ell,\psi)
=
\frac{1}{\sqrt N}\sum_{i=1}^N \de_i^\ell \psi(\zeta_i),
\]
where \(\ell\ge 1\) and \(\psi\) is one of the centered polynomials
\[
u,\qquad u^3,\qquad u^5,\qquad u^2-1,\qquad u^4-3.
\]
Because the innovations are symmetric, all odd moments of \(\zeta_i\) vanish. Also, by the same Hermite-expansion argument used earlier under \(H_0\),
\[
\bigl|\Cov\{\psi(\zeta_i),\psi(\zeta_j)\}\bigr|
\le C |r_{ij}|,
\qquad 1\le i,j\le N,
\]
for some constant \(C>0\) depending only on \(\psi\). Therefore,
\begin{align*}
\var\{S_N(\ell,\psi)\}
&\le
\frac{C}{N}\sum_{i,j=1}^N |\de_i|^\ell |\de_j|^\ell |r_{ij}| \\
&\le
\frac{C}{N}\sum_{i=1}^N |\de_i|^\ell \sum_{j=1}^N |\de_j|^\ell |r_{ij}|.
\end{align*}
Since \(|r_{ij}|\le 1\) and \(q<1\), we have \(|r_{ij}|\le |r_{ij}|^q\). Hence Assumption~\ref{ass:dep} implies
\[
\max_{1\le i\le N}\sum_{j=1}^N |r_{ij}|
\le
\max_{1\le i\le N}\sum_{j=1}^N |r_{ij}|^q
\le s_N.
\]
Thus
\[
\var\{S_N(\ell,\psi)\}
\le
\frac{Cs_N}{N}\sum_{i=1}^N \de_i^{2\ell}.
\]
We now check this bound for the values of \(\ell\) that will appear later.

For \(\ell=1,2,3\),
\[
\frac{s_N}{N}\sum_{i=1}^N \de_i^{2\ell}
=
\frac{s_N}{\sqrt N}\Delta_{2\ell,N}
\to 0,
\]
because \(\Delta_{2\ell,N}=O(1)\) and Assumption~\ref{ass:dep} implies \(s_N/\sqrt N\to 0\).

For \(\ell=4\),
\[
\frac{s_N}{N}\sum_{i=1}^N \de_i^8
\le
\frac{s_N}{N}\Bigl(\max_{1\le i\le N}\de_i^2\Bigr)\sum_{i=1}^N \de_i^6
=
\frac{s_N}{\sqrt N}\Bigl(\max_{1\le i\le N}\de_i^2\Bigr)\Delta_{6,N}
\to 0.
\]
Similarly, for \(\ell=5\),
\[
\frac{s_N}{N}\sum_{i=1}^N \de_i^{10}
\le
\frac{s_N}{N}\Bigl(\max_{1\le i\le N}\de_i^4\Bigr)\sum_{i=1}^N \de_i^6
=
\frac{s_N}{\sqrt N}\Bigl(\max_{1\le i\le N}\de_i^4\Bigr)\Delta_{6,N}
\to 0.
\]
Hence, for every weighted fluctuation term that appears below,
\[
S_N(\ell,\psi)=o_p(1).
\]

By direct expansion,
\[
(\zeta_i+\de_i)^2-1
=
(\zeta_i^2-1)+2\de_i\zeta_i+\de_i^2.
\]
Therefore
\[
\bar Q_{2,N}
=
\frac{1}{\sqrt N}\sum_{i=1}^N (\zeta_i^2-1)
+
\frac{2}{\sqrt N}\sum_{i=1}^N \de_i\zeta_i
+
\frac{1}{\sqrt N}\sum_{i=1}^N \de_i^2.
\]
Define
\[
Q_{2,N}^{(0)}
=
\frac{1}{\sqrt N}\sum_{i=1}^N (\zeta_i^2-1).
\]
Then, 
\[
\frac{2}{\sqrt N}\sum_{i=1}^N \de_i\zeta_i=o_p(1),
\]
and by definition,
\[
\frac{1}{\sqrt N}\sum_{i=1}^N \de_i^2=\Delta_{2,N}.
\]
Hence
\[
\bar Q_{2,N}=Q_{2,N}^{(0)}+\Delta_{2,N}+o_p(1).
\]
Since \(Q_{2,N}-\bar Q_{2,N}=o_p(1)\), we conclude that
\[
Q_{2,N}=Q_{2,N}^{(0)}+\Delta_{2,N}+o_p(1).
\]

Using the binomial formula,
\[
(\zeta_i+\de_i)^4
=
\zeta_i^4+4\de_i\zeta_i^3+6\de_i^2\zeta_i^2+4\de_i^3\zeta_i+\de_i^4.
\]
Subtracting \(3\) gives
\[
(\zeta_i+\de_i)^4-3
=
(\zeta_i^4-3)
+4\de_i\zeta_i^3
+6\de_i^2(\zeta_i^2-1)
+4\de_i^3\zeta_i
+\bigl(6\de_i^2+\de_i^4\bigr).
\]
Hence
\begin{align*}
\bar Q_{4,N}
&=
\frac{1}{\sqrt N}\sum_{i=1}^N (\zeta_i^4-3)
+\frac{4}{\sqrt N}\sum_{i=1}^N \de_i\zeta_i^3
+\frac{6}{\sqrt N}\sum_{i=1}^N \de_i^2(\zeta_i^2-1) \\
&\quad
+\frac{4}{\sqrt N}\sum_{i=1}^N \de_i^3\zeta_i
+\frac{1}{\sqrt N}\sum_{i=1}^N \bigl(6\de_i^2+\de_i^4\bigr).
\end{align*}
Define
\[
Q_{4,N}^{(0)}
=
\frac{1}{\sqrt N}\sum_{i=1}^N (\zeta_i^4-3).
\]
So
\[
\frac{4}{\sqrt N}\sum_{i=1}^N \de_i\zeta_i^3=o_p(1),\qquad
\frac{6}{\sqrt N}\sum_{i=1}^N \de_i^2(\zeta_i^2-1)=o_p(1),\qquad
\frac{4}{\sqrt N}\sum_{i=1}^N \de_i^3\zeta_i=o_p(1).
\]
Moreover,
\[
\frac{1}{\sqrt N}\sum_{i=1}^N \bigl(6\de_i^2+\de_i^4\bigr)
=
6\Delta_{2,N}+\Delta_{4,N}.
\]
Therefore
\[
Q_{4,N}
=
Q_{4,N}^{(0)}+\bigl(6\Delta_{2,N}+\Delta_{4,N}\bigr)+o_p(1).
\]

Similarly,
\[
(\zeta_i+\de_i)^6
=
\zeta_i^6
+6\de_i\zeta_i^5
+15\de_i^2\zeta_i^4
+20\de_i^3\zeta_i^3
+15\de_i^4\zeta_i^2
+6\de_i^5\zeta_i
+\de_i^6.
\]
Subtracting \(15\) gives
\begin{align*}
(\zeta_i+\de_i)^6-15
&=
(\zeta_i^6-15)
+6\de_i\zeta_i^5
+15\de_i^2(\zeta_i^4-3)
+20\de_i^3\zeta_i^3 \\
&\quad
+15\de_i^4(\zeta_i^2-1)
+6\de_i^5\zeta_i
+\bigl(45\de_i^2+15\de_i^4+\de_i^6\bigr).
\end{align*}
Hence, with
\[
Q_{6,N}^{(0)}
=
\frac{1}{\sqrt N}\sum_{i=1}^N (\zeta_i^6-15),
\]
we obtain
\begin{align*}
\bar Q_{6,N}
&=
Q_{6,N}^{(0)}
+\frac{6}{\sqrt N}\sum_{i=1}^N \de_i\zeta_i^5
+\frac{15}{\sqrt N}\sum_{i=1}^N \de_i^2(\zeta_i^4-3)
+\frac{20}{\sqrt N}\sum_{i=1}^N \de_i^3\zeta_i^3 \\
&\quad
+\frac{15}{\sqrt N}\sum_{i=1}^N \de_i^4(\zeta_i^2-1)
+\frac{6}{\sqrt N}\sum_{i=1}^N \de_i^5\zeta_i
+\frac{1}{\sqrt N}\sum_{i=1}^N \bigl(45\de_i^2+15\de_i^4+\de_i^6\bigr).
\end{align*}
Again, we can shows that all five random weighted sums are \(o_p(1)\). Therefore
\[
Q_{6,N}
=
Q_{6,N}^{(0)}
+\bigl(45\Delta_{2,N}+15\Delta_{4,N}+\Delta_{6,N}\bigr)
+o_p(1).
\]

The random variables \(Q_{2,N}^{(0)}\), \(Q_{4,N}^{(0)}\), and \(Q_{6,N}^{(0)}\) have the same first-order limiting behavior as the corresponding test statistics under \(H_0\), because they are built from the same noise part \(\zeta_i\). Hence, by Theorems~\ref{thm:null-joint} and \ref{thm:Bhat},
\[
\frac{Q_{2,N}^{(0)}}{\sqrt{\widehat B_{22,N}}}\dto N(0,1),\qquad
\frac{Q_{4,N}^{(0)}}{\sqrt{\widehat B_{44,N}}}\dto N(0,1),\qquad
\frac{Q_{6,N}^{(0)}}{\sqrt{\widehat B_{66,N}}}\dto N(0,1).
\]
Combining these limits with the decompositions and using
\[
\widehat B_{aa,N}\pto B_{aa,N}\to B_{aa}\in(0,\infty),
\qquad a\in\{2,4,6\},
\]
we obtain
\[
T_{2,N}
=
\frac{Q_{2,N}^{(0)}}{\sqrt{\widehat B_{22,N}}}
+
\frac{\Delta_{2,N}}{\sqrt{B_{22}}}
+
o_p(1)
\dto
N\!\left(\frac{\tau_2}{\sqrt{B_{22}}},\,1\right),
\]
\[
T_{4,N}
=
\frac{Q_{4,N}^{(0)}}{\sqrt{\widehat B_{44,N}}}
+
\frac{6\Delta_{2,N}+\Delta_{4,N}}{\sqrt{B_{44}}}
+
o_p(1)
\dto
N\!\left(\frac{6\tau_2+\tau_4}{\sqrt{B_{44}}},\,1\right),
\]
and
\[
T_{6,N}
=
\frac{Q_{6,N}^{(0)}}{\sqrt{\widehat B_{66,N}}}
+
\frac{45\Delta_{2,N}+15\Delta_{4,N}+\Delta_{6,N}}{\sqrt{B_{66}}}
+
o_p(1)
\dto
N\!\left(\frac{45\tau_2+15\tau_4+\tau_6}{\sqrt{B_{66}}},\,1\right).
\]
This proves the three asymptotic normal limits.
\end{proof}

\subsection{Proof of Theorem~\ref{thm:indep}}
\label{app:proof-indep}

By the Cram\'er--Wold device, it is enough to prove the joint convergence of
\[
S_N=\bml^{\T}\B_N^{-1/2}\Q_N
\]
and
\[
M_N=L_{\infty,N}-2\log N+\log\log N
\]
for an arbitrary fixed \(\bml\in\mR^3\). We need to show that for every real \(x\) and \(y\),
\begin{equation}
\Prb(S_N\le x,\ M_N\le y)-\Prb(S_N\le x)\Prb(M_N\le y)\to0.
\label{eq:target-indep}
\end{equation}

By Theorem~\ref{thm:null-joint}, \(S_N\) converges marginally to \(\mathcal{N}(0,1)\). By Theorem~\ref{thm:linf}, \(M_N\) converges marginally to the Type-I Gumbel law
\[
G(y)=\exp(-\pi^{-1/2}e^{-y/2}).
\]
Hence \eqref{eq:target-indep} is equivalent to
\begin{equation}
\Prb(S_N\le x,\ M_N\le y)\to \Phi(x)G(y).
\label{eq:target-product}
\end{equation}

Let \(\y=(y_1,\dots,y_N)^{\T}\sim \mathcal{N}(\bzero,\R)\), and define the Gaussian analogues
\[
S_N^G=\bml^{\T}\B_N^{-1/2}\Q_N^G,
\qquad
M_N^G=\max_{1\le i\le N}y_i^2-2\log N+\log\log N,
\]
where \(\Q_N^G\) is obtained from \(\Q_N\) by replacing \(t_i\) with \(y_i\).

Fix \(y\in\mR\) and define the threshold
\[
u_N(y)=2\log N-\log\log N+y.
\]
Let
\[
\mathcal E_N(y)=\{i: y_i^2>\nu_N(y)\}
\]
be the exceedance set. Under the Gumbel scaling, the number of exceedances \(|\mathcal E_N(y)|\) is tight. Indeed,
\[
\E|\mathcal E_N(y)|
=
\sum_{i=1}^N\Prb(y_i^2>\nu_N(y))
\to \pi^{-1/2}e^{-y/2},
\]
and the weak-dependence conditions on \(\R\) ensure that the exceedance process is asymptotically Poisson, exactly as in the max-sum framework of \cite{FengJiangLiLiu2024}.

Define the trimmed version of the sum statistic by removing the exceedance set:
\[
\widetilde S_N^G(y)
=
\frac1{\sqrt N}\sum_{i=1}^N h_\lambda(y_i)\ind\{y_i^2\le \nu_N(y)\},
\]
where \(h_\lambda\) is the centered polynomial corresponding to \(\bml^{\T}\B_N^{-1/2}\Q_N^G\). Then
\[
|S_N^G-\widetilde S_N^G(y)|
\le
\frac1{\sqrt N}\sum_{i\in\mathcal E_N(y)}|h_\lambda(y_i)|.
\]
On the event \(\max_i y_i^2\le C\log N\), which has probability tending to one, \(|h_\lambda(y_i)|\le C(\log N)^3\). Since \(|\mathcal E_N(y)|=O_p(1)\),
\[
|S_N^G-\widetilde S_N^G(y)|
=O_p\!\left\{\frac{(\log N)^3}{\sqrt N}\right\}
=o_p(1).
\]
Thus the bulk part and the full sum have the same limit.

For the Gaussian proxy, the event \(\{M_N^G\le y\}\) is exactly the event \(\{|\mathcal E_N(y)|=0\}\). The statistic \(\widetilde S_N^G(y)\) depends only on the bulk coordinates, whereas \(M_N^G\) depends on the extreme point process. Following the proof strategy of Feng, Jiang, Li, and Liu (2024), one shows that
\begin{equation}
\Prb\bigl(\widetilde S_N^G(y)\le x,\ M_N^G\le y\bigr)
-\Prb\bigl(\widetilde S_N^G(y)\le x\bigr)\Prb(M_N^G\le y)
\to0.
\label{eq:gauss-indep-trim}
\end{equation}
The reason is simple: the sum statistic is a \(\sqrt N\)-normalized bulk average and therefore no single coordinate has non-negligible influence, while the maximum is determined by only \(O_p(1)\) coordinates. The overlap between these two mechanisms is negligible.

Because \(S_N^G-\widetilde S_N^G(y)=o_p(1)\), relation \eqref{eq:gauss-indep-trim} implies
\begin{equation}
\Prb(S_N^G\le x,\ M_N^G\le y)\to \Phi(x)G(y).
\label{eq:gaussian-product}
\end{equation}

We now replace the Gaussian proxy by the actual independent-component array. This is where we follow the proof idea of equation (7.15) in \cite{LiuFengZhaoWang2024}.

For a vector \(\z=(z_1,\dots,z_N)^{\T}\), define the smooth maximum
\[
F_\beta(\z)=\beta^{-1}\log\Bigl(\sum_{j=1}^N e^{\beta z_j}\Bigr),
\qquad \beta>0.
\]
Then
\[
0\le F_\beta(\z)-\max_{1\le j\le N}z_j\le \beta^{-1}\log N.
\]
We apply this with \(z_j=t_j^2\) or \(z_j=y_j^2\). Let \(\varphi\) and \(\psi\) be smooth approximations of the indicator functions \(\ind(u\le x)\) and \(\ind\{v\le \nu_N(y)\}\), respectively, with uniformly bounded derivatives up to order six. Define
\[
\mathcal H_\beta(\u)=\varphi\bigl(S_N(\u)\bigr)\,\psi\bigl(F_\beta(u_1^2,\dots,u_N^2)\bigr),
\]
where \(S_N(\u)\) denotes the sum statistic computed from a generic input vector \(\u\).

Let \(\s=(s_1,\dots,s_N)^{\T}\) be the independent projected components from Lemma~\ref{lem:oraclerep}, and let \(\g=(g_1,\dots,g_N)^{\T}\) be i.i.d.\ standard normals. We compare
\[
\E\{\mathcal H_\beta(\A\s)\}
\quad\text{and}\quad
\E\{\mathcal H_\beta(\A\g)\}
\]
by a Lindeberg replacement, replacing one coordinate at a time. Because \(\mathcal H_\beta\) is a composition of a degree-six polynomial and a smooth maximum, the sixth derivative in each coordinate exists and is bounded by
\[
C\beta^6(\log N)^C N^{-1/2}
\]
on the high-probability event \(\max_i |(\A\u)_i|\le C(\log N)^{1/2}\). Taylor expansion up to order six therefore gives
\[
\bigl|\E\{\mathcal H_\beta(\A\s)\}-\E\{\mathcal H_\beta(\A\g)\}\bigr|
\le
C\sum_{r=3}^6\Bigl|\E(s_1^r)-\E(g_1^r)\Bigr|D_{r,N,\beta}+R_{N,\beta},
\]
where the derivative factors \(D_{r,N,\beta}\) grow at most polylogarithmically in \(N\). By Lemma~\ref{lem:skcumulant},
\[
|\E(s_1^3)-\E(g_1^3)|=O(T^{-1/2}),
\qquad
|\E(s_1^4)-\E(g_1^4)|=O(T^{-1}),
\]
and similarly for orders five and six. However, because both the even-power block and the max statistic depend only on \emph{even} functions of the coordinates, the odd-order contributions can only enter in pairs; this is exactly the mechanism in Lemma~\ref{lem:no-third-moment}. Therefore the total replacement error is actually
\[
\bigl|\E\{\mathcal H_\beta(\A\s)\}-\E\{\mathcal H_\beta(\A\g)\}\bigr|
\le C\Bigl(T^{-1}+\beta^{-1}\log N\Bigr).
\]
Choosing
\[
\beta=N^{1/8}\log N
\]
forces \(\beta^{-1}\log N=o(1)\), and hence
\begin{equation}
\bigl|\E\{\mathcal H_\beta(\A\s)\}-\E\{\mathcal H_\beta(\A\g)\}\bigr|\to0.
\label{eq:smoothreplace}
\end{equation}
Relation \eqref{eq:smoothreplace} is the direct analogue of the smooth-max comparison used around equation (7.15) in \cite{LiuFengZhaoWang2024}.

For the Gaussian proxy, \eqref{eq:gaussian-product} gives
\[
\Prb(S_N^G\le x,\ M_N^G\le y)\to \Phi(x)G(y).
\]
We already shows that the same limit holds for the oracle statistic based on \(\x=\A\s\). Finally, Proposition~\ref{prop:oracle-rate} implies that replacing the oracle even-power block by the feasible block changes the statistic only by \(o_p(1)\). The max statistic is handled in the same way, because the studentization error is also of smaller order than the Gumbel scale. Therefore
\[
\Prb(S_N\le x,\ M_N\le y)\to \Phi(x)G(y),
\]
which is exactly \eqref{eq:target-product}. The proof of Theorem~\ref{thm:indep} is complete.

\bibliographystyle{apa}
\bibliography{ref}

@article{FengJiangLiLiu2024,
  author  = {Feng, L. and Jiang, T. and Li, X. and Liu, B.},
  title   = {Asymptotic independence of the sum and maximum of dependent random variables with applications to high-dimensional tests},
  journal = {Statistica Sinica},
  year    = {2024},
  volume  = {34},
  pages   = {1745--1763}
}

@article{FengLanLiuMa2022,
  author  = {Feng, L. and Lan, W. and Liu, B. and Ma, Y.},
  title   = {High-dimensional test for alpha in linear factor pricing models with sparse alternatives},
  journal = {Journal of Econometrics},
  year    = {2022},
  volume  = {229},
  pages   = {152--175}
}

@article{LiuFengZhaoWang2024,
  author  = {Liu, J. and Feng, L. and Zhao, P. and Wang, Z.},
  title   = {Spatial-sign based maxsum test for high dimensional location parameters},
  journal = {arXiv preprint arXiv:2402.01381v2},
  year    = {2024}
}

@article{PesaranYamagata2024,
  author  = {Pesaran, M. H. and Yamagata, T.},
  title   = {Testing for alpha in linear factor pricing models with a large number of securities},
  journal = {Journal of Financial Econometrics},
  year    = {2024},
  volume  = {22},
  pages   = {407--460}
}

@article{BeaulieuDufourKhalaf2007,
  author  = {Beaulieu, M.-C. and Dufour, J.-M. and Khalaf, L.},
  title   = {Multivariate Tests of Mean--Variance Efficiency with Possibly Non-Gaussian Errors},
  journal = {Journal of Business \& Economic Statistics},
  year    = {2007},
  volume  = {25},
  pages   = {398--410}
}

@article{BickelLevina2008,
  author  = {Bickel, P. J. and Levina, E.},
  title   = {Regularized Estimation of Large Covariance Matrices},
  journal = {The Annals of Statistics},
  year    = {2008},
  volume  = {36},
  pages   = {199--227}
}

@article{GibbonsRossShanken1989,
  author  = {Gibbons, M. R. and Ross, S. A. and Shanken, J.},
  title   = {A Test of the Efficiency of a Given Portfolio},
  journal = {Econometrica},
  year    = {1989},
  volume  = {57},
  pages   = {1121--1152}
}

@article{GungorLuger2009,
  author  = {Gungor, S. and Luger, R.},
  title   = {Exact Distribution-Free Tests of Mean--Variance Efficiency},
  journal = {Journal of Empirical Finance},
  year    = {2009},
  volume  = {16},
  pages   = {816--829}
}

@article{Jensen1968,
  author  = {Jensen, M. C.},
  title   = {The Performance of Mutual Funds in the Period 1945--1964},
  journal = {The Journal of Finance},
  year    = {1968},
  volume  = {23},
  pages   = {389--416}
}

@article{LanFengLuo2018,
  author  = {Lan, W. and Feng, L. and Luo, R.},
  title   = {Testing High-Dimensional Linear Asset Pricing Models},
  journal = {Journal of Financial Econometrics},
  year    = {2018},
  volume  = {16},
  pages   = {191--210}
}

@article{Lintner1965,
  author  = {Lintner, J.},
  title   = {The Valuation of Risk Assets and the Selection of Risky Investments in Stock Portfolios and Capital Budgets},
  journal = {The Review of Economics and Statistics},
  year    = {1965},
  volume  = {47},
  pages   = {13--37}
}

@article{MaLanSuTsai2020,
  author  = {Ma, S. and Lan, W. and Su, L. and Tsai, C.-L.},
  title   = {Testing Alphas in Conditional Time-Varying Factor Models with High-Dimensional Assets},
  journal = {Journal of Business \& Economic Statistics},
  year    = {2020},
  volume  = {38},
  pages   = {214--227}
}

@article{Ross1976,
  author  = {Ross, S. A.},
  title   = {The Arbitrage Theory of Capital Asset Pricing},
  journal = {Journal of Economic Theory},
  year    = {1976},
  volume  = {13},
  pages   = {341--360}
}

@article{Sharpe1964,
  author  = {Sharpe, W. F.},
  title   = {Capital Asset Prices: A Theory of Market Equilibrium under Conditions of Risk},
  journal = {The Journal of Finance},
  year    = {1964},
  volume  = {19},
  pages   = {425--442}
}

@article{XuLinWeiPan2016,
  author  = {Xu, Gongjun and Lin, Lifeng and Wei, Peng and Pan, Wei},
  title   = {An Adaptive Two-Sample Test for High-Dimensional Means},
  journal = {Biometrika},
  year    = {2016},
  volume  = {103},
  number  = {3},
  pages   = {609--624},
  doi     = {10.1093/biomet/asw029}
}

@article{HeXuWuPan2021,
  author  = {He, Yinqiu and Xu, Gongjun and Wu, Chong and Pan, Wei},
  title   = {Asymptotically Independent U-Statistics in High-Dimensional Testing},
  journal = {The Annals of Statistics},
  year    = {2021},
  volume  = {49},
  number  = {1},
  pages   = {154--181},
  doi     = {10.1214/20-AOS1951}
}

@article{ZhangWangShao2022,
  author  = {Zhang, Yangfan and Wang, Runmin and Shao, Xiaofeng},
  title   = {Adaptive Inference for Change Points in High-Dimensional Data},
  journal = {Journal of the American Statistical Association},
  year    = {2022},
  volume  = {117},
  number  = {540},
  pages   = {1751--1762},
  doi     = {10.1080/01621459.2021.1884562}
}

@article{ZhangWangShao2025,
  author  = {Zhang, Yangfan and Wang, Runmin and Shao, Xiaofeng},
  title   = {Adaptive Testing for High-Dimensional Data},
  journal = {Journal of the American Statistical Association},
  year    = {2025},
  volume  = {120},
  number  = {551},
  pages   = {1893--1905},
  doi     = {10.1080/01621459.2024.2439617}
}

@article{liu2020cauchy,
  author  = {Liu, Yang and Xie, Jun},
  title   = {Cauchy Combination Test: A Powerful Test with Analytic $p$-Value Calculation under Arbitrary Dependency Structures},
  journal = {Journal of the American Statistical Association},
  year    = {2020},
  volume  = {115},
  number  = {529},
  pages   = {393--402},
  doi     = {10.1080/01621459.2018.1554485}
}

@article{arias2011global,
  author  = {Arias-Castro, Ery and Cand{\`e}s, Emmanuel J. and Plan, Yaniv},
  title   = {Global Testing under Sparse Alternatives: {ANOVA}, Multiple Comparisons and the Higher Criticism},
  journal = {The Annals of Statistics},
  year    = {2011},
  volume  = {39},
  number  = {5},
  pages   = {2533--2556},
  doi     = {10.1214/11-AOS910}
}

@article{MaFengWangBao2024Dep,
  author  = {Ma, Huifang and Feng, Long and Wang, Zhaojun and Bao, Jigang},
  title   = {Testing Alpha in High Dimensional Linear Factor Pricing Models with Dependent Observations},
  journal = {arXiv preprint arXiv:2401.14052},
  year    = {2024}
}

@article{FersonHarvey1991,
  author  = {Ferson, Wayne E. and Harvey, Campbell R.},
  title   = {The Variation of Economic Risk Premiums},
  journal = {Journal of Political Economy},
  year    = {1991},
  volume  = {99},
  number  = {2},
  pages   = {385--415},
  doi     = {10.1086/261755}
}

@article{JagannathanWang1996,
  author  = {Jagannathan, Ravi and Wang, Zhenyu},
  title   = {The Conditional {CAPM} and the Cross-Section of Expected Returns},
  journal = {The Journal of Finance},
  year    = {1996},
  volume  = {51},
  number  = {1},
  pages   = {3--53},
  doi     = {10.1111/j.1540-6261.1996.tb05201.x}
}

@article{LettauLudvigson2001,
  author  = {Lettau, Martin and Ludvigson, Sydney},
  title   = {Resurrecting the ({C}){CAPM}: A Cross-Sectional Test When Risk Premia Are Time-Varying},
  journal = {Journal of Political Economy},
  year    = {2001},
  volume  = {109},
  number  = {6},
  pages   = {1238--1287},
  doi     = {10.1086/323282}
}

@article{MaFengWangBao2024,
  author  = {Ma, Huifang and Feng, Long and Wang, Zhaojun and Bao, Jigang},
  title   = {Adaptive Testing for Alphas in Conditional Factor Models with High Dimensional Assets},
  journal = {Journal of Business \& Economic Statistics},
  year    = {2024},
  volume  = {42},
  number  = {4},
  pages   = {1356--1366}
}

@article{liu2023high,
  title={High-dimensional alpha test of linear factor pricing models with heavy-tailed distributions},
  author={Liu, Binghui and Feng, Long and Ma, Yanyuan},
  journal={Statistica Sinica},
  volume={33},
  pages={1389--1410},
  year={2023},
  doi={10.5705/SS.202021.0134}
}

@article{yu2024power,
  title={Power enhancement for testing multi-factor asset pricing models via fisher’s method},
  author={Yu, Xiufan and Yao, Jiawei and Xue, Lingzhou},
  journal={Journal of Econometrics},
  volume={239},
  number={2},
  pages={105458},
  year={2024},
  publisher={Elsevier}
}

@article{xia2024adaptive,
  title={Adaptive testing for alphas in high-dimensional factor pricing models},
  author={Xia, Qiang and Zhang, Xianyang},
  journal={Journal of Business \& Economic Statistics},
  volume={42},
  number={2},
  pages={640--653},
  year={2024},
  publisher={Taylor \& Francis}
}

@techreport{chernov2025test,
  title={A test of the efficiency of a given portfolio in high dimensions},
  author={Chernov, Mikhail and Kelly, Bryan T and Malamud, Semyon and Schwab, Johannes},
  year={2025},
  institution={National Bureau of Economic Research}
}

@article{zhao2022high,
  title={High-dimensional non-parametric tests for linear asset pricing models},
  author={Zhao, Ping and Chen, Dachuan and Zi, Xuemin},
  journal={Stat},
  volume={11},
  number={1},
  pages={e490},
  year={2022},
  publisher={Wiley Online Library}
}

@article{massacci2025general,
  title={A general randomized test for Alpha},
  author={Massacci, Daniele and Sarno, Lucio and Trapani, Lorenzo and Vallarino, Pierluigi},
  journal={arXiv preprint arXiv:2507.17599},
  year={2025}
}

@article{ZhengJiangBaiHe2014,
  author  = {Zheng, Shurong and Jiang, Dandan and Bai, Zhidong and He, Xuming},
  title   = {Inference on Multiple Correlation Coefficients with Moderately High Dimensional Data},
  journal = {Biometrika},
  year    = {2014},
  volume  = {101},
  number  = {3},
  pages   = {748--754},
  doi     = {10.1093/biomet/asu023}
}

@article{CuiLiYangZhou2020,
  author  = {Cui, Xia and Li, Runze and Yang, Guangren and Zhou, Wang},
  title   = {Empirical Likelihood Test for a Large-Dimensional Mean Vector},
  journal = {Biometrika},
  year    = {2020},
  volume  = {107},
  number  = {3},
  pages   = {591--607},
  doi     = {10.1093/biomet/asaa005}
}

@article{LiLamYaoYao2019,
  author  = {Li, Zeng and Lam, Clifford and Yao, Jianfeng and Yao, Qiwei},
  title   = {On testing for high-dimensional white noise},
  journal = {The Annals of Statistics},
  year    = {2019},
  volume  = {47},
  number  = {6},
  pages   = {3382--3412},
  doi     = {10.1214/18-AOS1782}
}

\end{document}